\definecolor{darkblue}{rgb}{0, 0, 0.5}
\newtheorem{theorem}{Theorem}
\newtheorem{lemma}[theorem]{Lemma}
\newtheorem{corollary}[theorem]{Corollary}
\newtheorem{proposition}[theorem]{Proposition}
\renewenvironment{proof}[1][\proofname]{\par
  \vspace{-\parskip}%
  \vspace{-\topsep}%
  \pushQED{\qed}%
  \normalfont
  \trivlist
  \item[\hskip\labelsep
    \itshape
    #1\@addpunct{.}]\ignorespaces
}{%
  \popQED\endtrivlist\@endpefalse
}
\newcommand{\yestag}{\addtocounter{equation}{1}\tag{\theequation}}
\tikzset{fontscale/.style = {font=\relsize{#1}}}
\newcommand{\asym}{\sigma} %
\newcommand{\istr}{w} %
\newcommand{\isym}{w} %
\newcommand{\sym}[1]{\mathit{#1}} %
\newcommand{\BOS}{\texttt{BOS}}
\newcommand{\EOS}{\texttt{EOS}}
\newcommand{\lsym}{{\sym{\ell}}}
\newcommand{\rsym}{{\sym{r}}}
\renewcommand{\complement}{\mathsf{c}}
\DeclareMathOperator{\relu}{\text{ReLU}}
\newcommand{\ct}{c} %
\newcommand{\tfm}{\mathcal{T}}
\newcommand{\muhat}{masked hard-attention transformer}
\newcommand{\muhats}{masked hard-attention transformers}
\newcommand{\Muhats}{Masked hard-attention transformers}
\newcommand{\muhatstitle}{Masked Hard-Attention Transformers}
\newcommand{\prob}[1]{\textnormal{#1}}
\newcommand{\probclass}[1]{\textbf{\upshape #1}}
\newcommand{\ACzero}{\probclass{AC}^0}
\newcommand{\FO}{\@ifnextchar[\FOplus{\ensuremath{\probclass{FO}}[\mathord<]}}
\def\FOplus[#1]{\ensuremath{\probclass{FO}}[\mathord<,#1]}
\newcommand{\MOD}{\text{MOD}}
\newcommand{\Mon}{\text{Mon}}
\newcommand{\PTIME}{\ensuremath{\probclass{P}}}
\newcommand{\LTL}{\probclass{LTL}}
\newcommand{\prog}{\mathcal{P}}
\DeclareMathOperator{\leftmost}{{\mathbf{\blacktriangleleft}}}
\DeclareMathOperator{\rightmost}{{\mathbf{\blacktriangleright}}}
\newcommand{\FORASP}{\probclass{B-RASP}}
\newcommand{\attdefault}[6]{\ensuremath{{#1}_{#2} \left[#3, #4\right] \; #5 : #6}}
\newcommand{\attldefault}[5]{\attdefault{\leftmost}{#1}{#2}{#3}{#4}{#5}}
\newcommand{\attrdefault}[5]{\attdefault{\rightmost}{#1}{#2}{#3}{#4}{#5}}
\newcommand{\ltlop}[1]{\mathrel{\mathbf{#1}}}
\newcommand{\since}{\ltlop{since}}
\newcommand{\until}{\ltlop{until}}
\newcommand{\R}{\mathbb{R}}
\newcommand{\N}{\mathbb{N}}
\newcommand{\funcname}[1]{\mathit{#1}} %
\newcommand{\vecname}[1]{\mathrm{#1}}
\newcommand{\hd}{h} %
\newcommand{\progstep}{t} 
\newcommand{\proglen}{T}
\newcommand{\depth}{k} %
\newcommand{\coord}{c}
\title{\muhatstitle{} Recognize Exactly the Star-Free Languages}
\author{Andy Yang \\ \makebox[\widthof{Yale University}]{University of Notre Dame} \And David Chiang \\ University of Notre Dame \And Dana Angluin \\ Yale University}
\begin{document}

\maketitle

\begin{abstract}
The expressive power of transformers over inputs of unbounded size can be studied through their ability to recognize classes of formal languages. 
In this paper, we establish exact characterizations of transformers with hard attention (in which all attention is focused on exactly one position) and attention masking (in which each position only attends to positions on one side).
With strict masking (each position cannot attend to itself) and without position embeddings, these transformers are expressively equivalent to linear temporal logic (LTL), which defines exactly the star-free languages. 
A key technique is the use of Boolean RASP as a convenient intermediate language between transformers and LTL. 
We then take numerous results known for LTL and apply them to transformers, showing how position embeddings, strict masking, and depth all increase expressive power.
\end{abstract}

\section{Introduction}

Significant progress has been made in the last few years on characterizing the expressivity of transformers \citep{vaswani-etal-2017-attention} in terms of well-understood classes of formal languages \citep{strobl-etal-2023-survey}. Results have been obtained for a wide range of variants of transformers, and nearly all take the form of either upper bounds (transformers recognize only languages in class $C$) or lower bounds (transformers recognize all languages in class $C$). In this paper, we establish \emph{exact} characterizations of transformers with hard attention (in which all attention is focused on exactly one position) and attention masking (in which each position $i$ only attends to positions on one side of $i$). %

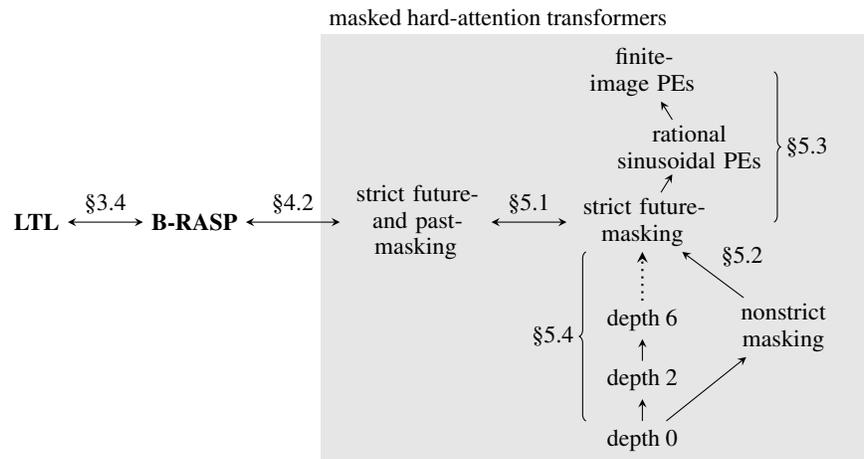
\begin{figure}[h] \centering\small
\begin{tikzpicture}[baseline=0pt,x=2.5cm,y=0.8cm]
\tikzset{multi/.style={align=center,text width=2cm,inner xsep=0pt,inner ysep=2pt}}
\tikzset{>=stealth}
\tikzset{}

\node[multi](future) at (1,0) {strict future-masking};
\node[multi,left=of future](futurepast) {strict future- and past-masking};
\node[left=1.25cm of futurepast] (forasp) {$\FORASP$};
\node[left=of forasp] (ltl) {$\LTL$};
\node[multi](nonstrict) at (1.75,-1.75) {nonstrict masking};

\begin{scope}[yshift=-0.5cm]
\node(depth2) at (1,-1) {depth $2$};
\node(depth1) at (1,-2) {depth $1$};
\node(depth0) at (1,-3) {depth $0$};
\end{scope}
\node[multi](fomod) at (1.25,1.25) {rational sinusoidal PEs};
\node[multi](fomon) at (1,2.5) {finite-image PEs};

\tikzset{->}
\draw (depth0)--(nonstrict);
\draw (future)--(fomod);
\draw (fomod)--(fomon);
\draw (nonstrict) to node[auto=right] {\S\ref{sec:nonstrict}} (future);
\draw (depth0)--(depth1);
\draw (depth1)--(depth2);
\draw[dotted,thick] (depth2)--(future);

\tikzset{<->}
\draw (forasp) to node[above] {\S\ref{sec:forasp_equals_starfree}} (ltl);
\draw (forasp) to node[above] {\S\ref{sec:b-rasp-to-masked-hard-attention-transformers}} (futurepast);
\draw (future) to node[above] {\S\ref{sec:asymmetric_attention}} (futurepast);

\begin{scope}[on background layer]
\node[fit=(futurepast)(fomon)(nonstrict)(depth0),rectangle,fill=gray!20,inner xsep=0.25cm] (muhats) {};
\node at (muhats.north west) [anchor=south west] {\muhats};
\end{scope}

\draw [-,decoration={brace,raise=0.75cm},decorate] (depth0.north) to node[left=0.8cm] {\S\ref{sec:depth}} (future.south);
\draw [-,decoration={brace,raise=1.75cm},decorate] (fomon.center) to node[right=1.8cm] {\S\ref{sec:position}} (future.center);
\end{tikzpicture}
\caption{Overview of results in this paper. One-way arrows denote strict inclusion; two-way arrows denote equivalence. PE = position embedding.}
\label{fig:results}
\end{figure}

With strict masking (in which each position cannot attend to itself) and without position embeddings, these transformers recognize exactly the class of \emph{star-free} regular languages. 
The left side of \cref{fig:results} summarizes our results relating \muhats{} and linear temporal logic (\LTL), which defines exactly the star-free regular languages.

A key technique in these proofs is the use of \FORASP, which, like RASP \citep{weiss-etal-2021-rasp}, is a small programming language that compiles into transformers. \FORASP{} is restricted to Boolean values and compiles to \muhats. Additionally, a \muhat{} can be decompiled back to a \FORASP{} program. We use \FORASP{} as an intermediate language between transformers and \LTL.

The equivalence of \muhats{} with \LTL{} (and other equivalent characterizations, like counter-free automata and first-order logic) enables us to take numerous results known for \LTL{} and apply them to transformers, as shown on the right side of \cref{fig:results}:
\begin{itemize}
\item Strict future-masked rightmost-hard attention is sufficient; adding past-masked, non-masked, and/or leftmost-hard attention does not increase expressivity (\cref{sec:asymmetric_attention}).
\item Strict masking is important (\cref{sec:nonstrict}); without it, \muhats{} are less expressive, recognizing only the \emph{stutter-invariant} star-free languages.
\item Adding position embeddings increases the class of recognized languages to other well-studied classes (\cref{sec:position}); for example:
\begin{itemize}
\item With rational sinusoidal position embeddings, \muhats{} %
recognize exactly the regular languages in $\ACzero$. 
\item With arbitrary finite-image position embeddings, they are equivalent to $\LTL[\Mon]$ (linear temporal logic with arbitrary monadic predicates).
\end{itemize}
\item Adding more layers always increases expressive power (\cref{sec:depth}). \end{itemize}

\section{Background}

\subsection{Preliminaries}

Let $\Sigma$ be a finite alphabet, and let $\istr = \isym_1 \cdots \isym_n$ be an input string of length $n$, where each $\isym_i \in \Sigma$.
Throughout, we assume that $w$ is not empty.
We write $\Sigma^+$ for the set of all non-empty strings over $\Sigma$.
(We disallow empty strings because several formalisms used here require a designated position where an accept/reject decision appears. Adding a $\BOS$ or $\EOS$ token not in $\Sigma$ for this purpose would make it possible to handle the empty string.) 
We write $[n]$ for the set $\{1, \ldots, n\}$.

The \emph{star-free regular languages} are the closure of $\emptyset$, $\{\epsilon\}$, and $\{\asym\}$ for each $\asym \in \Sigma$, under the operations of union, concatenation, and complementation. 
For example:
\begin{itemize}
\item $\Sigma^*$ is star-free because $\Sigma^* = \emptyset^\complement$.
\item $(\sym{a}\sym{b})^*$ is star-free because $(\sym{a}\sym{b})^* = (\sym{b}\Sigma^* \cup \Sigma^*\sym{a} \cup \Sigma^* \sym{a}\sym{a} \Sigma^* \cup \Sigma^* \sym{b}\sym{b} \Sigma^*)^\complement$.
\item $(\sym{a}\sym{a})^*$ is regular but not star-free.
\end{itemize}
This class of languages has several other characterizations, including counter-free automata (\cref{sec:counterfree}), first-order logic with order \citep{mcnaughton1971counter}, and linear temporal logic \citep{kamp:1968}, which is what we will focus on in this paper.

\subsection{Transformer variants}
\label{sec:transformer_variants}

The original transformer \citep{vaswani-etal-2017-attention}, designed for machine translation, had both an encoder and a decoder. In practice, both encoder-only models like BERT~\citep{devlin2019bert} and decoder-only models like GPT~\citep{gpt3} are common.
Like much previous work on transformer expressivity \citep[e.g.][]{hahn-2020-theoretical}, we study an encoder-only setup, where the input is a string and the output is a binary classification; but our results could easily be adapted to a decoder-only setting where the input is a prefix and the output is the next symbol.

The transformers studied here use \emph{unique hard attention} (or simply \emph{hard attention}), in which an attention head focuses all attention on the position with the highest score, with ties broken to the left or right. Although this is different from the soft attention in actual transformers, theoretical studies unavoidably involve models of the real objects of study, and we are using unique-hard attention as a stepping-stone towards understanding real transformers. 
However, unique-hard attention may be more appropriate than it appears:
\begin{itemize}
\item Real transformers are often observed to focus attention on a very small number of positions \citep{merrill-etal-2021-effects}. On Dyck languages, they have been found to learn effectively unique-hard attention in their second layer \citep[Figure~1]{ebrahimi-etal-2020-self}.
\item There exist soft-attention transformers that compute parity %
\citep{chiang-cholak-2022-overcoming}, but in practice, transformers cannot learn parity \citep{bhattamishra-etal-2020-ability}. Unique-hard attention transformers also cannot compute parity \citep{hahn-2020-theoretical}, so they are in some sense more realistic.

\item Hard attention has occasionally been used in practice in previous research on interpretability \citep{kinley2019discrete} and efficiency \citep{gupta2021memory,xu-etal-2021-learning-hard-retrieval}. 
\end{itemize}

In this paper, we use \emph{future masking}, in which every position may only attend to positions to its left. This kind of masking is common in decoder-only models and has been studied in encoder-only models as well \citep{bhattamishra-etal-2020-ability}. We also consider \emph{past masking} \citep{yao-etal-2021-self}. %

\subsection{Previous work}

\Citet{perez-etal-2021-turing} show that average-hard attention transformer encoder--decoders, where the decoder runs for a polynomial number of steps before accepting or rejecting a string, recognize all of $\PTIME$ (that is, all languages decidable by a deterministic Turing machine in polynomial time). \Citet{merrill+sabharwal-2023-chain} prove another version of this result, and further observe that all such transformers are in $\PTIME$. This result is the only other exact characterization of any transformer variant that we are aware~of.

\Citet{hao-etal-2022-formal} show that (non-masked) hard-attention transformer encoders with arbitrary position embeddings have an upper bound of $\ACzero$ (that is, languages defined by circuit families with polynomial size, unbounded fan-in, and bounded depth), and \citet{barcelo-etal-2023} show that they have a lower bound of $\LTL[\Mon]$, which is linear temporal logic with all possible monadic numerical predicates. They leave open the question of whether these transformers are equivalent to $\LTL[\Mon]$---a question which, with suitable adjustments, we answer here in the affirmative.

\section{Boolean RASP}
\label{sec:forasp}

RASP \citep{weiss-etal-2021-rasp} is a programming language intended to help programmers ``think like transformers.'' 
It has the same basic operations as transformers, but it is easier to compose these operations in RASP than to write transformers by hand. 
Variants of RASP have been used fruitfully to study transformers' length-generalization capabilities \citep{zhou2023algorithms} and expressive power \citep{strobl-etal-2024-transducers,yang2024counting}.
In this section, we define a version of RASP restricted to Boolean values, which we call Boolean RASP or \FORASP{}. As we will see, it can be compiled into \muhats, and \muhats{} can be decompiled back into \FORASP. We use it as an intermediate language between transformers and \LTL, and find it more convenient to work with than either of them.

\subsection{Definition}

The input to a $\FORASP$ program is a string $\istr = \isym_1 \cdots \isym_n \in \Sigma^+$.
There is one type of data, a \emph{Boolean vector}, which is a vector of Boolean values indexed by $i \in [n]$.
The \emph{initial} Boolean vectors are $Q_{\asym}$ for each $\asym \in \Sigma$, where $Q_{\asym}(i) = 1$ iff $\isym_i = \asym$.

A \FORASP{} program is a sequence of operations that compute new Boolean vectors. 
Although they may have descriptive names, and names may be reused, here, to streamline definitions and proofs, we assume that all the Boolean vectors are numbered consecutively. That is, $P_1, \ldots, P_{|\Sigma|}$ are the initial Boolean vectors $Q_{\asym}$ for $\asym \in \Sigma$, and the Boolean vectors computed by the program are numbered starting from $P_{|\Sigma|+1}$ without repetition.
After the first $\progstep$ vectors, vector $P_{\progstep+1}$ is computed using one of the following operations.
\begin{trivlist}
   \item \emph{Position-wise operations.} $P_{\progstep+1}(i)$ can be be computed by $P_{\progstep+1}(i) := R(i)$, where $R(i)$ is a Boolean combination of zero or more of $\{P_1(i), \ldots, P_\progstep(i)\}$.
   \item \emph{Attention operations.}
   $P_{\progstep+1}(i)$ can be computed by either of
      \begin{align*}
   P_{\progstep+1}(i) &:= \attldefault{j}{M(i,j)}{S(i,j)}{V(i,j)}{D(i)} \\
   P_{\progstep+1}(i) &:= \attrdefault{j}{M(i,j)}{S(i,j)}{V(i,j)}{D(i)}
   \end{align*}
   where:
    \begin{itemize}
        \item $M(i,j)$, the \emph{mask predicate}, is one of $M(i,j) = 1$ (no masking), $M(i,j) = (j<i)$ (strict future masking), or $M(i,j) = (j>i)$ (strict past masking).
        \item $S(i,j)$, the \emph{score predicate}, 
        is a Boolean combination of zero or more atomic formulas from $\{P_1(i), \ldots, P_\progstep(i)\} \cup \{P_1(j), \ldots, P_\progstep(j)\}$.
        \item $V(i,j)$, the \emph{value predicate}, has the same form as the score predicate.
        \item $D(i)$, the \emph{default value predicate}, is a Boolean combination of zero or more atomic formulas from $\{P_1(i), \ldots, P_\progstep(i)\}$.
    \end{itemize}
    For each $i \in [n]$, let $j_i$ be the minimum (if the operator is $\leftmost$) or maximum (if $\rightmost$) value of $j \in [n]$ such that $M(i,j) = 1$ and $S(i,j) = 1$. If $j_i$ exists, then $P_{\progstep+1}(i) = V(i, j_i)$. If $j_i$ does not exist, then $P_{\progstep+1}(i) = D(i)$.
\end{trivlist}

If $P$ is a Boolean vector computed by program $\prog$, we write $\istr \models P(i)$ just in case $P(i) = 1$ when $\prog$ is run on input string $\istr$.
To make a \FORASP{} program $\prog$ recognize a language, one Boolean vector $Y$ is designated the output vector, and position $n$ is designated the output position. 
Then, the input string $\istr$ is accepted iff $\istr \models Y(n)$. 
To make a \FORASP{} program compute a length-preserving sequence-to-sequence function from $\Sigma^+$ to $\Gamma^+$, we designate a collection of output Boolean vectors $Y_\gamma$ indexed by the symbols $\gamma \in \Gamma$, and consider the output at position $i$ to be $\gamma$ iff $Y_\gamma(i)$ is true.

\subsection{Example: Dyck-1 of depth 2}
\label{ssect:dyck-1-depth-2}

As an example, we consider the Dyck language with 1 pair of parentheses, limited to depth~2, or $L_{1,2}$ for short. It is recognized by the DFA in \cref{fig:Dyck-1-2}, where $\lsym$ and $\rsym$ are left and right brackets. We show how to define this language in \FORASP, with a construction very similar to that of \citet{yao-etal-2021-self}.

\newcommand{\nohl}[1]{\textcolor{gray!90}{#1}}
\newcommand{\hl}[1]{\underline{#1}}
Consider the input string $\lsym \lsym \rsym \rsym \lsym \lsym \rsym \lsym \rsym \rsym$, which should be accepted.
The basic idea is to identify brackets that are immediately matched ($\nohl{\lsym} \hl{\lsym \rsym} \nohl{\rsym \lsym} \hl{\lsym \rsym} \hl{\lsym \rsym} \nohl{\rsym}$), then look at the remaining brackets ($\hl{\lsym} \nohl{\lsym \rsym} \hl{\rsym} \hl{\lsym} \nohl{\lsym \rsym} \nohl{\lsym \rsym} \hl{\rsym}$) to make sure they are matched.
We describe the \FORASP{} program for this problem below;
the resulting Boolean vectors are shown in Figure~\ref{fig:steps-for-L-1-2}.

\begin{figure}
\setlength{\tabcolsep}{3pt}
\renewcommand{\arraystretch}{1.1}
\begin{subcaptionblock}{88pt}
\centering		\scalebox{0.8}{
			\begin{tikzpicture}[->,>=stealth',shorten >=1pt,auto,node
			distance=1.8cm,semithick,initial text=,initial where=above,double distance=3pt]
   
			\node[state,initial,accepting] (S0) {};
                \node[state] (S1) [below of=S0] {};
                \node[state] (S2) [below of=S1] {};

                \path (S0) edge [bend left] node {$\lsym$} (S1);
			\path (S1) edge [bend left] node {$\lsym$} (S2);
                \path (S2) edge [bend left] node {$\rsym$} (S1);
                \path (S1) edge [bend left] node {$\rsym$} (S0);
                
			\end{tikzpicture} }   
\caption{DFA recognizing $L_{1,2}$.}
\label{fig:Dyck-1-2}
\end{subcaptionblock}%
\hspace{10pt}%
\begin{subcaptionblock}{2in}
\centering \small
\begin{tabular}{c|@{\hspace{0.8em}}cccccccccc} 
 \toprule
 input & $\lsym$ & $\lsym$ & $\rsym$ & $\rsym$ & $\lsym$ & $\lsym$ & $\rsym$ & $\lsym$ & $\rsym$ & $\rsym$ \\
 \midrule
 $Q_\lsym$  &  1  &  1  &  0  &  0  &  1  &  1  &  0  &  1  &  0  &  0 \\
 $Q_\rsym$  &  0  &  0  &  1  &  1  &  0  &  0  &  1  &  0  &  1  &  1 \\
 $P_\lsym$  &  0  &  1  &  1  &  0  &  0  &  1  &  1  &  0  &  1  &  0 \\
 $S_\rsym$  &  0  &  1  &  1  &  0  &  0  &  1  &  0  &  1  &  1  &  0 \\
 $I$        &  0  &  1  &  1  &  0  &  0  &  1  &  1  &  1  &  1  &  0 \\
 $B_\lsym$  &  0  &  1  &  1  &  1  &  0  &  1  &  1  &  1  &  1  &  1 \\
 $A_\rsym$  &  1  &  1  &  1  &  0  &  1  &  1  &  1  &  1  &  1  &  0 \\
 $C$        &  1  &  1  &  1  &  1  &  1  &  1  &  1  &  1  &  1  &  1 \\
 $Y$        &  1  &  1  &  1  &  1  &  1  &  1  &  1  &  1  &  1  &  1 \\
\bottomrule
\end{tabular}
\caption{Boolean vectors for membership of string $\lsym\lsym\rsym\rsym\lsym\lsym\rsym\lsym\rsym\rsym$ in $L_{1,2}$.}
\label{fig:steps-for-L-1-2}
\end{subcaptionblock}%
\hspace{10pt}%
\begin{subcaptionblock}{2in}
\centering\small
\begin{tabular}{c|@{\hspace{0.8em}}cccccccccc} 
 \toprule
 input &  $\lsym$ & $\rsym$ & $\rsym$ & $\lsym$ & $\lsym$ & $\lsym$ & $\rsym$ & $\rsym$ & $\rsym$ & $\lsym$ \\
 \midrule
 $Q_\lsym$  &  1  &  0  &  0  &  1  &  1  &  1  &  0  &  0  &  0  &  1 \\
 $Q_\rsym$  &  0  &  1  &  1  &  0  &  0  &  0  &  1  &  1  &  1  &  0 \\
 $P_\lsym$  &  0  &  1  &  0  &  0  &  1  &  1  &  1  &  0  &  0  &  0 \\
 $S_\rsym$  &  1  &  1  &  0  &  0  &  0  &  1  &  1  &  1  &  0  &  0 \\
 $I$        &  1  &  1  &  0  &  0  &  0  &  1  &  1  &  0  &  0  &  0 \\
 $B_\lsym$  &  0  &  0  &  0  &  0  &  1  &  1  &  1  &  1  &  0  &  0 \\
 $A_\rsym$  &  1  &  1  &  0  &  0  &  1  &  1  &  1  &  1  &  0  &  0 \\
 $C$        &  1  &  1  &  0  &  0  &  1  &  1  &  1  &  1  &  0  &  0 \\
 $Y$        &  0  &  0  &  0  &  0  &  0  &  0  &  0  &  0  &  0  &  0 \\
\bottomrule
\end{tabular}
\caption{Boolean vectors for non-membership of string $\lsym\rsym\rsym\lsym\lsym\lsym\rsym\rsym\rsym\lsym$ in~$L_{1,2}$.}
\label{fig:second-steps-for-L-1-2}
\end{subcaptionblock}
\caption{Examples related to $L_{1,2}$ (Dyck-$1$ of depth $2$). The left bracket is $\lsym$ and the right bracket is~$\rsym$.}
\end{figure}

We first construct Boolean vectors $P_\lsym(i)$ and $S_\rsym(i)$ that indicate whether the predecessor (respectively, successor) symbol of $i$ is~$\lsym$ (respectively, $\rsym$).
This is done with attention operations:
\begin{align*}
P_\lsym(i) &:= \attrdefault{j}{j<i}{1}{Q_\lsym(j)}{0} \\
S_\rsym(i) &:= \attldefault{j}{j>i}{1}{Q_\rsym(j)}{0}.
\end{align*}
Vector $P_\lsym(i)$ makes position $i$ attend to the position immediately to its left, and its value predicate $Q_\lsym(j)$ tests whether that position has an $\lsym$.
Vector $S_\rsym$ is similar.

The Boolean vector $I(i)$ indicates whether position $i$ is in a consecutive pair $\lsym \rsym$, that is,
whether it is \emph{immediately matched}:
\[I(i) := (Q_\lsym(i) \wedge S_\rsym(i)) \vee (Q_\rsym(i) \wedge P_\lsym(i)).\]

The Boolean vectors $B_\lsym(i)$ and $A_\rsym(i)$ test if the symbol before (respectively, after) $i$ that is not immediately matched is $\lsym$ (respectively, $\rsym$). Then $C$ checks each position $i$ to see if it is immediately matched, or it has $\lsym$ and the following not-immediately-matched symbol is $\rsym$, or it has $\rsym$ and the preceding not-immediately-matched symbol is $\lsym$: 
\begin{align*}
B_\lsym(i) &:= \attrdefault{j}{j<i}{\neg I(j)}{Q_\lsym(j)}{0} \\
A_\rsym(i) &:= \attldefault{j}{j>i}{\neg I(j)}{Q_\rsym(j)}{0} \\
C(i) &:= I(i) \lor (Q_\lsym(i) \land A_\rsym(i)) \lor (Q_\rsym(i) \land B_\lsym(i)).
\end{align*}

Finally, the output Boolean vector $Y$ tests if $C(i)$ is true everywhere:
\begin{align*}
Y(i) &:= \attrdefault{j}{1}{\neg C(j)}{0}{1}.
\end{align*}

Boolean vectors for deciding non-membership of $\lsym\rsym\rsym\lsym\lsym\lsym\rsym\rsym\rsym\lsym$ in $L_{1,2}$ are shown in \cref{fig:second-steps-for-L-1-2}.
It is straightforward to generalize this technique to recognize Dyck-$k$ of depth~$D$ in \FORASP{}.%
\footnote{Because we prove in \cref{thm:basic-transformer-simulates-forasp} that $\FORASP$ programs can be simulated by \muhats, this result contradicts the claim in Theorem 4.3 (= Theorem C.1) of the paper by \citet{yao-etal-2021-self}; according to a cognizant co-author of that paper, Lemma C.2 in that paper is not true~\citep{peng2023personal}}
For another example program for an associative recall task, please see \cref{sec:retrieval_example}. 
A \FORASP{} simulator that allows one to write and run additional examples can be found at \url{https://b-rasp.github.io/}.

\begin{toappendix}
\section{Additional $\FORASP$ Example: Associative Recall}
\label{sec:retrieval_example}

We consider a variation of the simple associative recall task studied by \citet{friedman2023learning} and many others in connection with induction heads and in-context learning \citep{olsson-2022}.
Inputs are strings over the alphabet $\{\sym{a},\sym{b},\sym{c},\sym{1},\sym{2},\sym{3}\}$, where letters and numbers alternate, starting with a letter and ending with a number, for example $\istr = \sym{a3b2b1a2c1a1c3}$.
The output alphabet adds the symbol $\sym{?}$.
The desired output sequence copies the letters, and for a number at position $i$, if $\asym$ is the letter at position $i-1$, then the most recent previous occurrence of $\asym$ is found, say at position $j$, and the number at position $j+1$ is output. If there is no previous occurrence of $\asym$, then $\sym{?}$ is output instead.
For the given example input $\istr$, the output should be $y = \sym{a?b?b2a3c?a2c1}$.

The Boolean vector $P_{\sym{a}}$ determines whether there is an $\sym{a}$ in the preceding position. Similarly, the Boolean vectors $P_{\sym{b}}$ and $P_{\sym{c}}$ indicate whether there is a $\sym{b}$ or $\sym{c}$, respectively, in the preceding position:
\begin{align*}
P_{\sym{a}}(i) &:= \attrdefault{j}{j<i}{1}{Q_{\sym{a}}(j)}{0} \\
P_{\sym{b}}(i) &:= \attrdefault{j}{j<i}{1}{Q_{\sym{b}}(j)}{0} \\
P_{\sym{c}}(i) &:= \attrdefault{j}{j<i}{1}{Q_{\sym{c}}(j)}{0}. 
\end{align*}

The program has an output Boolean vector $Y_\asym$ for each symbol $\asym \in \Sigma$ indicating whether the output symbol is $\asym$.
For $\asym \in \{\sym{1},\sym{2},\sym{3}\}$, if position $i$ is preceded by a letter, the output Boolean vector $Y_\asym$ attends to the most recent position $j$ preceded by the same letter (if any), and returns the value of $Q_\asym(j)$ for that $j$:
\begin{align*}
Y_{\sym{1}}(i) &:= \attrdefault{j}{j<i}{(P_{\sym{a}}(i) \wedge P_{\sym{a}}(j)) \vee (P_{\sym{b}}(i) \wedge P_{\sym{b}}(j)) \vee (P_{\sym{c}}(i) \wedge P_{\sym{c}}(j))}{Q_{\sym{1}}(j)}{0} \\
Y_{\sym{2}}(i) &:= \attrdefault{j}{j<i}{(P_{\sym{a}}(i) \wedge P_{\sym{a}}(j)) \vee (P_{\sym{b}}(i) \wedge P_{\sym{b}}(j)) \vee (P_{\sym{c}}(i) \wedge P_{\sym{c}}(j))}{Q_{\sym{2}}(j)}{0} \\
Y_{\sym{3}}(i) &:= \attrdefault{j}{j<i}{(P_{\sym{a}}(i) \wedge P_{\sym{a}}(j)) \vee (P_{\sym{b}}(i) \wedge P_{\sym{b}}(j)) \vee (P_{\sym{c}}(i) \wedge P_{\sym{c}}(j))}{Q_{\sym{3}}(j)}{0}.
\end{align*}
For $\asym \in \{\sym{a},\sym{b},\sym{c}\}$, the output Boolean vector $Y_\asym$ is just $Q_\asym$:
\begin{align*}
Y_{\sym{a}}(i) &:= Q_{\sym{a}}(i) \\
Y_{\sym{b}}(i) &:= Q_{\sym{b}}(i) \\
Y_{\sym{c}}(i) &:= Q_{\sym{c}}(i).
\end{align*}
Finally, the output Boolean vector $Y_{\sym{?}}$ is true if the position is preceded by a letter but no number was assigned:
\[Y_{\sym{?}}(i) := (P_{\sym{a}}(i) \vee P_{\sym{b}}(i) \vee P_{\sym{c}}(i)) \wedge \neg(Y_{\sym{1}}(i) \vee Y_{\sym{2}}(i) \vee Y_{\sym{3}}(i)).\]
The Boolean vectors in the computation for the example string $w$ are shown below.
\begin{center}
\renewcommand{\arraystretch}{1.2}
\begin{tabular}{c|cccccccccccccc} 
 \toprule
 input      & $\sym{a}$ & $\sym{3}$ & $\sym{b}$ & $\sym{2}$ & $\sym{b}$ & $\sym{1}$ & $\sym{a}$ & $\sym{2}$ & $\sym{c}$ & $\sym{1}$ & $\sym{a}$ & $\sym{1}$ & $\sym{c}$ & $\sym{3}$ \\
 \midrule
 $Q_{\sym{a}}$      &  1  &  0  &  0  &  0  &  0  &  0  &  1  &  0  &  0  &  0  &  1  &  0  &  0  &  0  \\
 $Q_{\sym{b}}$      &  0  &  0  &  1  &  0  &  1  &  0  &  0  &  0  &  0  &  0  &  0  &  0  &  0  &  0  \\
 $Q_{\sym{c}}$      &  0  &  0  &  0  &  0  &  0  &  0  &  0  &  0  &  1  &  0  &  0  &  0  &  1  &  0  \\
 $P_{\sym{a}}$      &  0  &  1  &  0  &  0  &  0  &  0  &  0  &  1  &  0  &  0  &  0  &  1  &  0  &  0  \\
 $P_{\sym{b}}$      &  0  &  0  &  0  &  1  &  0  &  1  &  0  &  0  &  0  &  0  &  0  &  0  &  0  &  0  \\
 $P_{\sym{c}}$      &  0  &  0  &  0  &  0  &  0  &  0  &  0  &  0  &  0  &  1  &  0  &  0  &  0  &  1  \\
 $Y_{\sym{1}}$      &  0  &  0  &  0  &  0  &  0  &  0  &  0  &  0  &  0  &  0  &  0  &  0  &  0  &  1  \\
 $Y_{\sym{2}}$      &  0  &  0  &  0  &  0  &  0  &  1  &  0  &  0  &  0  &  0  &  0  &  1  &  0  &  0  \\
 $Y_{\sym{3}}$      &  0  &  0  &  0  &  0  &  0  &  0  &  0  &  1  &  0  &  0  &  0  &  0  &  0  &  0  \\
 $Y_{\sym{a}}$      &  1  &  0  &  0  &  0  &  0  &  0  &  1  &  0  &  0  &  0  &  1  &  0  &  0  &  0  \\
 $Y_{\sym{b}}$      &  0  &  0  &  1  &  0  &  1  &  0  &  0  &  0  &  0  &  0  &  0  &  0  &  0  &  0  \\
 $Y_{\sym{c}}$      &  0  &  0  &  0  &  0  &  0  &  0  &  0  &  0  &  1  &  0  &  0  &  0  &  1  &  0  \\
 $Y_{\sym{?}}$      &  0  &  1  &  0  &  1  &  0  &  0  &  0  &  0  &  0  &  1  &  0  &  0  &  0  &  0  \\
 \midrule
 output     & $\sym{a}$ & $\sym{?}$ & $\sym{b}$ & $\sym{?}$ & $\sym{b}$ & $\sym{2}$ & $\sym{a}$ & $\sym{3}$ & $\sym{c}$ & $\sym{?}$ & $\sym{a}$ & $\sym{2}$ & $\sym{c}$ & $\sym{1}$ \\
 \bottomrule
\end{tabular}
\end{center}
\end{toappendix}

\subsection{Normal forms}
\label{sec:normal_forms}

In \FORASP, the value predicate $V(i,j)$ depends on both $i$ (the query position) and $j$ (the key/value position), but in actual transformers, it depends on $j$ only. The dependence on $i$ is sometimes convenient, but it does not change expressivity (see \cref{sec:unary_value_proof}).

\begin{toappendix}
\section{Proofs for \cref{sec:forasp} (Boolean RASP)}
\subsection{Unary value predicate}
\label{sec:unary_value_proof}

\begin{proposition} \label{thm:unary_value}
Every \FORASP{} program is equivalent to one in which all value predicates $V(i,j)$ depend only on $j$.
\end{proposition}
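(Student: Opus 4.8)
The plan is to remove the dependence of each value predicate on the query position $i$ one attention operation at a time; since a \FORASP{} program has finitely many operations, it suffices to show how to rewrite a single attention operation
\[
P_{\progstep+1}(i) := \attldefault{j}{M(i,j)}{S(i,j)}{V(i,j)}{D(i)}
\]
(the $\rightmost$ case is identical) into an equivalent block of operations whose value predicates mention only $j$. The key observation is that $V(i,j)$ is a Boolean combination of atoms drawn from $\{P_1(i),\dots,P_\progstep(i)\}$ and $\{P_1(j),\dots,P_\progstep(j)\}$, and that at a fixed query position $i$ the first group of atoms has fixed truth values. So I would split into cases according to those values.

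Concretely, let $P_{a_1}(i),\dots,P_{a_m}(i)$ be the $i$-indexed atoms occurring in $V$. For each Boolean assignment $\vec b \in \{0,1\}^m$, let $V_{\vec b}(j)$ be the predicate obtained from $V$ by substituting the constant $b_k$ for each occurrence of $P_{a_k}(i)$; this $V_{\vec b}(j)$ is a Boolean combination of $j$-atoms only, hence a legal value predicate. I would then introduce, for each $\vec b$, a new attention operation with the same mask, score, and default but the substituted value,
\[
P^{(\vec b)}(i) := \attldefault{j}{M(i,j)}{S(i,j)}{V_{\vec b}(j)}{D(i)},
\]
and finally combine them with a single position-wise operation
\[
P_{\progstep+1}(i) := \bigvee_{\vec b \in \{0,1\}^m} \Bigl( \bigwedge_{k=1}^m \bigl(P_{a_k}(i) \leftrightarrow b_k\bigr) \;\wedge\; P^{(\vec b)}(i) \Bigr),
\]
where $P_{a_k}(i)\leftrightarrow b_k$ abbreviates $P_{a_k}(i)$ or $\neg P_{a_k}(i)$ according to whether $b_k$ is $1$ or $0$. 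All of these are legal \FORASP{} operations, and the block has constant size ($2^m$ attention operations plus one position-wise step) for a fixed program.

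For correctness, the crucial point is that the selected position $j_i$ is determined entirely by $M$ and $S$ (and the leftmost/rightmost choice), none of which I changed; hence all $2^m$ heads $P^{(\vec b)}$ select the same $j_i$. Fix a query position $i$ and let $\vec b^\ast$ be the actual truth values of $P_{a_1}(i),\dots,P_{a_m}(i)$ there; exactly one conjunction $\bigwedge_k (P_{a_k}(i)\leftrightarrow b_k)$ is satisfied, namely $\vec b = \vec b^\ast$, so the position-wise disjunction collapses to $P^{(\vec b^\ast)}(i)$. If $j_i$ exists then $P^{(\vec b^\ast)}(i) = V_{\vec b^\ast}(j_i) = V(i,j_i)$ by construction of the substitution; if $j_i$ does not exist then every head returns $D(i)$, and so does the disjunction. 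In both cases $P_{\progstep+1}(i)$ agrees with the original operation, and applying this rewriting to each attention operation in turn yields an equivalent program all of whose value predicates depend only on $j$.

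The only thing to be careful about --- rather than a genuine obstacle --- is the bookkeeping: the new heads and the combining step must be inserted in the correct order under the consecutive numbering convention, the default predicate $D(i)$ (which already depends only on $i$) is copied unchanged, and one must check that the substituted $V_{\vec b}$ and the matching conjunctions stay within the permitted syntactic form. The exponential blow-up in the number of heads is harmless because $m$ is bounded by the fixed program.
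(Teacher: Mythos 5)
Your proof is correct, but it takes a genuinely different route from the paper's. The paper proceeds by structural induction on the value predicate $V(i,j)$: the base case $V(i,j)=B(i)$ is handled by introducing an auxiliary attention $A(i)$ (same mask and score, value $1$, default $0$) that records whether any position was attended to, so that $P(i)$ can be recovered position-wise as $(A(i)\land B(i))\lor(\lnot A(i)\land D(i))$; conjunctions, disjunctions, and negations are then split into separate attention operations with the same mask and score and recombined position-wise, again guarded by $A(i)$. You instead enumerate all $2^m$ truth assignments to the $i$-indexed atoms of $V$, run one attention head per assignment with those atoms replaced by constants, and select the right head with a position-wise multiplexer keyed on which assignment actually holds at $i$. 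This is precisely the technique the paper reserves for the unary \emph{score} normal form (\cref{lem:unary-score}), transplanted to the value predicate; your correctness argument (all heads select the same $j_i$ because $M$ and $S$ are untouched, exactly one guard fires, and the default case is preserved because every head falls back to $D(i)$) is sound, and the resulting operations are syntactically legal. The trade-off is cost: your construction produces $2^m$ copies of the attention operation, whereas the paper's structural induction produces only linearly many in the size of $V$ and, as the paper remarks, also makes it visible that the $\since/\until$ simulation never needs an $i$-dependent value in the first place. For the expressivity statement being proved, the exponential blow-up is harmless, as you correctly note.
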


\begin{proof}
This can be seen by the fact that the simulation of $\since$/$\until$ (\cref{sec:ltl_to_forasp_proof}) does not use a value predicate that depends on $i$, but we can show this more directly by induction on the structure of $V(i,j)$. We only show how to handle $\rightmost$; the case of $\leftmost$ is very similar.

Consider an attention operation with the form
\[P(i) := \attrdefault{j}{M(i,j)}{S(i,j)}{V(i,j)}{D(i)}.\]
The base cases are $V(i,j) = B(j)$, for some Boolean vector $B$, which already has the desired form, and $V(i,j) = B(i)$, in which case $P$ is equivalent to
\begin{align*}
A(i) &:= \attrdefault{j}{M(i,j)}{S(i,j)}{1}{0} \\
P(i) &:= (A(i) \land B(i)) \lor (\neg A(i) \land D(i)).
\end{align*}

If $V(i,j) = V_1(i,j) \land V_2(i,j)$, then $P(i)$ is equivalent to
\begin{align*}
A(i) &:= \attrdefault{j}{M(i,j)}{S(i,j)}{1}{0} \\
C_1(i) &:= \attrdefault{j}{M(i,j)}{S(i,j)}{V_1(i,j)}{0} \\
C_2(i) &:= \attrdefault{j}{M(i,j)}{S(i,j)}{V_2(i,j)}{0} \\
P(i) &:= (A(i) \land C_1(i) \land C_2(i)) \lor (\neg A(i) \land D(i)).
\end{align*}
Similarly for disjunction and negation.
\end{proof}
\end{toappendix}

The score predicate $S(i,j)$ depends on both $i$ and $j$ in both $\FORASP$ and actual transformers. Perhaps surprisingly, in $\FORASP$, it too can be made to depend only on $j$ without reducing expressivity, but as a tradeoff the program may become exponentially larger in size (see \cref{sec:unary_score_proof}).

\begin{toappendix}
\subsection{Unary score predicate}
\label{sec:unary_score_proof}

\begin{lemma}\label{lem:unary-score}
    Every \FORASP{} program is equivalent to one in which all score predicates $S(i,j)$ depend only on $j$.
\end{lemma}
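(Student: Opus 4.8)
The plan is to prove this by structural induction on the score predicate $S(i,j)$, showing how to eliminate every atomic subformula of the form $P_m(i)$ (a dependence on the query position $i$) while preserving the behavior of the attention operation. The key observation is that the attention operation selects, among positions $j$ satisfying $M(i,j) \land S(i,j)$, the rightmost or leftmost one, and then returns $V(i,j_i)$. Since $S(i,j)$ is a Boolean combination of atoms $P_m(i)$ and $P_m(j)$, I would first put $S$ into disjunctive normal form as a disjunction over all possible truth-value assignments to the query-side atoms. Concretely, fix the finite set of Boolean vectors $\{P_{m_1}, \ldots, P_{m_r}\}$ whose values at $i$ appear in $S$. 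For each assignment $b \in \{0,1\}^r$, let $\phi_b(i)$ be the position-wise formula asserting that $(P_{m_1}(i), \ldots, P_{m_r}(i)) = b$, and let $S_b(j)$ be the residual score predicate obtained from $S$ by substituting the constants $b$ for the query-side atoms, so $S_b$ depends only on $j$. This is where the exponential blowup enters: there are $2^r$ such assignments, and I would create one attention operation per assignment.

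Next I would replace the single attention operation $P(i) := \rightmost_j[M(i,j), S(i,j)] \; V(i,j) : D(i)$ by first computing, for each $b$, an auxiliary vector
\[
P^{(b)}(i) := \attrdefault{j}{M(i,j)}{S_b(j)}{V(i,j)}{D(i)},
\]
each of which now has a score predicate depending only on $j$. Then the original $P(i)$ is recovered by a single position-wise operation that selects the branch matching the actual query-side values:
\[
P(i) := \bigvee_{b \in \{0,1\}^r} \bigl(\phi_b(i) \land P^{(b)}(i)\bigr).
\]
Because the $\phi_b(i)$ partition all positions $i$ according to their query-side bits, exactly one disjunct is active at each $i$, and on that disjunct $S_b(j)$ agrees with $S(i,j)$ for every $j$; hence $j_i$, and therefore the returned value $V(i,j_i)$ (or the default $D(i)$), is identical to the original. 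I would also invoke \cref{thm:unary_value} first so that the value predicate $V$ already depends only on $j$, which keeps the auxiliary operations syntactically legal; alternatively one can carry the $i$-dependence of $V$ through unchanged, since the proposition does not constrain $V$.

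The correctness argument is then the routine verification that for each $i$, letting $b^\ast$ be the true query-side assignment at $i$, the predicate $M(i,j) \land S_{b^\ast}(j)$ selects exactly the same set of positions as $M(i,j) \land S(i,j)$, so the hard-attention selection and its output coincide. The main subtlety, rather than a deep obstacle, is bookkeeping: ensuring that the $\phi_b(i)$ are genuine position-wise operations (they are, being Boolean combinations of the $P_{m_\ell}(i)$), that each auxiliary $P^{(b)}$ is inserted into the program in dependency order before $P$ is formed, and that the construction is applied to every attention operation in turn — processing operations from earliest to latest so that earlier rewrites are already in unary-score form before later ones reference them. The claimed exponential size increase is exactly the factor $2^r$ per rewritten operation, which I would note explicitly to match the statement in the surrounding text.
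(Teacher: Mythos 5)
Your proposal is correct and follows essentially the same route as the paper's proof: enumerate all truth assignments to the query-side atoms, build one auxiliary attention operation with the partially-evaluated unary score $S_b(j)$ per assignment, and recombine with a position-wise disjunction guarded by the formula identifying the actual assignment at $i$ (the paper's $\bigwedge_t (A_t(i)\leftrightarrow A_t^\chi)$ is your $\phi_b(i)$). The paper likewise assumes the value predicate has already been made unary and notes the same $2^r$-style blowup, so there is nothing substantive to add.
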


\begin{proof}

Again, we only show the case of $\blacktriangleright$, as the case of $\blacktriangleleft$ is very similar. Consider a \FORASP{} attention operation $P$,
 \[P(i) := \attrdefault{j}{M(i,j)}{S(i,j)}{V(j)}{D(i)}.\]

Observe that 
\[S(i,j) = f(A_1(i),\ldots, A_{\proglen_A}(i), B_1(j),\ldots, B_{\proglen_B}(j))\]
where $f$ is some Boolean function, and the $A_\progstep$ and $B_\progstep$ are \FORASP{} operations. Let $\mathcal{A}=\{A_1,\ldots, A_{\proglen_A}\}$, and for each $\chi\subseteq \mathcal{A}$, define an assignment of truth values to the $A_\progstep$,
\[A_\progstep^\chi=\begin{cases}
    1 & A_\progstep\in \chi\\
    0 & A_\progstep\not\in \chi
\end{cases}\]
and use it to define a unary score $S^\chi(j)$ which uses the truth assignment of $\chi$ plugged into the $A_\progstep$:
\[S^\chi(j) = f(A_1^\chi,\ldots, A_{\proglen_A}^\chi,B_1(j),\ldots, B_{\proglen_B}(j)).\]

Now, $P(i)$ is equivalent to $P'(i)$, where:
\begin{align*}
    P^\chi(i) &:= \attrdefault{j}{M(i,j)}{S^\chi(j)}{V(j)}{D(i)} && \text{for each $\chi \subseteq \mathcal{A}$} \\
    P'(i) & := \bigvee_{\chi \subseteq \mathcal{A}}\left( P^\chi(i) \land \bigwedge_{\progstep \in [\proglen_A]} \left(A_\progstep(i)\leftrightarrow A_\progstep^\chi\right)\right).
\end{align*}
To see why, observe that for any $i$, there is exactly one truth assignment $\chi_i$ that satisfies $\bigwedge_{\progstep \in [\proglen_A]} \left(A_\progstep(i)\leftrightarrow A_\progstep^\chi\right)$. This $\chi_i$ also makes $S^{\chi_i}(j)$ equivalent to $S(i,j)$, and $P^{\chi_i}(i)$ equivalent to $P(i)$.

Note that each attention operation translates into as many as $2^{\proglen_A+\proglen_B}$ operations.
\end{proof}
\end{toappendix}

\subsection{Equivalence with linear temporal logic}
\label{sec:forasp_equals_starfree}

We prove that \FORASP{} recognizes exactly the star-free languages, by proving that \FORASP{} is equivalent to linear temporal logic.
\Cref{sec:counterfree} gives another proof of the star-free-to-\FORASP{} direction via counter-free automata.

In linear temporal logic or \LTL{} \citep{kamp:1968}, every formula implicitly depends on a single ``time'' (or position).
The atomic formulas are $Q_{\asym}$ for every $\asym \in \Sigma$, and we have the usual connectives $\land$, $\lor$, and $\neg$, as well as operators $\since$ and $\until$.\footnote{Other presentations of \LTL{} may define non-strict operators $\since'$ and $\until'$ (in which $j$ or $k$ can be equal to $i$) and add $\ltlop{previous}$ and $\ltlop{next}$ operators. These definitions are expressively equivalent%
, but the proofs here are more straightforward when defined this way.} 
For any input string $\istr = \isym_1 \cdots \isym_n$ and position $i \in [n]$, we define $w, i \models \phi$ as follows:
\begin{align*}
w, i &\models Q_{\asym} && \text{if $w_i = \asym$} \\
w, i &\models \phi_1 \land \phi_2 && \text{if $w, i \models \phi_1$ and $w, i \models \phi_2$} \\
w, i &\models \phi_1 \lor \phi_2 && \text{if $w, i \models \phi_1$ or $w, i \models \phi_2$} \\
w, i &\models \neg \phi_1 && \text{if $w, i \not\models \phi_1$} \\
w, i &\models \phi_1 \since \phi_2 && \text{if for some $j<i$, we have $w, j \models \phi_2$,} \\
&&& \text{and for all $k$ such that $j<k<i$, we have $w,k \models \phi_1$} \\
w, i &\models \phi_1 \until \phi_2 && \text{if for some $j>i$, we have $w, j \models \phi_2$,} \\
&&& \text{and for all $k$ such that $i<k<j$, we have $w,k \models \phi_1$.}
\end{align*}
To use a formula $\phi$ of \LTL{} to define a language over $\Sigma$, for an input string $\istr \in \Sigma^+$ of length $n$ we 
designate the last position as the output position, so that $\istr \in \mathcal{L}(\phi)$ if and only if $\istr, n \models \phi$. 

  For example, let $\Sigma = \{\sym{a}, \sym{b}, \sym{\#}\}$ and consider the following formulas:
    \begin{align*}
      \phi_1 &= Q_{\sym{\#}} \\
      \phi_2 &= Q_{\sym{\#}} \land (Q_{\sym{b}} \since Q_{\sym{\#}}) \\
      \phi_3 &= Q_{\sym{\#}} \land (Q_{\sym{b}} \since (Q_{\sym{\#}} \land (Q_{\sym{a}} \since Q_{\sym{\#}}))) \\
      \phi_4 &= Q_{\sym{\#}} \land (Q_{\sym{b}} \since (Q_{\sym{\#}} \land (Q_{\sym{a}} \since (Q_{\sym{\#}} \land \neg (0 \since 1))))).
    \end{align*}
  The formula $\phi_1$
    defines the language $\Sigma^* \sym{\#}$, which contains all and only strings with a $\sym{\#}$ in the last position.
  The formula $\phi_2$
    defines the language $\Sigma^* \sym{\#} \sym{b}^* \sym{\#}$,
  and $\phi_3$
    defines the language $\Sigma^* \sym{\#} \sym{a}^* \sym{\#} \sym{b}^* \sym{\#}$.
  Finally, $\phi_4$ defines the language $\sym{\#} \sym{a}^* \sym{\#} \sym{b}^* \sym{\#}$, because $\neg(0 \since 1)$ is only true at the first position.

\begin{theorem} \label{thm:ltl_to_forasp}
For any formula of\/ \LTL{} that defines a language $L \subseteq \Sigma^+$, there is a $\FORASP$ program that recognizes $L$.
\end{theorem}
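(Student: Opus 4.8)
The plan is to proceed by structural induction on the \LTL{} formula, building for each subformula $\phi$ a Boolean vector, which I will call $P_\phi$, in a single growing \FORASP{} program, while maintaining the invariant that $P_\phi(i) = 1$ if and only if $\istr, i \models \phi$ for every position $i \in [n]$. Once this invariant is secured for the top-level formula $\phi$ defining $L$, I designate $P_\phi$ as the output vector $Y$. Since both \LTL{} and \FORASP{} read off acceptance at position $n$, the invariant at $i = n$ yields $\istr \in L \iff \istr, n \models \phi \iff \istr \models Y(n)$, which is exactly the required recognition condition.

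The base case and the Boolean connectives are immediate. For an atom $Q_\asym$, the vector $P_{Q_\asym}$ is simply the initial \FORASP{} vector $Q_\asym$, which by definition satisfies $Q_\asym(i) = 1 \iff \isym_i = \asym \iff \istr, i \models Q_\asym$. For $\neg$, $\land$, and $\lor$, a single position-wise operation combining the inductively constructed vectors suffices, e.g.\ $P_{\phi_1 \land \phi_2}(i) := P_{\phi_1}(i) \land P_{\phi_2}(i)$; correctness is immediate from the semantics and the induction hypothesis.

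The crux is the temporal operators. For $\phi_1 \since \phi_2$, the idea is to attend, using strict-future-masked rightmost-hard attention, to the closest position to the left of $i$ at which the ``since'' could first break down---namely the maximum $j < i$ satisfying $\neg P_{\phi_1}(j) \lor P_{\phi_2}(j)$---and to return $P_{\phi_2}(j)$ there, defaulting to $0$:
\[
P_{\phi_1 \since \phi_2}(i) := \attrdefault{j}{j<i}{\neg P_{\phi_1}(j) \lor P_{\phi_2}(j)}{P_{\phi_2}(j)}{0}.
\]
The operator $\until$ is handled symmetrically, using strict-past-masked leftmost-hard attention with the mask $j > i$ in place of $j < i$.

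The main obstacle is verifying that this single attention operation faithfully captures the nested quantifier structure of $\since$, and the argument turns on the maximality of the attended position $j_i$. Every $k$ strictly between $j_i$ and $i$ fails the score predicate, hence satisfies $P_{\phi_1}(k) \land \neg P_{\phi_2}(k)$, so $\phi_1$ holds throughout the open interval $(j_i, i)$. If $P_{\phi_2}(j_i) = 1$, then $j_i$ itself witnesses the existential in the semantics of $\since$, and the returned value $1$ is correct; if $P_{\phi_2}(j_i) = 0$, then $j_i$ satisfies the score only through its other disjunct, so $\neg P_{\phi_1}(j_i)$ holds, and one checks that no witness $j'$ can exist---a putative witness $j' < j_i$ is blocked because $\phi_1$ fails at the intermediate position $j_i$, while a putative witness $j' > j_i$ is impossible because maximality forces $\phi_2$ to fail at every such position---so the returned value $0$ is correct. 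Finally, the default value $0$ correctly handles the remaining case in which no $j < i$ meets the score predicate (including $i = 1$), since then $\phi_2$ holds nowhere below $i$ and no witness exists. With the invariant thus preserved through every case, the induction is complete.
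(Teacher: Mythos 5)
Your proposal is correct and follows essentially the same route as the paper: the paper reduces the theorem to the same inductive lemma (a Boolean vector $P_\phi$ agreeing with $\phi$ at every position) and uses exactly the same attention operation $\attrdefault{j}{j<i}{\neg P_{\phi_1}(j) \lor P_{\phi_2}(j)}{P_{\phi_2}(j)}{0}$ for $\since$ and its mirror image for $\until$. Your explicit case analysis justifying the correctness of that operation is a detail the paper leaves to the reader, and it checks out.
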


\begin{proof}
See \cref{sec:ltl_to_forasp_proof}. This is shown via direct construction. 
\end{proof}

\begin{toappendix}
\subsection{Proof of \cref{thm:ltl_to_forasp} (\LTL{} to \FORASP)}
\label{sec:ltl_to_forasp_proof}

\Cref{thm:ltl_to_forasp} follows from the following lemma.

\begin{lemma} For any formula $\phi$ of\/ $\LTL$, there is a $\FORASP$ program with a Boolean vector $P_\phi$ such that, for any input $\istr$ of length $n$ and all $i \in [n]$, we have $w,i \models \phi$ iff $w \models P_\phi(i)$.
\end{lemma}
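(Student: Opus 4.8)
The plan is to prove the lemma by structural induction on the \LTL{} formula $\phi$, constructing a single \FORASP{} program that contains a Boolean vector $P_\psi$ for every subformula $\psi$ of $\phi$, ordered so that each $P_\psi$ is computed after the vectors for its immediate subformulas. The induction hypothesis is precisely the claimed equivalence: for all input strings $\istr$ of length $n$ and all $i \in [n]$, we have $w, i \models \psi$ if and only if $w \models P_\psi(i)$.

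For the base case $\psi = Q_\asym$, I take $P_\psi$ to be the initial Boolean vector $Q_\asym$, and the equivalence holds by definition of the initial vectors and the semantics of atomic formulas. For the Boolean connectives $\psi = \psi_1 \land \psi_2$, $\psi = \psi_1 \lor \psi_2$, and $\psi = \neg\psi_1$, I use position-wise operations, for instance $P_\psi(i) := P_{\psi_1}(i) \land P_{\psi_2}(i)$; the equivalence follows immediately from the induction hypothesis and the corresponding clause of the \LTL{} semantics.

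The substantive cases are the temporal operators, which are where attention operations are needed. For $\psi = \psi_1 \since \psi_2$, I would use a rightmost-hard attention operation with strict future masking that, at position $i$, attends to the most recent (largest) position $j < i$ satisfying the score predicate $P_{\psi_2}(j) \lor \neg P_{\psi_1}(j)$, returns the value $P_{\psi_2}(j)$ there, and defaults to $0$:
\[P_\psi(i) := \attrdefault{j}{j<i}{P_{\psi_2}(j) \lor \neg P_{\psi_1}(j)}{P_{\psi_2}(j)}{0}.\]
The case of $\until$ is entirely symmetric, using leftmost-hard attention with strict past masking (score on $j > i$, again returning $P_{\psi_2}(j)$ with default $0$).

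The main obstacle is verifying that this attention gadget correctly captures the strict-since semantics, which I would establish by a case analysis on the attended position. Write $j_i$ for the largest $j < i$ making the score true. Every $k$ with $j_i < k < i$ fails the score, hence satisfies $P_{\psi_1}(k) \land \neg P_{\psi_2}(k)$, so $\psi_1$ holds throughout the open interval $(j_i, i)$. If $P_{\psi_2}(j_i)$ is true, then $j_i$ is a valid witness for $\psi_1 \since \psi_2$ and the returned value $1$ is correct. If instead $P_{\psi_2}(j_i)$ is false, then the score forces $\neg P_{\psi_1}(j_i)$; any hypothetical witness $j$ would satisfy $P_{\psi_2}(j)$ and hence the score, so $j \le j_i$, but $j = j_i$ contradicts $\neg P_{\psi_2}(j_i)$ and $j < j_i$ would place $j_i$ in $(j,i)$ and force $\psi_1(j_i)$, contradicting $\neg P_{\psi_1}(j_i)$; thus no witness exists and the returned value $0$ is correct. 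Finally, if no such $j_i$ exists, every $j < i$ satisfies $P_{\psi_1}(j) \land \neg P_{\psi_2}(j)$, so there is no witness, and the default value $0$ is again correct. Combining the three cases with the induction hypothesis for $\psi_1$ and $\psi_2$ gives the equivalence for $\psi$, and the symmetric argument handles $\until$, completing the induction. \Cref{thm:ltl_to_forasp} then follows by taking $P_\phi$ as the output vector and position $n$ as the output position.
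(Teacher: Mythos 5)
Your proposal is correct and follows essentially the same route as the paper: structural induction with position-wise operations for the Boolean connectives and, for $\since$, a strictly future-masked $\rightmost$ operation with score $\neg P_{\psi_1}(j) \lor P_{\psi_2}(j)$, value $P_{\psi_2}(j)$, and default $0$ (symmetrically for $\until$). Your case analysis verifying the $\since$ gadget is sound and in fact supplies detail the paper leaves implicit.
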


\begin{proof}
By induction on the structure of the formula $\phi$. We assume that a $\FORASP$ program always contains the initial Boolean vectors $Q_{\asym}$ for $\asym \in \Sigma$.

If $\phi = Q_{\asym}$: %
Add operation \[P_\phi(i) := Q_{\asym}(i).\]

If $\phi = \neg \phi_1$: By the induction hypothesis, convert $\phi_1$ to \FORASP{} operations, including one that computes $P_{\phi_1}$. Then add the operation
\[P_\phi(i) := \neg P_{\phi_1}(i).\]

If $\phi = \phi_1 \land \phi_2$: By the induction hypothesis, convert $\phi_1$ to \FORASP{} operations, including one that computes $P_{\phi_1}$, then convert $\phi_2$ to \FORASP{} operations, including one that computes $P_{\phi_2}$. Then add operation \[P_\phi(i) := P_{\phi_1}(i) \land P_{\phi_2}(i).\]

If $\phi = \phi_1 \lor \phi_2$: Similar, but add
\[P_\phi(i) := P_{\phi_1}(i) \lor P_{\phi_2}(i).\]

If $\phi = \phi_1 \since \phi_2$: Similar, but add
\[P_{\phi}(i) := \attrdefault{j}{j<i}{\neg P_{\phi_1}(j) \lor P_{\phi_2}(j)}{P_{\phi_2}(j)}{0}.\]

If $\phi = \phi_1 \until \phi_2$: Similar, but add
\[P_{\phi}(i) := \attldefault{j}{j>i}{\neg P_{\phi_1}(j) \lor P_{\phi_2}(j)}{P_{\phi_2}(j)}{0}.\]
\end{proof}

\end{toappendix}

\begin{theorem} \label{thm:forasp_to_ltl}
For any $\FORASP$ program that recognizes a language $L\subseteq \Sigma^+$, there is a formula of\/ \LTL{} that defines $L$.
\end{theorem}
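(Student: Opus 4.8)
The plan is to prove a slightly stronger statement by induction following the order in which the Boolean vectors are computed: for every Boolean vector $P$ produced by the $\FORASP$ program there is an $\LTL$ formula $\phi_P$ such that $w, i \models \phi_P$ iff $w \models P(i)$, for every input $w$ and every position $i$. Applying this to the designated output vector $Y$ at the output position $n$ immediately yields a formula $\phi_Y$ defining $L$, since $w \in L$ iff $w \models Y(n)$ iff $w, n \models \phi_Y$. Before starting the induction I would invoke the normal-form results \cref{thm:unary_value} and \cref{lem:unary-score} to assume that in every attention operation the score predicate $S$ and the value predicate $V$ depend only on the key position $j$; this is exactly what makes the translation into the occurrence-based operators $\since$ and $\until$ possible.

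The base case $P = Q_\asym$ and the position-wise operations are immediate: take $\phi_{Q_\asym} = Q_\asym$, and translate a Boolean combination of the $P_\progstep(i)$ into the same Boolean combination of the formulas $\phi_{P_\progstep}$. For an attention operation the real content is to read off the value $\phi_V$ at the selected position $j_i$ and otherwise fall back on the default $\phi_D$. Two of the six masking/tie-breaking combinations align naturally with the strict operators. For strict future masking with $\rightmost$, $j_i$ is the most recent position before $i$ at which $\phi_S$ holds, which is isolated by $(\neg \phi_S) \since (\phi_S \wedge \phi_V)$; combined with the default case this gives
\[\phi_P = \bigl((\neg \phi_S) \since (\phi_S \wedge \phi_V)\bigr) \lor \bigl(\neg(1 \since \phi_S) \land \phi_D\bigr),\]
where $1 \since \phi_S$ abbreviates ``$\phi_S$ holds at some earlier position'' and the two disjuncts are mutually exclusive. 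The case of strict past masking with $\leftmost$ is symmetric, using $\until$ in place of $\since$.

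The main obstacle is the remaining cases, where the tie-breaking direction runs opposite to the masking direction ($\leftmost$ with future masking, $\rightmost$ with past masking) together with the non-masked cases, because there $j_i$ is the \emph{farthest} rather than the nearest qualifying position, which $\since$ and $\until$ do not directly locate. I would handle these by first writing formulas that pin down the globally first and last $\phi_S$-positions, $\mathit{first}_S = \phi_S \land \neg(1 \since \phi_S)$ and $\mathit{last}_S = \phi_S \land \neg(1 \until \phi_S)$, and then reading $\phi_V$ off at that unique position. For $\leftmost$ with future masking, $j_i$ is the first $\phi_S$-position provided it lies before $i$, giving
\[\phi_P = \bigl(1 \since (\mathit{first}_S \land \phi_V)\bigr) \lor \bigl(\neg(1 \since \phi_S) \land \phi_D\bigr),\]
and $\rightmost$ with past masking is symmetric with $\mathit{last}_S$ and $\until$. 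For the non-masked cases $j_i$ is the global first (for $\leftmost$) or last (for $\rightmost$) $\phi_S$-position, independent of $i$; since the strict operators exclude position $i$ itself, I would cover the three possible locations of that position explicitly, e.g.\ for $\leftmost$
\[\phi_P = \Bigl[(\mathit{first}_S \land \phi_V) \lor (1 \since (\mathit{first}_S \land \phi_V)) \lor (1 \until (\mathit{first}_S \land \phi_V))\Bigr] \lor \bigl[\neg(\phi_S \lor (1 \since \phi_S) \lor (1 \until \phi_S)) \land \phi_D\bigr].\]
In every case the only thing to verify is that the ``value'' disjunct and the ``default'' disjunct are mutually exclusive and that the value disjunct reads $\phi_V$ at exactly $j_i$, which follows from the uniqueness of the first/last occurrence; the strictness of $\since$ and $\until$ is the one place where care is needed.
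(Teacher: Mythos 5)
Your proposal is correct and follows essentially the same route as the paper's proof: the same strengthened inductive claim over Boolean vectors, the same appeal to the unary normal forms for $S$ and $V$, the same $(\neg\phi_S)\since(\phi_S\wedge\phi_V)$ construction for the aligned cases, and the same first/last-occurrence trick (the paper's $\ltlop{rightmost}$ operator is your $\mathit{last}_S$) combined with existence operators for the opposing and unmasked cases. The formulas you write are, up to notation, the ones in \cref{sec:forasp_to_ltl_proof}.
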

\begin{proof}
See \cref{sec:forasp_to_ltl_proof}. We use the unary normal forms (\cref{sec:normal_forms}) to facilitate this proof.
\end{proof}
\begin{toappendix}
\subsection{Proof of \cref{thm:forasp_to_ltl} (\FORASP{} to \LTL)}
\label{sec:forasp_to_ltl_proof}

\Cref{thm:forasp_to_ltl} follows from the following lemma.
\begin{lemma}
     For any Boolean vector $P$ of a $\FORASP$ program $\prog$, there is a formula $\phi_P$ of\/ \LTL{} such that for any input $\istr$ of length $n$ and all $i \in [n]$, we have $w \models P(i)$ iff $w,i \models \phi_P$.
\end{lemma}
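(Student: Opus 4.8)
The plan is to prove the lemma by induction on the index $t$ of the Boolean vector, processing the operations of $\prog$ in the order they are computed. The inductive invariant is that every previously computed vector $P_s$ (with $s \le t$) already has an equivalent \LTL{} formula $\phi_{P_s}$, meaning $w \models P_s(i)$ iff $w, i \models \phi_{P_s}$ for all $i$. The base case is an initial vector $P = Q_\asym$, for which $\phi_P = Q_\asym$ works immediately. Before treating attention operations I would first invoke the two normal-form results (\cref{thm:unary_value} and \cref{lem:unary-score}), so that I may assume the score predicate $S(j)$ and value predicate $V(j)$ of each attention operation depend only on the key position $j$. Writing $\phi_S$ and $\phi_V$ for the formulas obtained by substituting the $\phi_{P_s}$ into the Boolean combinations defining $S$ and $V$ (and $\phi_D$ for the default, which depends only on $i$), the attention operation reduces to a purely temporal ``find the extreme matching position and read off $V$ there'' statement.

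For a position-wise operation $P_{t+1}(i) := R(i)$, I take $\phi_{P_{t+1}}$ to be the same Boolean combination $R$ applied to the formulas $\phi_{P_s}$. The representative attention case is strict future masking with $\rightmost$, i.e.\ ``the most recent previous position $j<i$ at which $S$ holds, if any, and $V(j)$ there.'' The key observation is that $(\neg \phi_S) \since (\phi_S \land \phi_V)$ is true at $i$ exactly when the last $S$-position before $i$ satisfies $V$: the witness $j$ must carry $\phi_S \land \phi_V$, and $\neg \phi_S$ holding strictly between $j$ and $i$ forces $j$ to be the most recent such position. Since $j_i$ fails to exist exactly when $\neg(1 \since \phi_S)$ holds, I would set
\[
\phi_{P_{t+1}} = \bigl((\neg \phi_S) \since (\phi_S \land \phi_V)\bigr) \lor \bigl(\neg(1 \since \phi_S) \land \phi_D\bigr),
\]
and check the three scenarios ($j_i$ present with $V$ true, $j_i$ present with $V$ false, $j_i$ absent) against the \FORASP{} semantics. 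The leftmost strict-future case is symmetric, reading off the \emph{first} matching position via $1 \since (\mathit{firstS} \land \phi_V)$ with $\mathit{firstS} = \phi_S \land \neg(1 \since \phi_S)$; the two past-masking cases are mirror images obtained by replacing $\since$ with $\until$.

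The step I expect to be the main obstacle is the non-masked case ($M(i,j)=1$), where the winning position $j_i$ is the global minimum or maximum matching position over all of $[n]$ and so need not lie on one side of $i$. Here I would identify the unique global last $S$-position by $\mathit{lastS} = \phi_S \land \neg(1 \until \phi_S)$ and express ``some position in the whole string satisfies $\mathit{lastS} \land \phi_V$'' by the three-way disjunction $(\mathit{lastS}\land\phi_V) \lor (1 \since (\mathit{lastS}\land\phi_V)) \lor (1 \until (\mathit{lastS}\land\phi_V))$, which is unambiguous precisely because $\mathit{lastS}$ holds at most once; the default is then guarded by the negation of the analogous ``$S$ occurs anywhere'' formula, and the leftmost variant uses $\mathit{firstS}$ instead. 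Combining $\since$ and $\until$ in this fashion is the delicate part, since I must verify that uniqueness of the extremal $S$-position makes the ``search the whole string'' formula correct and that the guarded default truly fires only when no matching position exists at all. Finally, applying the invariant to the designated output vector $Y$ at the last position $n$ gives $w \in L$ iff $w,n \models \phi_Y$, which proves \cref{thm:forasp_to_ltl}.
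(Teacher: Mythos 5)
Your proposal is correct and follows essentially the same route as the paper: both first invoke the unary normal forms for the score and value predicates, then translate position-wise operations as Boolean combinations and translate each attention case with the same formulas (your future-masked rightmost formula $((\neg\phi_S)\since(\phi_S\land\phi_V))\lor(\neg(1\since\phi_S)\land\phi_D)$ is literally the paper's, and your $\mathit{lastS}$/$\mathit{firstS}$ constructions and three-way ``exists anywhere'' disjunction for the unmasked case coincide with the paper's $\ltlop{rightmost}$ and $\ltlop{exists}$ abbreviations). The only difference is cosmetic: the paper packages these subformulas as named derived operators rather than writing them inline.
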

\begin{proof}
    First, by \cref{lem:unary-score} and \cref{thm:unary_value} we can rewrite $\prog$ to an equivalent program such that every attention operation only uses unary scores and unary values. 

    Each initial Boolean vector $Q_{\asym}(i)$ translates to the atomic formula $Q_{\asym}$.

    For each operation $P_{\progstep}(i)$ (for $\progstep>|\Sigma|$), if it is a position-wise operation, that is,
    \[ P_{\progstep}(i) := f(P_1(i), \ldots, P_{\progstep-1}(i)) \]
    where $f$ is a Boolean function, then by the inductive hypothesis, there are $\LTL$ formulas $\phi_{\progstep}$ for each $P_{\progstep}(i)$. Then we convert $P_{\progstep}(i)$ into $\phi_{\progstep} = f(\phi_1, \ldots, \phi_{\progstep-1})$.
    
    If $P_{\progstep}(i)$ is an attention operation, define
    \begin{align*}
    \ltlop{exists_<} \phi &= 1 \since \phi \\
    \ltlop{exists_>} \phi &= 1 \until \phi && \text{also known as $\ltlop{eventually}$} \\
    \ltlop{exists} \phi &= (\ltlop{exists_<} \phi) \lor \phi \lor (\ltlop{exists_>} \phi) \\
    \ltlop{rightmost} \phi &= \phi \land \lnot (\ltlop{exists_>} \phi).
    \end{align*}
    
    If $P_{\progstep}(i)$ uses $\rightmost$ and future masking, that is,
    \[P_{\progstep}(i):=\attrdefault{j}{j<i}{S(j)}{V(j)}{D(i)}\]
    then by the inductive hypothesis, there are $\LTL$ formulas $\phi_S,\phi_V$ and $\phi_D$ for the corresponding \FORASP{} operations. Then we can convert $P_{\progstep}(i)$ into the $\LTL$ formula 
    \[\phi_{\progstep} = (\lnot \phi_S \since (\phi_S\land \phi_V)) \lor (\lnot(\ltlop{exists_<} \phi_S)\land \phi_D).\]

    If $P_{\progstep}(i)$ uses $\blacktriangleright$ and past masking, that is,
    \[P_{\progstep}(i):=\attrdefault{j}{j>i}{S(j)}{V(j)}{D(i)}\]
    then
    \[\phi_{\progstep} = (\ltlop{exists_>} ((\ltlop {rightmost} \phi_S) \land \phi_V)) \lor (\lnot(\ltlop{exists_>} \phi_S)\land \phi_D).\]

    And if $P_{\progstep}(i)$ uses $\blacktriangleright$ with no masking, that is,
    \[P_{\progstep}(i):=\attrdefault{j}{1}{S(j)}{V(j)}{D(i)}\]
    then
    \begin{align*}
        \phi_{\progstep} & = (\ltlop{exists} ((\ltlop {rightmost} \phi_S) \land \phi_V)) \lor (\lnot (\ltlop{exists} \phi_S) \land \phi_D).
    \end{align*}    

    The cases for $\blacktriangleleft$ are symmetric.
\end{proof}
\end{toappendix}

\begin{toappendix}
\subsection{Counter-free automata to \FORASP}
\label{sec:counterfree}

In this section, we give an alternative proof of \cref{thm:ltl_to_forasp} using the fact that the star-free languages are exactly those recognized by \emph{counter-free automata}.

A \emph{deterministic finite automaton} (DFA) is a tuple $A = (\Sigma, Q, \delta)$, where $\Sigma$ is the input alphabet, $Q$ is the finite set of states, and $\delta \colon Q \times \Sigma \rightarrow Q$ is the transition function.
A \emph{counter-free automaton} is a DFA in which no string induces a permutation on any subset of $Q$ other than the identity.
\Citet{schutzenberger:1965} proved that the star-free languages are exactly those recognized by counter-free automata.

\begin{theorem} \label{thm:counterfree_to_forasp}
For any counter-free DFA that recognizes a language $L \subseteq \Sigma^+$, there is a \FORASP{} program that recognizes $L$.
\end{theorem}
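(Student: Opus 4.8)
The plan is to build a \FORASP{} program that, at each position $i$, records the state the DFA $A=(\Sigma,Q,\delta)$ (with initial state $q_0$ and accepting set $F\subseteq Q$, which I take as part of the automaton) occupies after reading the prefix $\isym_1\cdots\isym_i$. Concretely, I would construct one Boolean vector $\mathit{St}_q$ for each $q\in Q$ with intended meaning $\mathit{St}_q(i)=1$ iff $\delta(q_0,\isym_1\cdots\isym_i)=q$; acceptance is then the single position-wise formula $Y(i):=\bigvee_{q\in F}\mathit{St}_q(i)$ read at the output position $n$. So everything reduces to computing the state-tracking vectors $\mathit{St}_q$. The one primitive this needs beyond Boolean position-wise operations is ``the most recent earlier position at which some already-computed predicate holds,'' which is exactly a $\rightmost$ attention with strict future masking; this is what lets the program jump directly to the positions that determine the current state.

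The obstacle is that the state obeys the sequential recurrence $\mathrm{state}(i)=\delta(\mathrm{state}(i-1),\isym_i)$, whereas a \FORASP{} program performs only a fixed number of operations independent of $n$, so the run cannot be unrolled step by step; this is precisely where counter-freeness is essential. The engine I would use is the sequence of image sets $\mathrm{Im}_i=\{\delta(q,\isym_1\cdots\isym_i):q\in Q\}$. Reading one more symbol maps $\mathrm{Im}_{i-1}$ onto $\mathrm{Im}_i=\delta_{\isym_i}(\mathrm{Im}_{i-1})$, so $|\mathrm{Im}_i|$ is non-increasing. Crucially, if $\isym_i$ maps the current image onto itself then $\delta_{\isym_i}$ restricts to a permutation of that set, and counter-freeness forces this permutation to be the identity; hence across any block of positions on which the image set is unchanged the state is unchanged as well, so the state can move only where the image changes. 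The size-strictly-decreasing such positions number at most $|Q|-1$, while size-preserving changes may recur.

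With this in hand I would argue by induction on $|Q|$. The base case $|Q|=1$ is immediate. For the inductive step, use a most-recent-occurrence attention query to locate, for each $i$, the latest earlier position at which the image strictly collapses (an event detectable from the symbol read together with the previously computed image-tracking vectors); after such a position the set of reachable states is a proper subset $Q'\subsetneq Q$, and the restriction of $A$ to $Q'$ is again a counter-free automaton on fewer states, to which the induction hypothesis applies to complete the computation of the state on the suffix. Combining the sub-program for the residual automaton with the located collapse position (and the attention default value $D$, which supplies the boundary case where no earlier position exists, i.e.\ the initial configuration at $q_0$) via position-wise Boolean operations yields the vectors $\mathit{St}_q$.

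The main difficulty, and the step I expect to require the most care, is the case in which the image set changes \emph{without} shrinking---the ``relabeling'' boundaries, which can recur arbitrarily often and therefore cannot simply be enumerated. Here one must lean on the full strength of aperiodicity: because no string induces a nontrivial permutation of a subset, any excursion of the image that returns to an earlier set does so via the identity, so the state after such an excursion is determined by a \emph{bounded} lookback or, equivalently, by a strictly smaller residual problem. Making this precise---choosing a ``reset'' predicate so that each recursive call is both expressible by the available attention primitive and applied to an automaton of strictly smaller state set---is the crux. Organizing the argument through the transition monoid and an induction on its Green's $\mathcal{J}$-classes (using that the monoid of a counter-free automaton is aperiodic) is an alternative way to structure the same computation and may be cleaner to formalize than the image-set bookkeeping sketched above.
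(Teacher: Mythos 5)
Your high-level intuition is the right one---aperiodicity kills nontrivial permutations, so the run can only ``move'' where the image set $\mathrm{Im}_i=\{\delta(q,\isym_1\cdots\isym_i):q\in Q\}$ changes, and strict collapses number at most $|Q|-1$---but the proof has a genuine gap exactly at the point you yourself flag as the crux, and the inductive step as stated does not go through. You propose to induct on $|Q|$ by locating the last strict collapse and then applying the hypothesis to ``the restriction of $A$ to $Q'$,'' where $Q'\subsetneq Q$ is the post-collapse image. But $Q'$ need not be closed under $\delta$, and the set of states reachable from $Q'$ can be all of $Q$ again, so the residual automaton is not smaller. Concretely, take the minimal (counter-free) DFA for $(\sym{a}\sym{b})^*$, with states $q_0$ (initial/accepting), $q_1$, and a dead state: on input $(\sym{a}\sym{b})^m$ the image collapses strictly once, at position $1$, and thereafter alternates forever between the two distinct two-element sets $\{q_1,\mathit{dead}\}$ and $\{q_0,\mathit{dead}\}$; the true state changes at every position, and the states reachable from either two-element image are all three states. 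So after the single strict collapse you are left with the original problem, not a smaller one. Moreover, even at a strict collapse into an image $\mathrm{Im}_j$ with $|\mathrm{Im}_j|\ge 2$, knowing that the collapse occurred does not determine which element of $\mathrm{Im}_j$ the run occupies---that still requires the state at $j-1$---so the recursion must also reach back into the prefix. Repairing this requires the full holonomy-style bookkeeping (consistent labellings of equal-size images, well-definedness of those labels via aperiodicity of the image-to-image bijections, and a recursion on image size rather than on the reachable state set), which is essentially a proof of the Krohn--Rhodes decomposition for aperiodic automata; your sketch asserts the needed conclusion (``bounded lookback or a strictly smaller residual problem'') without establishing it.

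The paper sidesteps all of this by citing the decomposition as a black box: by Maler's automata-theoretic Krohn--Rhodes theorem, a counter-free DFA is a homomorphic image of a cascade of identity--reset automata; each identity--reset factor is simulated by a single future-masked $\rightmost$ attention that locates the most recent reset symbol, and the cascade is handled by substituting the already-computed state predicates of earlier factors into the input predicates of later ones. A self-contained argument along your lines would in effect reprove that theorem; the Green's-relations route you mention at the end is indeed where such a proof lives, but it is not carried out here.
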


A counter-free automaton can be decomposed using Krohn-Rhodes theory into a cascade of \emph{identity-reset automata}, each of which can be simulated in $\FORASP$.

\Citet{Maler2010} gives the following automata-theoretic version of the Krohn--Rhodes decomposition theorem, which we explain below. 
\begin{theorem}[\citealp{Maler2010}, Theorem 3]
\label{theorem:maler-krohn-rhodes}
For every [deterministic finite] automaton $A$ there exists a cascade decomposition
\[C = B_1 \circ B_2 \circ \cdots \circ B_k\] 
such that the following are true.
\begin{enumerate}
\item\label{item:prop1}  Each $B_i$ is a permutation--reset automaton.
\item\label{item:prop2} There is a homomorphism $\phi$ from $C$ to $A$.
\item\label{item:prop3} Any permutation group in some $B_i$ is homomorphic to a subgroup of the transformation semigroup of $A$.
\end{enumerate}
The pair $(C, \phi)$ is called a cascade decomposition of $A$.
\end{theorem}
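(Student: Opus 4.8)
The plan is to recast the statement as a division result about transformation semigroups and prove it by induction on $|Q|$. Write $S = S(A)$ for the transformation semigroup of $A$, the subsemigroup of $Q^Q$ generated by the letter-maps $\delta_\sigma = \delta(\cdot,\sigma)$. The transformation semigroup of a cascade $C = B_1 \circ \cdots \circ B_k$ is contained in the iterated wreath product $S(B_1) \wr \cdots \wr S(B_k)$, and the existence of a covering homomorphism $\phi$ from $C$ onto $A$ is equivalent to $S$ \emph{dividing} this wreath product, i.e. being a quotient of one of its subsemigroups. Thus the three conditions amount to the single assertion that $S$ divides a wreath product of permutation and reset semigroups in which every permutation group that occurs is a section (subquotient) of $S$. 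This is what I would establish by induction.

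For $|Q| \le 2$ the automaton is already permutation--reset and there is nothing to do. In the inductive step I split on whether $S$ is a group. The \emph{group case}, in which every $\delta_\sigma$ is a bijection so that $S$ is a permutation group $G$ on $Q$, I would handle directly rather than through the size induction: fix a composition series $G = G_0 \supseteq G_1 \supseteq \cdots \supseteq G_m = 1$ of normal subgroups with simple quotients, and realize $G$ as a cascade of permutation automata, one per simple factor, via the Krasner--Kaloujnine embedding of $G$ into the iterated wreath product of its composition factors. Each factor is a permutation automaton whose group is a section of $G = S$, so the group-theoretic condition holds, and the construction terminates with the finite length of the series.

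The \emph{non-group case} is the heart of the argument. If some $\delta_\sigma$ is not a bijection then $S$ contains a map of rank strictly less than $|Q|$, so the images $\operatorname{im}(s) = s(Q)$, for $s$ in $S$ with the identity adjoined, form a family of subsets not all equal to $Q$. I would run the holonomy-style construction: order these images by inclusion, group them by rank into the $\mathcal{J}$-classes of $S$, and take the top component $B_1$ to be a reset automaton whose states record the current image, each letter acting on $B_1$ either by permuting images of equal rank or by collapsing to a strictly smaller image (a reset). Conditioned on the image recorded by $B_1$, the residual action of $S$ is the holonomy action on a transversal of that class, which lives on a set strictly smaller than $Q$; the induction hypothesis then supplies a permutation--reset cascade for it, and the cascade format is precisely what allows $B_2, \dots, B_k$ to read the state of $B_1$ and select the right residual action. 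Composing $B_1$ with these smaller cascades gives the decomposition, with $\phi$ assembled level by level from the subset construction.

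The main obstacle is this non-group step, and in particular reconciling termination with the group-theoretic condition. One must check (i) that the residual transversals are genuinely smaller, so the size induction bottoms out --- this holds because a nontrivial holonomy transversal has size at most the common rank of the images in its class, which is below $|Q|$; (ii) that stitching the per-class cascades beneath the single reset layer $B_1$ yields a cascade that covers $A$, which is the bookkeeping of the subset construction; and (iii) that no permutation group beyond the sections of $S$ is introduced. Point (iii) is the delicate one: each holonomy group is, by definition, the image of the stabilizer of a subset acting by permutation on that subset's transversal, hence a quotient of a subsemigroup of $S$ and so a section of $S(A)$, which is exactly the required group condition. Arranging the order on images and the holonomy groups carefully enough that all three hold simultaneously is where essentially all of the work lies.
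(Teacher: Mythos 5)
The paper does not prove this statement at all: it is imported verbatim as Theorem~3 of \citet{Maler2010} (an automata-theoretic form of the Krohn--Rhodes theorem), so there is no in-paper proof to compare yours against. Measured against the literature, your plan follows essentially the same route as the cited source: recast covering as division of transformation semigroups, induct, handle the group case by a wreath-product embedding, and handle the non-group case by a holonomy-style analysis of the images $s(Q)$. The framing (division formulation, base case $|Q|\le 2$, holonomy groups being sections of $S(A)$) is sound, and the composition-series refinement in the group case, while correct, is more than property~3 demands---it only requires permutation groups dividing $S(A)$, not simple ones.

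However, the non-group step has a concrete gap as written. You let $B_1$ have the images as states, with each letter ``either permuting images of equal rank or collapsing to a strictly smaller image (a reset).'' But the induced action of a letter on the images of a fixed rank need be neither a permutation of that level set nor a constant map: take $Q=\{1,2,3\}$ and $\sigma$ with $1\mapsto 1$, $2\mapsto 2$, $3\mapsto 2$; then the rank-$2$ images $\{1,2\}$ and $\{1,3\}$ both map to $\{1,2\}$, while $\{2,3\}$ drops rank, so the action on the rank-$2$ level is a non-injective, non-constant partial map, and $B_1$ as described is not permutation--reset (nor a ``reset automaton,'' as you at one point call it). This is exactly why the genuine holonomy proofs (Zeiger, Eilenberg, and Maler's exposition) do not coordinatize by the raw current image: the cascade coordinates are \emph{chains} of nested images with tilings (``tiling sequences''), one permutation--reset component per height, where the permutation part is the holonomy group of an image and all other behavior is forced into resets by the bookkeeping of the chain. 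You correctly flag that ``essentially all of the work lies'' in this region, but the specific missing idea is the change of coordinates from images to tiling chains; without it, the induction you set up does not yield permutation--reset components, and the recursion on a single transversal does not assemble into a covering of $A$. A smaller loose end: in the group case, covering $(Q,G)$ by the regular representation requires an extra argument (or a parallel combination) when the action is not transitive.
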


If $B_1 = (\Sigma,Q_1,\delta_1)$ and $B_2 = (Q_1 \times \Sigma, Q_2, \delta_2)$ are DFAs, their \emph{cascade product} $C = B_1 \circ B_2$ is the automaton $(\Sigma,Q_1 \times Q_2,\delta)$ such that 
$\delta(q_1q_2,\asym) = (\delta_1(q_1,\asym),\delta_2(q_2, (q_1, \asym)))$. (To reduce clutter, we write tuples of states without commas.)
The cascade product allows the automaton $B_2$ to see the current state of $B_1$ in deciding what transition to take.
We define iterated cascade product inductively by $B_1 \circ B_2 \circ \cdots \circ B_k = (B_1 \circ B_2 \circ \cdots B_{k-1}) \circ B_k$.
This allows each $B_i$ to see the current state of all $B_j$ with $j \le i$ in deciding what transition to take.
(For readers more familiar with finite transducers (Mealy machines), $B_1$ and $B_2$ could be thought of as transducers 
whose transitions are all of the form
\begin{tikzpicture}[baseline=0,>=stealth',shorten >=1pt,semithick]
\tikzset{state/.append style={inner sep=0pt,outer sep=0pt,minimum size=12pt,anchor=base}}
\node[state](q) at (0,0) {\strut $q$};
\node[state](r) at (1.75,0) {\strut $r$};
\draw[->] (q) edge node[above,outer sep=0pt,inner sep=2pt]{\scriptsize $\asym: (q,\asym)$} (r);
\end{tikzpicture}%
.
Then the cascade product is just composition.)

In Property (\ref{item:prop1}), a \emph{permutation--reset} automaton is a DFA $A = (\Sigma, Q, \delta)$ such that for every $\asym \in \Sigma$, the mapping $q \mapsto \delta(q,\asym)$ is either a permutation (for all $r$, there is a $q$ such that $\delta(q, \asym) = r$) or constant (there is a $q_{\asym}$ such that $\delta(q, \asym) = q_{\asym}$ for all $q$). 

Property (\ref{item:prop2}) says that there is a mapping $\phi$ from the states of $C$ to the states of $A$ such that for any state $q$ of $C$, we have $\phi(\delta_C(q, \asym)) = \delta_A(\phi(q), \asym)$.

Property (\ref{item:prop3}) implies that if the automaton $A$ is counter-free, then all of the automata $B_i$ in the cascade decomposition are identity--reset automata.
An \emph{identity--reset automaton} is a DFA $A = (\Sigma, Q, \delta)$ such that for every $\asym \in \Sigma$, the mapping $q \mapsto \delta(q,\asym)$ is either the identity ($\delta(q, \asym) = q$ for all $q$) or constant (there is a $q_{\asym}$ such that $\delta(q, \asym) = q_{\asym}$ for all $q$). 

For example, consider the automaton
$A_3$ shown in Figure~\ref{fig:automaton-C-3}.
A decomposition of this automaton into a cascade product of three identity--reset automata is shown in Figure~\ref{fig:cascade-for-C-3}.
The global automaton derived from the decomposition is shown in Figure~\ref{fig:global-automaton-for-C-3} with the homomorphism $\phi$ to states of $A_3$.
\begin{figure}
\begin{subfigure}{\linewidth}
	\begin{center}
		\scalebox{0.8}{
			\begin{tikzpicture}[->,>=stealth',shorten >=1pt,auto,node
			distance=1.8cm,semithick,initial text=,initial where=left]
   
			\node[state] (S0) {$0$};
                \node[state] (S1) [right of=S0] {$1$};
                \node[state] (S2) [right of=S1] {$2$};
                \node[state] (S3) [right of=S2] {$3$};

                \path (S0) edge [bend left] node {$\sym{R}$} (S1);
                \path (S0) edge [loop left] node {$\sym{L}$} (S0);

                \path (S1) edge [bend left] node {$\sym{R}$} (S2);
                \path (S1) edge [bend left] node {$\sym{L}$} (S0);

                \path (S2) edge [bend left] node {$\sym{R}$} (S3);
                \path (S2) edge [bend left] node {$\sym{L}$} (S1);

                \path (S3) edge [bend left] node {$\sym{L}$} (S2);
                \path (S3) edge [loop right] node {$\sym{R}$} (S3);
                
			\end{tikzpicture} }   
		\caption{Automaton $A_3$:  move right ($\sym{R}$), move left ($\sym{L}$).}
		\label{fig:automaton-C-3}
	\end{center}
\end{subfigure}
\begin{subfigure}{\linewidth}
	\begin{center}
		\scalebox{0.8}{
			\begin{tikzpicture}[->,>=stealth',shorten >=1pt,auto,node
			distance=1.8cm,semithick,initial text=,initial where=left]
   
			\node[state] (A) {$A$};
                \node[state] (B) [right of=A] {$B$};
                \node[state] (C) [below of=A] {$C$};
                \node[state] (D) [right of=C] {$D$};
                \node[state] (E) [below of=C] {$E$};
                \node[state] (F) [right of=E] {$F$};

                \path (A) edge [bend left] node {$\sym{R}$} (B);
                \path (A) edge [loop left] node {$\sym{L}$} (A);
                \path (B) edge [loop right] node {$\sym{R}$} (B);
                \path (B) edge [bend left] node {$\sym{L}$} (A);

                \path (C) edge [bend left] node {$B,\sym{R}$} (D);
                \path (C) edge [loop left] node {$A,\sym{L}$} (C);
                \path (D) edge [bend left] node {$A,\sym{L}$} (C);
                \path (D) edge [loop right] node {$B,\sym{R}$} (D);
                
                \path (E) edge [bend left] node {$BD,\sym{R}$} (F);
                \path (E) edge [loop left] node {$AC,\sym{L}$} (E);
                \path (F) edge [bend left] node {$AC,\sym{L}$} (E);
                \path (F) edge [loop right] node {$BD,\sym{R}$} (E);

                \end{tikzpicture} }   
		\caption{Cascade of identity-reset automata for $A_3$. Omitted transitions are self-loops; for example, inputs $A,\sym{R}$ and $B,\sym{L}$ are self-loops on $C$ and $D$.} 
		\label{fig:cascade-for-C-3}
	\end{center}
\end{subfigure}
\begin{subfigure}{\linewidth}
\renewcommand{\hom}[1]{(#1)}%
\centering
		\scalebox{0.8}{%
			\begin{tikzpicture}[->,>=stealth',shorten >=1pt,auto,node
			distance=2.5cm,semithick,initial text=,initial where=left]

            \path[use as bounding box] (-1.5cm,1.6cm) rectangle (9cm,-4.1cm);
   
			\node[state] (ACE) {$ACE\hom{0}$};
                \node[state] (BCE) [right of=ACE] {$BCE\hom{1}$};
                \node[state] (ADE) [below of=ACE] {$ADE\hom{1}$};
                \node[state] (BDE) [below of=BCE] {$BDE\hom{2}$};

                \node[state] (ACF) [right of=BCE] {$ACF\hom{1}$};
                \node[state] (BCF) [right of=ACF] {$BCF\hom{2}$};
   			\node[state] (ADF) [below of=ACF] {$ADF\hom{2}$};
                \node[state] (BDF) [right of=ADF] {$BDF\hom{3}$};

                \path (ACE) edge [bend left] node {$\sym{R}$} (BCE);
                \path (ACE) edge [loop left] node[very near end,above] {$\sym{L}$} (ACE);
			\path (BCE) edge [bend left] node {$\sym{L}$} (ACE);
                \path (BCE) edge [bend left] node {$\sym{R}$} (BDE);
                \path (ADE) edge [bend left] node {$\sym{L}$} (ACE);
                \path (ADE) edge [bend left] node {$\sym{R}$} (BDE);
                \path (BDE) edge [bend left] node {$\sym{L}$} (ADE);
                \path (BDE) edge [bend right=55] node [above] {$\sym{R}$} (BDF);
                
                \path (ADF) edge [bend left] node {$\sym{R}$} (BDF);
                \path (ADF) edge [bend left] node {$\sym{L}$} (ACF);
			\path (BDF) edge [bend left] node {$\sym{L}$} (ADF);
                \path (BDF) edge [loop right] node[very near end,below] {$\sym{R}$} (BDF);
                \path (ACF) edge [bend right=55] node [below] {$\sym{L}$} (ACE);
                \path (ACF) edge [bend left] node {$\sym{R}$} (BCF);
                \path (BCF) edge [bend left] node {$\sym{L}$} (ACF);
                \path (BCF) edge [bend left] node {$\sym{R}$} (BDF);
   \end{tikzpicture}}%
		\caption{The global automaton of the cascade product in part (b). For state $q$, the number in parentheses is the state $\phi(q)$ in $A_3$.}
		\label{fig:global-automaton-for-C-3}
\end{subfigure}
\caption{Example automaton and its cascade decomposition.}
\end{figure}

Let $A = (\Sigma, Q, \delta)$ be a DFA and $s \in Q$.
For a string $\isym_1 \cdots \isym_n \in \Sigma^*$, the sequence of states traversed by $A$ from state $s$ on this input is $q_0,\ldots,q_n$, where
$q_0 = s$ and for each $k$, $q_{k+1} = \delta(q_k,\isym_k)$.
A \FORASP{} program $p$
\emph{simulates} $A$ started in state $s$ iff for every input word $\istr \in \Sigma^*$, 
the output Boolean vectors of $p$
on input $\istr$ encode
the sequence of states traversed by $A$ from state $s$ on input $\istr$.
The state at position $i$ is the state before the symbol at position $i$ is read.

\begin{lemma}
    \label{lemma:boolean-transformers-simulate-reset-automata}
    Let $B = (\Sigma, Q, \delta)$ be any identity--reset automaton, and let $s \in Q$ be a start state.  
    There exists a \FORASP{} program $\prog_B$ that simulates $B$ started in state $s$.
\end{lemma}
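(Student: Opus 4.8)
The plan is to exploit the defining property of an identity--reset automaton: on reading a symbol $\asym$, the map $q \mapsto \delta(q,\asym)$ is either the identity or the constant map onto some state $q_\asym$. Consequently, the state of $B$ after reading a prefix is determined entirely by the \emph{most recent reset symbol} occurring in that prefix, and I will recover that symbol with a single rightmost, future-masked attention operation per state of $Q$.

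First I would partition $\Sigma$ into the identity symbols $\Sigma_{\mathrm{id}}$ (on which $q \mapsto \delta(q,\asym)$ is the identity) and the reset symbols $\Sigma_{\mathrm{res}}$ (on which it is constant with value $q_\asym$); should $|Q| = 1$ a symbol can qualify as both, in which case I place it in $\Sigma_{\mathrm{id}}$. The key observation is that, when $B$ is run from $s$, the state at position $i$ (namely $q_{i-1}$) equals $q_{\isym_j}$, where $j$ is the largest index with $j < i$ and $\isym_j \in \Sigma_{\mathrm{res}}$, and equals $s$ when no such $j$ exists. This holds because immediately after the reset at $j$ the state is $q_{\isym_j}$, and every symbol at positions $j+1, \ldots, i-1$ is, by the maximality of $j$, an identity symbol and hence leaves the state unchanged.

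The construction then introduces, for each state $r \in Q$, the single attention operation
\[ R_r(i) := \attrdefault{j}{j<i}{\bigvee_{\asym \in \Sigma_{\mathrm{res}}} Q_\asym(j)}{\bigvee_{\asym \in \Sigma_{\mathrm{res}},\, q_\asym = r} Q_\asym(j)}{[s = r]}, \]
where the default value predicate $[s = r]$ denotes the constant $1$ if $s = r$ and $0$ otherwise (a Boolean combination of zero atoms). The score predicate is satisfied exactly at reset positions, so the index chosen for query $i$ is precisely the most recent reset position before $i$, and the value predicate reads off whether that reset targets $r$. By the key observation, $R_r(i) = 1$ iff the state of $B$ at position $i$ is $r$; and when no reset precedes $i$ the default correctly returns $[s=r]$, covering the boundary position $i=1$ and the case $\Sigma_{\mathrm{res}} = \emptyset$. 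Since the most recent reset targets a unique state (and $s$ is unique), exactly one $R_r$ is true at each position, so $\{R_r\}_{r \in Q}$ is a one-hot encoding of the traversed states; these are the output Boolean vectors of $\prog_B$.

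I expect the construction to be essentially routine. The one point that genuinely needs care is the key observation that nothing occurring between the most recent reset and the current position can alter the state: this is exactly where the identity--reset restriction (rather than a general permutation--reset automaton) is used, and it is what lets a single attention operation suffice. Verifying the one-hot property and the boundary cases is then straightforward.
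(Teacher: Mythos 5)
Your proof is correct and takes essentially the same approach as the paper: a single future-masked rightmost attention per state that finds the most recent reset symbol, with the default value handling the ``still in the start state'' case. The paper's construction is identical except for cosmetic differences in notation (it writes the default explicitly as $0$ for $q \neq s$ and $1$ for $q = s$ rather than your $[s=r]$).
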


\begin{proof}

    For each state $r \in Q$, let $R_r \subseteq \Sigma$ be the state of symbols that reset to $r$ (that is, $\delta(q,\asym) = r$ for all $q \in Q$). Let $R = \bigcup_{r \in Q} R_r$. To determine if $B$ is in state $q \neq s$ before reading $w_i$, it is sufficient to attend to the closest position $j < i$ that contains a symbol from $R$, if any. If $j$ exists and $w_j \in R_q$, then $B$ is in state $q$ at position $i$. Otherwise, it is not.
    
    The case of state $s$ is slightly different. In the case that there is no position $j < i$ that contains a symbol from $R$, then $B$ never left the initial state $s$, so it is still in state $s$ at position $i$.

    In \FORASP, we can define a Boolean vector $B_q(i)$, which is true iff $B$ is in state $q$ at position $i$:
\begin{align*}
    B_q(i)   &:= \attrdefault{j}{j < i}{\;\bigvee_{\mathclap{\asym \in R}} Q_{\asym}(j)}{\;\bigvee_{\mathclap{\asym \in R_q}} Q_{\asym}(j)}{0} \quad \text{for }q \neq s \\    
    B_{s}(i) &:= \attrdefault{j}{j < i}{\;\bigvee_{\mathclap{\asym \in R}} Q_{\asym}(j)}{\;\bigvee_{\mathclap{\asym \in R_s}} Q_{\asym}(j))}{1}.
    \tag*{\qedhere}
\end{align*}
    
\end{proof}
\begin{lemma}
    \label{lemma:boolean-transformers-simulate-cascade-with-reset}
    Let $B_1 = (\Sigma, Q_1, \delta_1)$ be a DFA that can be simulated from state $s_1$ by a \FORASP{} program $\prog_{B_1}$.
    Let $B_2 = (Q_1 \times \Sigma, Q_2, \delta_2)$ be an identity--reset automaton and let $C = B_1 \circ B_2$.
    Then there is a \FORASP{} program $\prog_C$ that simulates $C$ started in state $(s_1,s_2)$ for an arbitrary $s_2 \in Q_2$.
\end{lemma}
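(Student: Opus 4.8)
The plan is to reduce the cascade simulation to the single identity--reset simulation of \cref{lemma:boolean-transformers-simulate-reset-automata}, after first making the ``input'' that $B_2$ reads at each position available as \FORASP{} Boolean vectors. First I would take all the operations of $\prog_{B_1}$ as a prefix of $\prog_C$, so that $\prog_C$ has access to the output vectors of $\prog_{B_1}$; call them $P^1_{q_1}$ for $q_1 \in Q_1$, so that by hypothesis $\istr \models P^1_{q_1}(i)$ iff $B_1$ is in state $q_1$ just before reading $\isym_i$. Recall that in the cascade product, at step $i$ the component $B_2$ reads the pair $(q_1, \isym_i)$, where $q_1$ is exactly the state of $B_1$ before reading $\isym_i$. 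I would therefore define, for each pair $(q_1, \asym) \in Q_1 \times \Sigma$, a position-wise Boolean vector
\[ E_{(q_1, \asym)}(i) := P^1_{q_1}(i) \land Q_{\asym}(i), \]
which is true at exactly those $i$ where the input fed to $B_2$ at step $i$ equals $(q_1, \asym)$. Since $\prog_{B_1}$ correctly tracks $B_1$, exactly one $E_{(q_1,\asym)}(i)$ is true for each $i$, and together they encode the input stream seen by $B_2$.

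The second step is to run the identity--reset construction of \cref{lemma:boolean-transformers-simulate-reset-automata} on $B_2$, but over its effective input alphabet $Q_1 \times \Sigma$, using the vectors $E_{(q_1,\asym)}$ in place of the initial symbol vectors $Q_{\asym}$. Concretely, for each $r \in Q_2$ let $R_r \subseteq Q_1 \times \Sigma$ be the set of inputs that reset $B_2$ to $r$, and let $R = \bigcup_{r \in Q_2} R_r$. For each $q_2 \neq s_2$ I would define
\[ P^2_{q_2}(i) := \attrdefault{j}{j<i}{\bigvee_{(q_1,\asym)\in R} E_{(q_1,\asym)}(j)}{\bigvee_{(q_1,\asym)\in R_{q_2}} E_{(q_1,\asym)}(j)}{0}, \]
together with the analogous operation for $q_2 = s_2$ in which the default value is $1$ instead of $0$. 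As in \cref{lemma:boolean-transformers-simulate-reset-automata}, the rightmost preceding reset position $j<i$ determines $B_2$'s current state, because every input strictly between $j$ and $i$ acts as the identity on $Q_2$; and if no such $j$ exists, $B_2$ is still in its start state $s_2$, which is why that case carries default $1$. Hence $\istr \models P^2_{q_2}(i)$ iff $B_2$ is in state $q_2$ before reading its $i$-th input.

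Finally, since the state of $C$ at position $i$ is the pair $(q_1,q_2)$ of the two component states, I would take as the output vectors of $\prog_C$ the conjunctions
\[ P^C_{(q_1,q_2)}(i) := P^1_{q_1}(i) \land P^2_{q_2}(i), \qquad (q_1,q_2) \in Q_1 \times Q_2, \]
which encode the state sequence traversed by $C$ from $(s_1,s_2)$. Note that all the score and value predicates above depend only on $j$, so this is a valid \FORASP{} program.

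The only real content is the verification in the second step, which is where I expect the main obstacle to lie: one must check that feeding the $E_{(q_1,\asym)}$ vectors into the identity--reset gadget faithfully reproduces $B_2$'s run on the pair stream. This reduces to the observation that the cascade product decouples cleanly---$B_1$'s trajectory is computed independently by $\prog_{B_1}$, and $B_2$'s transition at step $i$ depends only on the pair $(q_1,\isym_i)$, which the $E$ vectors make explicit---so the correctness argument of \cref{lemma:boolean-transformers-simulate-reset-automata} applies essentially verbatim with $Q_1 \times \Sigma$ as the alphabet.
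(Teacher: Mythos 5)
Your proposal is correct and follows essentially the same route as the paper: the paper likewise defines $Q'_{(q,\asym)}(i) := B_{1,q}(i) \land Q_{\asym}(i)$ (your $E_{(q_1,\asym)}$), substitutes these for the initial symbol vectors in the identity--reset construction of \cref{lemma:boolean-transformers-simulate-reset-automata} applied to $B_2$ over the alphabet $Q_1 \times \Sigma$, and forms the product states by conjunction. The only difference is that you unfold the substituted attention operations explicitly, which the paper leaves implicit.
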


\begin{proof}
    Let $B_{1,q}(i)$ be predicates that test whether $B_1$ is in state $q$ at position $i$ started in state $s_1$, and let $B_{2,q}(i)$ be predicates that test whether $B_2$ is in state $q$ at position $i$ started in state $s_2$ (by \cref{lemma:boolean-transformers-simulate-reset-automata}). 
    Define
    \begin{align*}
    Q'_{(q,\asym)}(i) &:= B_{1,q}(i) \land Q_{\asym}(i) & q &\in Q_1, \asym \in \Sigma
    \end{align*}
    and for every line in the definition of the $B_{2,q}$, replace every occurrence of $Q_{(q,\asym)}$ with $Q'_{(q,\asym)}$. This yields the definition of new predicates $B'_{2,q}$. Then we can define predicates $C_{(q,r)}(i)$ that test whether $C = B_1 \circ B_2$ is in state $(q,r)$ at position $i$:
    \begin{align*}
    C_{(q,r)}(i) &:= B_{1,q}(i) \land B'_{2,r}(i) & q &\in Q_1, r \in Q_2. \tag*{\qedhere}
    \end{align*}
\end{proof}

By induction on $k$ we have the following.
\begin{lemma}
    \label{lemma:boolean-transformers-simulate-k-reset-cascade}
    Let $C = B_1 \circ B_2 \circ \cdots \circ B_k$ be a cascade product such that each $B_i$ is an identity--reset automaton, and $s_i$ is a state of $B_i$.  Then there is a \FORASP{} program $\prog_C$ that can simulate $C$ from state $(s_1,s_2,\ldots,s_k)$.
\end{lemma}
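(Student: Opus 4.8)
The plan is to prove this by a straightforward induction on $k$, using the two preceding lemmas as the base case and the inductive step, respectively. No new construction is needed; the work lies entirely in checking that the types of the automata line up so that the single-step lemma can be applied.

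For the base case $k = 1$, the cascade product $C = B_1$ is a single identity--reset automaton, and \cref{lemma:boolean-transformers-simulate-reset-automata} directly supplies a \FORASP{} program $\prog_{B_1}$ that simulates $B_1$ started in state $s_1$. For the inductive step, suppose $k > 1$ and write $C = C' \circ B_k$, where $C' = B_1 \circ \cdots \circ B_{k-1}$, exactly as dictated by the associativity convention built into the definition of the iterated cascade product. Since $C'$ is itself a cascade product of $k-1$ identity--reset automata, the induction hypothesis yields a \FORASP{} program $\prog_{C'}$ that simulates $C'$ started in state $(s_1, \ldots, s_{k-1})$. Here $C'$ is a DFA over $\Sigma$ whose state set is $Q_1 \times \cdots \times Q_{k-1}$, while $B_k$ is an identity--reset automaton over the input alphabet $(Q_1 \times \cdots \times Q_{k-1}) \times \Sigma$. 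These are precisely the hypotheses of \cref{lemma:boolean-transformers-simulate-cascade-with-reset}, with $C'$ playing the role of its ``$B_1$'' and $B_k$ playing the role of its ``$B_2$''. Applying that lemma produces a \FORASP{} program $\prog_C$ simulating $C = C' \circ B_k$ started in state $\bigl((s_1,\ldots,s_{k-1}),\,s_k\bigr)$, which we identify with $(s_1, \ldots, s_k)$; this completes the induction.

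The only thing to verify carefully is the type-matching at each step, namely that the left factor $C'$ can legitimately be regarded as a single DFA over $\Sigma$ and that the right factor $B_k$ satisfies the identity--reset hypothesis of \cref{lemma:boolean-transformers-simulate-cascade-with-reset}. The former is exactly what the convention $B_1 \circ \cdots \circ B_k = (B_1 \circ \cdots \circ B_{k-1}) \circ B_k$ provides, and the latter holds for every $B_i$ by the hypothesis of the lemma. I therefore do not expect any genuine obstacle: all of the real content has already been discharged in \cref{lemma:boolean-transformers-simulate-reset-automata,lemma:boolean-transformers-simulate-cascade-with-reset}, and this final statement is essentially their formal iteration.
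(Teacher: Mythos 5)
Your proposal is correct and matches the paper's approach exactly: the paper dispatches this lemma with the single phrase ``By induction on $k$,'' relying on \cref{lemma:boolean-transformers-simulate-reset-automata} for the base case and \cref{lemma:boolean-transformers-simulate-cascade-with-reset} for the inductive step, which is precisely the induction you spell out. Your additional care about the type-matching of $C' = B_1 \circ \cdots \circ B_{k-1}$ as the left factor is a reasonable elaboration of what the paper leaves implicit.
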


If we add 
instructions to the program
in Lemma~\ref{lemma:boolean-transformers-simulate-k-reset-cascade} that compute the homomorphism $\phi$ (from property (\ref{item:prop2}) of Theorem~\ref{theorem:maler-krohn-rhodes}) from the states of the cascade product $C$ to the automaton $A$:
\[ A_r(i) := \bigvee_{\mathclap{\substack{q \in Q \\ \phi(q) = r}}} C_q(i) \]
then we get a \FORASP{} program $\prog_A$ that simulates $A$ started in state~$s$. 

Finally, we add to this program position-wise operations $Y_q(i)$ that decide whether $A$ started in state~$s$ ends up in state $q$ \emph{after} reading the symbol at position $i$:
\[ Y_r(i) := \bigvee_{\substack{q \in Q \\ \asym \in \Sigma \\ \delta(q, \asym) = r}} (A_q(i) \land Q_\asym(i)). \]
If $f$ is the final state of $A$, then let $Y_f$ be the output vector of the program.
Since $Y_f(n) = 1$ if and only if $A$ accepts $w$, this completes the proof of \cref{thm:counterfree_to_forasp}.

\end{toappendix}

\section{\muhatstitle}
\label{sec:muhat}

\subsection{Definition}
A \emph{\muhat{} layer with width $d>0$} is a length-preserving function
\begin{align*}
\funcname{layer} \colon (\R^d)^+ &\to (\R^d)^+ \\
(x_1, \ldots, x_{n}) &\mapsto (y_1, \ldots, y_{n}) \\
(\ct_1, \ldots, \ct_{n}) &= \funcname{att}(x_1, \ldots, x_{n}) + (x_1, \ldots, x_n) \yestag\label{eq:selfatt} \\
y_i &= \funcname{ffn}(\ct_i) + \ct_i & i &= 1, \ldots, n.
\end{align*}
The self-attention layer $\funcname{att}$ is specified by
\begin{itemize}
\item A score function, which is a bilinear function $f_S \colon \mathbb{R}^d \times \mathbb{R}^d \to \mathbb{R}$. %
\item A mask, which is $M(i,j) = 1$ (no masking), $M(i,j) = (j < i)$ (strict future masking), or $M(i,j) = (i < j)$ (strict past masking).
\item A tie-breaking function $C$ to select one element of a finite non-empty set $I \subset \N_+$, which is either $C(I) = \min I$ (choose leftmost position) or $C(I) = \max I$ (choose rightmost position).
\item A value function, which is a linear transformation $f_V \colon \R^d \to \R^d$.
\end{itemize}
The layer works as follows, for each $i \in [n]$. 
Let 
\begin{align*}
U_i &= \{ j \in [n] \mid M(i,j) = 1 \} && \text{unmasked positions} \\
B_i &= \{ j \in U_i \mid (\forall j' \in U_i) (f_S(x_i, x_{j'}) \le f_S(x_i, x_j)) \} && \text{best-scoring unmasked positions}
\end{align*}
If $U_i \neq \emptyset$, let $j_i = C(B_i)$ and output $\ct_i = f_V(x_{j_i})$; but if $U_i = \emptyset$, output $\ct_i = \mathbf{0}$.

The function $\funcname{ffn}$ is a feed-forward neural network with $2$ layers and ReLU activations in between.

Then a \emph{\muhat} is a length-preserving function
\begin{align*}
\tfm &\colon \Sigma^+ \to (\R^d)^+ \\
\tfm &= \funcname{layer}_\depth \circ \cdots \circ \funcname{layer}_1 \circ \funcname{emb}
\end{align*}
where 
$\funcname{emb} \colon \Sigma^+ \to (\R^d)^+$ 
is a position-wise function (a word embedding),
and each $\funcname{layer}_\ell$ is a \muhat{} layer.

We write 
$[\tfm(\istr)]_i$ 
for the final activation value at position $i \in [n]$ when $\tfm$ is run on input $\istr$. To use $\tfm$ as a language recognizer, we add an output layer, which linearly projects 
$[\tfm(\istr)]_{n}$ 
to a scalar. If the result is nonnegative, we accept $\istr$; otherwise, we reject.
The exact criterion does not matter much, as the transformers we construct only output $+\tfrac12$ or $-\tfrac12$, and could easily be changed to another convention.
The language recognized by $\tfm$ (with the output layer) is the set of strings it accepts.

Our definition above differs from the standard definition \citep{vaswani-etal-2017-attention} in a few ways 
besides unique-hard attention, which was discussed above in \cref{sec:transformer_variants}. Ours lacks layer normalization and position embeddings, but we add them in \cref{sec:layernorm,sec:position}, respectively. We only use single-head attention; multi-head attention can be simulated by summing the outputs of multiple single-head attentions, or it can be added to the definition, as in \cref{sec:multihead}.
Our attention masking is strict, but we consider non-strict masking in \cref{sec:nonstrict}.

\subsection{Equivalence with \FORASP}
\label{sec:b-rasp-to-masked-hard-attention-transformers}

\begin{theorem} \label{thm:forasp_to_muhat}
For any \FORASP{} program that recognizes a language $L \subseteq \Sigma^+$, there is a \muhat{} (with output layer) that recognizes $L$.
\end{theorem}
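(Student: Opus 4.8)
The plan is to simulate the \FORASP{} program step by step inside the residual stream of a \muhat{}, devoting one coordinate to each Boolean vector and using the two sublayer types to mirror the two kinds of \FORASP{} operation. First I would put the program into a convenient normal form: by \cref{thm:unary_value} and \cref{lem:unary-score} I may assume every attention operation has a score predicate $S(j)$ and value predicate $V(j)$ depending only on the key position $j$. Since a score, value, or default that is a Boolean combination of earlier vectors is itself computable by a position-wise operation, I also precompute each $S(j)$, $V(j)$, and $D(i)$ as its own single Boolean vector, so that after preprocessing every attention operation reads off just a single coordinate for its score, value, and default. The residual stream then carries one coordinate per Boolean vector (all valued in $\{0,1\}$), one constant ``bias'' coordinate fixed to $1$, and a fixed number of reusable scratch coordinates.

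For the operation types: a position-wise operation $P_{\progstep+1}(i) := R(i)$ is realized by a $\funcname{ffn}$ sublayer, since a two-layer $\relu$ network can compute any Boolean function of the (bounded, $\{0,1\}$-valued) input coordinates, writing the result into a fresh coordinate; the word embedding $\funcname{emb}$ handles the position-wise operations that depend only on the input symbol. An attention operation is realized by one attention sublayer followed by its $\funcname{ffn}$. I set the bilinear score $f_S(x_i,x_j)$ to read out the single score coordinate of $x_j$ (using the bias coordinate of $x_i$ to activate the bilinear form), so that $f_S(x_i,x_j) = S(j) \in \{0,1\}$. The mask $M$ is copied directly, as the three cases coincide with those of \FORASP, and the tie-breaking function $C$ is chosen to be $\min$ for $\leftmost$ and $\max$ for $\rightmost$. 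Consequently, whenever some unmasked position has $S(j)=1$, the best-scoring set $B_i$ is exactly the unmasked positions with $S(j)=1$, and $C(B_i)$ is the leftmost/rightmost such position, matching $j_i$; when no unmasked position has $S(j)=1$ (or $U_i=\emptyset$), the head attends to a score-$0$ position (or the attention output is $\mathbf 0$).

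To recover the default in these latter cases I would have the linear value function $f_V$ write into scratch coordinates both $V(j_i)$ and the score indicator $S(j_i)$. The subsequent $\funcname{ffn}$ then sets $P_{\progstep+1}(i) = V(j_i)$ if the returned indicator is $1$ and $P_{\progstep+1}(i) = D(i)$ otherwise; since the attention output is $\mathbf 0$ when $U_i=\emptyset$ and the returned indicator is $0$ whenever no matching position exists, both ``no $j_i$'' cases correctly fall through to $D(i)$. The residual connections are managed by letting $f_V$ write only into scratch coordinates, so the persistent $\{0,1\}$ coordinates pass through the attention sublayer unchanged, and by having each $\funcname{ffn}$ reset the scratch coordinates to $0$ (outputting the negation of their current bounded values) after reading them, restoring the encoding invariant at the end of every layer. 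Operations are scheduled so that each attention operation's precomputed score, value, and default vectors are produced by the $\funcname{ffn}$ or $\funcname{emb}$ of an earlier layer; this is always possible because position-wise computations compose and the depth is unbounded.

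Finally, carrying $Y$ in a designated coordinate and taking the output projection to be $(\,\cdot\,) \mapsto Y - \tfrac12$ makes the \muhat{} accept at position $n$ exactly when $Y(n)=1$, i.e.\ exactly when the \FORASP{} program accepts. I expect the main obstacle to be the bookkeeping around the attention sublayer: cleanly separating the genuine ``matching position found'' case from the two fall-through-to-default cases ($U_i=\emptyset$ and no $S(j)=1$ position) using only a bilinear score, a linear value, and residual-respecting scratch coordinates. The reductions to unary score and value are what make the bilinear and linear restrictions harmless, at the cost (for the score) of the possible exponential blowup inherited from \cref{lem:unary-score}.
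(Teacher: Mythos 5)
Your proof is correct and follows the same overall strategy as the paper's: induct over the program's operations, dedicate one residual-stream coordinate per Boolean vector, realize position-wise operations with FFNs (via \cref{thm:bool_to_ffnn}-style DNF arguments), realize attention operations with a hard-attention sublayer, and resolve the two ``no $j_i$'' cases in the following FFN. The one genuine divergence is how the score predicate is implemented. The paper keeps the binary predicate $S(i,j)$: it writes $S$ in full DNF, splits each conjunct into an $i$-part $\alpha_\ell$ and a $j$-part $\beta_\ell$, has a first FFN compute the vectors $\vec\alpha(i)$ and $\vec\beta(i)$ into scratch space, and then realizes $S(i,j)=\sum_\ell \alpha_\ell(i)\beta_\ell(j)$ directly as the bilinear form $f_S$. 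You instead invoke \cref{lem:unary-score} to make $S$ depend only on $j$ and then read it off with a trivial bilinear form against a bias coordinate. Both are sound, but your route inherits the exponential blowup of \cref{lem:unary-score} in the \emph{number of attention operations}, hence in the number of transformer layers; the paper's route pays only in width (the length $m$ of the DNF) and keeps one program operation per constructed layer, which is what makes the depth-preserving refinement (\cref{thm:forasp_to_muhat_depth}) and the depth-hierarchy results possible. Your treatment of the default case---returning $S(j_i)$ alongside $V(j_i)$ as an indicator and branching in the FFN, so that both the ``all scores zero'' and the ``$U_i=\emptyset$'' cases fall through to $D(i)$---is a clean alternative to the paper's two-component default flag that the value function flips, and it is correct as stated.
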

\begin{proof}
See \cref{sec:forasp-to-muhat-proof}. Attention layers simulate attention operations, and FFNs simulate position-wise operations. 
\end{proof}

\begin{toappendix}

\subsection{Proof of \cref{thm:forasp_to_muhat} (\FORASP{} to \muhats)}
\label{sec:forasp-to-muhat-proof}

We will make use of the following lemma repeatedly:
\begin{lemma} \label{thm:bool_to_ffnn}
Any function $f \colon \{0,1\}^d \to \{0,1\}^d$ can be computed by a two-layer FFN with ReLU activation functions.
\end{lemma}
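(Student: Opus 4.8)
The plan is to compute $f$ via a one-hot ``detector'' construction, exploiting the fact that the input is Boolean and so ranges over only finitely many values. Writing $f = (f_1, \dots, f_d)$ with each $f_k \colon \{0,1\}^d \to \{0,1\}$, it suffices to make the hidden layer record which of the $2^d$ possible inputs was seen, and then let the linear output layer read off the correct output bits.

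First I would construct, for every $a \in \{0,1\}^d$, a single hidden $\relu$ unit that serves as the indicator of the event $x = a$. The natural choice is the affine preactivation
\[ g_a(x) = \sum_{i \,:\, a_i = 1} x_i + \sum_{i \,:\, a_i = 0} (1 - x_i) - (d-1), \]
which, on Boolean inputs, equals the number of coordinates on which $x$ agrees with $a$, minus $d-1$. One checks that $g_a(a) = 1$ while $g_a(x) \le 0$ for every $x \ne a$, so $\relu(g_a(x))$ is exactly $1$ when $x = a$ and $0$ otherwise. Collecting these units over all $a$ gives the first (affine followed by $\relu$) layer, with hidden width $2^d$.

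Next I would choose the output weights so that $y_k = \sum_{a \,:\, f_k(a) = 1} \relu(g_a(x))$ for each coordinate $k$. Since exactly one detector fires on any Boolean input, namely the one for $a = x$, this linear combination evaluates to $f_k(x)$; stacking the coordinates realizes $f$ in the form $W_2\,\relu(W_1 x + b_1)$ with $b_2 = 0$. This is precisely a two-layer FFN with a $\relu$ in between, as required by the statement.

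The construction is completely explicit, so there is no genuine obstacle: the only things to verify are the elementary threshold arithmetic for $g_a$ and the ``exactly one unit fires'' claim. The one caveat worth flagging is the exponential hidden width $2^d$, which is harmless here because $d$ is a fixed constant (the width of the layer); a width polynomial in a DNF or circuit for $f$ is also achievable by building $\relu$ AND-gates for the conjunctive terms and OR-ing them, but the one-hot version is the cleanest for this lemma.
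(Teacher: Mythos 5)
Your proposal is correct and is essentially the paper's proof in different clothing: your one-hot detector $\relu\bigl(\sum_{i:a_i=1}x_i+\sum_{i:a_i=0}(1-x_i)-(d-1)\bigr)$ is exactly the full-DNF minterm computed via $\relu(b_1+\cdots+b_d-(d-1))$ with $b_i\in\{x_i,1-x_i\}$, and your output layer summing the detectors with $f_k(a)=1$ is the paper's observation that the disjunction reduces to addition because at most one conjunction fires.
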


\begin{proof}
Any Boolean function can be written in \emph{full disjunctive normal form} (DNF), which is a disjunction of conjunctions, and each conjunction is a conjunction of one positive or negative literal for each of the arguments, so that at most one conjunction is $1$ for any assignment.

Each component of $f$ can be put into full DNF and computed by a two-layer FFN with ReLU activation functions. 
The first layer computes all the negations and conjunctions, using the fact that
for any Boolean values $b_1$, $b_2, \ldots, b_m$ we have
\begin{align*}
    \neg b_1 &= 1 - b_1\\
    b_1 \wedge b_2 \wedge \ldots \wedge b_m &= \relu(b_1 + b_2 + \ldots + b_m - (m-1)).
\end{align*}
The second layer computes the disjunction simply by adding the values of the conjunctions.
\end{proof}

Let $\prog$ be a \FORASP{} program with Boolean vectors $P_1, \ldots, P_\proglen$.
We say that a \muhat{} $\tfm$ with width $d \ge \proglen$ \emph{simulates} $\prog$ iff for every input $\istr \in \Sigma^+$ with length $n$, we have, for all $i \in [n]$ and $\progstep \in [\proglen]$,
\begin{align*}
[\tfm(\istr)]_{i,\progstep} = \begin{cases}
1 &\text{if $\istr \models P_\progstep(i)$} \\
0 &\text{otherwise.}
\end{cases}
\end{align*}

\Cref{thm:forasp_to_muhat} follows from the following lemma:
\begin{lemma} \label{thm:basic-transformer-simulates-forasp}
Let $\prog$ be a \FORASP{} program. There exists a \muhat{} $\tfm_\prog$ that simulates $\prog$.
\end{lemma}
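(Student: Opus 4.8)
The plan is to build $\tfm_\prog$ by induction on the operations of $\prog$, maintaining the invariant that after the layers simulating the first $\progstep$ operations, coordinate $\progstep$ of the residual stream at position $i$ holds the bit $P_\progstep(i)$, while all coordinates take values in $\{0,1\}$. Before starting I would invoke the unary normal forms (\cref{lem:unary-score} and \cref{thm:unary_value}) so that every attention operation has a score predicate $S(j)$ and value predicate $V(j)$ depending on $j$ alone; this is exactly what allows a \emph{bilinear} score function to realize the Boolean selection. The word embedding $\funcname{emb}$ places the one-hot encoding of $\isym_i$ into coordinates $1,\dots,|\Sigma|$ (these are the initial vectors $Q_\asym$), reserves one constant ``bias'' coordinate equal to $1$, and initializes every remaining scratch coordinate to $0$. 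Throughout, the construction only ever writes into coordinates that are currently $0$, never overwriting an occupied coordinate with a different value; the width grows linearly in the size of $\prog$.

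That last convention is what makes the feed-forward networks usable, and it is a subtlety worth stating explicitly. Because a layer outputs $y_i = \funcname{ffn}(\ct_i)+\ct_i$, realizing a target vector $u$ requires the network to compute the \emph{increment} $u-\ct_i$; this increment is Boolean-valued precisely when $u$ agrees with $\ct_i$ on every occupied coordinate (increment $0$) and differs only on fresh coordinates that were $0$ (increment in $\{0,1\}$). Under this discipline \cref{thm:bool_to_ffnn} applies directly. A position-wise operation $P_{\progstep+1}(i) := R(i)$ is then the easy case: I use one layer whose value function is identically $0$, so the residual makes the attention sublayer the identity ($\ct_i = x_i$), and whose FFN adds $R(i)$ into the fresh coordinate $\progstep+1$ and $0$ everywhere else.

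The substance is an attention operation, say $P_{\progstep+1}(i) := \attrdefault{j}{M(i,j)}{S(j)}{V(j)}{D(i)}$. I would use one preparatory layer (identity attention plus FFN) to compute $S(j)$ and $V(j)$ into scratch coordinates $s$ and $v$, followed by one attention layer realizing the selection. In that layer the mask $M$ is copied over unchanged; the bilinear score is $f_S(x_i,x_j) = [x_i]_{\text{bias}}\cdot[x_j]_s = S(j)$, so that unmasked positions with $S(j)=1$ strictly outscore those with $S(j)=0$; and the tie-breaking $C$ is $\min$ for $\leftmost$, $\max$ for $\rightmost$. Hence the selected $j_i$ is exactly the leftmost/rightmost unmasked position with $S(j)=1$ whenever one exists, matching the \FORASP{} semantics. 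The device for the default is to let the linear value function copy \emph{both} $V(j)$ and $S(j)$ into two fresh coordinates $o_1,o_2$; after the residual adds the zero-initialized stream, position $i$ holds $V(j_i)$ in $o_1$ and $S(j_i)$ in $o_2$. A final FFN in this layer writes into coordinate $\progstep+1$ the value $(S(j_i)\land V(j_i))\lor(\lnot S(j_i)\land D(i))$, reading $D(i)$ off coordinates $1,\dots,\progstep$ of $x_i$.

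The main obstacle—and the step I expect to require the most care—is reconciling the default value with hard attention, because transformer attention always returns \emph{some} position (and returns $\mathbf 0$ when $U_i=\emptyset$), whereas \FORASP{} falls back to $D(i)$ when no valid $j$ exists. The read-back coordinate $o_2$ resolves this: when no unmasked position satisfies $S$, the selected position has $S(j_i)=0$, and when $U_i=\emptyset$ the attention output is $\mathbf 0$ so $o_2=0$ as well; in both cases the final FFN correctly emits $D(i)$. I would verify this case analysis carefully, including position $1$ under future masking and the symmetric $\leftmost$/past-masking and non-masking variants, and confirm that each layer leaves coordinates $1,\dots,\progstep$ untouched so the induction hypothesis is preserved. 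The accept/reject output layer then reads the coordinate of the designated output vector $Y$ at position $n$, which yields \cref{thm:forasp_to_muhat}.
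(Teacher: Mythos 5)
Your proof is correct, and its skeleton matches the paper's: induction on the operations of $\prog$, one-hot word embeddings, \cref{thm:bool_to_ffnn} for position-wise operations, \cref{thm:unary_value} for the value predicate, and, for each attention operation, a preparatory layer feeding a hard-attention layer that selects the extremal unmasked position of maximal score, followed by an FFN that falls back to $D(i)$. The genuine divergence is your appeal to \cref{lem:unary-score} to unarize the score predicate before building the bilinear score. The premise that a bilinear score function needs a unary $S$ is not right: the paper instead writes $S(i,j)$ in full DNF as $\bigvee_{\ell=1}^{m}\bigl(\alpha_\ell(i)\land\beta_\ell(j)\bigr)$, stores $\vec\alpha(i)$ in the query coordinates and $\vec\beta(i)$ in the key coordinates, and the bilinear form $\sum_{\ell}\alpha_\ell(i)\,\beta_\ell(j)$ equals $S(i,j)$ exactly because the full-DNF conjuncts are mutually exclusive. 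Your route is sound but pays the exponential blowup of \cref{lem:unary-score} in the number of operations, hence in the number of transformer layers; the paper confines the blowup to the width of a single layer and keeps one (double) layer per attention operation, which is what lets the construction later be sharpened into the depth-preserving \cref{thm:forasp_to_muhat_depth}. Your treatment of the default value is also slightly different and arguably cleaner: reading back $S(j_i)$ alongside $V(j_i)$ and branching in the final FFN handles both failure modes ($U_i=\emptyset$, and $U_i\neq\emptyset$ but no unmasked position satisfies $S$) uniformly, whereas the paper handles the second mode by rewriting the value predicate to $(S\land V)\lor(\lnot S\land D)$ and the first by an explicit default flag that the attention value function flips. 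Both mechanisms are correct.
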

\begin{proof}
We prove that the first $\progstep$ operations of $\prog$ can be simulated by a \muhat, by induction on $\progstep$. 

The base case is $\progstep = |\Sigma|$. 
For $\coord \in 1, \ldots, |\Sigma|$, let $\sigma_{\coord}$ be the $\coord$-th symbol in $\Sigma$. Let $\funcname{emb}(\sigma_{\coord})$ be the one-hot vector with $[\funcname{emb}(\sigma_{\coord})]_{\coord} = 1$.

For the inductive step, assume that the Boolean vectors $P_1, \ldots, P_\progstep$ can be simulated by a \muhat. We want to show that $P_{\progstep+1}$ can be simulated as well.

If $P_{\progstep+1}(i)$ is a Boolean combination of $\{P_1(i), \ldots, P_\progstep(i)\}$, it can be computed by a two-layer FFN by \cref{thm:bool_to_ffnn}.

The most important case is if $P_{\progstep+1}(i)$ is an attention operation, either of
\begin{align*}
P_{\progstep+1}(i) &:= \attldefault{j}{M(i,j)}{S(i,j)}{V(i,j)} {D(i)} \\
P_{\progstep+1}(i) &:= \attrdefault{j}{M(i,j)}{S(i,j)}{V(i,j)}{D(i)}.
\end{align*}
We only show the $\rightmost$ case; $\leftmost$ is similar.

We need to slightly modify the value predicate: \[ V'(i,j) = (S(i,j) \land V(i,j)) \lor (\neg S(i,j) \land D(i)). \]
This does not change the behavior of $P_{\progstep+1}$ (because $S(i,j)$ is always true when evaluating the value predicate), but it will preserve the correct behavior in a transformer, where attention attends to the leftmost \emph{maximum} score.
By \cref{thm:unary_value}, the attention operation can be rewritten so that that the value predicate depends only on $j$. So, without loss of generality, assume that there is a Boolean function $g$ such that
\[ V'(i,j) = g(P_1(j), \dots, P_\progstep(j)). \] 

The score predicate $S(i,j)$ can be written in full DNF in terms of the $P_\ell(i)$ and $P_\ell(j)$.
We separate each conjunction of $S(i,j)$ into a conjunction of literals depending on $i$ and a conjunction of literals depending on $j$. Thus, there is a collection of Boolean functions $\alpha_\ell$ and $\beta_\ell$ such that
\begin{align*}
S(i,j) &= \bigvee_{\ell=1}^m \left(\alpha_\ell(P_1(i), \ldots, P_\progstep(i)) \land \beta_\ell(P_1(j), \ldots, P_\progstep(j)) \right).
\end{align*}

We construct two layers, as follows.
For brevity, we write
\begin{align*}
\vec{P}(i) &= \begin{bmatrix} P_1(i) \\ \vdots \\ P_\progstep(i) \end{bmatrix} \\
\vec{\alpha}(v_1, \ldots, v_\progstep) &= \begin{bmatrix} \alpha_1(v_1, \ldots, v_\progstep) \\ \vdots \\ \alpha_m(v_1, \ldots, v_\progstep) \end{bmatrix} &
\vec{\beta}(v_1, \ldots, v_\progstep) &= \begin{bmatrix} \beta_1(v_1, \ldots, v_\progstep) \\ \vdots \\ \beta_m(v_1, \ldots, v_\progstep). \end{bmatrix}
\end{align*}
and for any $m$ we write $\mathbf{0}^m$ for the $m$-dimensional zero vector.

Assume that the input to the first layer is
\[
x^{(1)}_i
= 
\begin{bNiceMatrix}[last-col,code-for-last-col={\;}] \vec{P}(i) & \text{Boolean vectors already computed} \\ 0 \vphantom{0^T} & \text{Boolean vector to be computed} \\ \mathbf{0}^{\proglen-\progstep-1} & \text{Boolean vectors not yet computed} \\ \vdots \\ 
\begin{matrix}
\mathbf{0}^m \\ \mathbf{0}^m \\ 0 \\ 0 \\ 0 \\ 0
\end{matrix} \; %
& \left.\vphantom{\begin{matrix}
\mathbf{0}^m \\ \mathbf{0}^m \\ 0 \\ 0 \\ 0 \\ 0
\end{matrix}}\right\}\,\text{scratch space} \\ 
\vdots 
\end{bNiceMatrix}
\]

The first self-attention has value vectors set to $\mathbf{0}$, so that the residual connection computes the identity function ($\ct^{(1)}_i = x^{(1)}_i$), and the first position-wise FFN can be constructed, by \cref{thm:bool_to_ffnn}, so that
\begin{align}
x^{(2)}_i &= f_P^{(1)}\mleft(\ct^{(1)}_i\mright) + \ct^{(1)}_i \notag \\
     &= \begin{bNiceMatrix}[last-col,code-for-last-col={\;}]
     \vec{P}(i) \\ 0 \\ \mathbf{0}^{\proglen-\progstep-1} \\ \vdots \\ 
     \vec{\alpha}(P_1(i), \ldots, P_\progstep(i)) & \text{query} \\ 
     \vec{\beta}(P_1(i), \ldots, P_\progstep(i)) & \text{key} \\ 
     \begin{matrix}
     0 \\ 1 
     \end{matrix}\; & \Bigr\}\,\text{set default flag to true} \\ 
     g(P_1(i), \ldots, P_\progstep(i)) & \text{value} \\ 
     0 \\ \vdots \end{bNiceMatrix} \label{eq:ffnn1}
\end{align}

In the second layer, the self-attention has mask $M$.
The score function is 
\begin{align*}
f_S^{(2)}\mleft(x^{(2)}_i, x^{(2)}_j\mright) &= \left(x^{(2)}_i\right)^\top W^S \; x^{(2)}_j \\
&= 
\begin{bmatrix}
\vdots \\
\vec{\alpha}(P_1(i), \ldots, P_\progstep(i)) \\ \vec{\beta}(P_1(i), \ldots, P_\progstep(i)) \\
\vdots
\end{bmatrix}^\top
\begin{bmatrix}
\ddots \\
& \mathbf{0}^{m \times m} & \mathbf{I}^{m \times m} \\
& \mathbf{0}^{m \times m} & \mathbf{0}^{m \times m} \\
& & & \ddots
\end{bmatrix} \;
\begin{bmatrix}
\vdots \\
\vec{\alpha}(P_1(j), \ldots, P_\progstep(j)) \\ \vec{\beta}(P_1(j), \ldots, P_\progstep(j)) \\
\vdots
\end{bmatrix} \\
&= \sum_{\ell=1}^m \alpha_\ell(P_1(i), \ldots, P_\progstep(i)) \; \beta_\ell(P_1(j), \ldots, P_\progstep(j)) \\
&= S(i,j)
\end{align*}
where $\mathbf{0}^{m\times m}$ and $\mathbf{I}^{m\times m}$ are the $m\times m$ zero and identity matrices.

The value function $f_V^{(2)}$ is such that for any $j \in [n]$,
\begin{align*}
f_V^{(2)}\mleft(x^{(2)}_j\mright) &= \begin{bNiceMatrix}[last-col,code-for-last-col={\;}] 
\mathbf{0}^\progstep \\ 0 \\ \mathbf{0}^{\proglen-\progstep-1} \\ \vdots \\ \mathbf{0}^m \\ \mathbf{0}^m \\ 
\begin{matrix}
1 \\ -1
\end{matrix}\; & \Bigr\}\,\text{change default flag to false}\\ 
0 \\ g(P_1(j), \ldots, P_\progstep(j)) & \text{value} \\ \vdots \end{bNiceMatrix}
\end{align*}
So the output of the second self-attention, after the residual connection, is as follows.
\begin{equation*}
\begin{array}{@{}r@{}l@{}r@{}l@{}}
\multicolumn{2}{@{}l}{\text{If $i>1$:}} & \multicolumn{2}{@{}l}{\text{If $i=1$:}} \\[2ex]
\ct^{(2)}_i &{}= f_V^{(2)}\mleft(x^{(2)}_{j_i}\mright) + x^{(2)}_i & \ct^{(2)}_i &{}= \mathbf{0} + x^{(2)}_i \\[2ex]
&= \begin{bNiceMatrix}[last-col,code-for-last-col={\;}] \vec{P}(i) \\ 0 \\ \mathbf{0}^{\proglen-\progstep-1} \\ \vdots \\ \vec{\alpha}(P_1(i), \ldots, P_\progstep(i)) \\ \vec{\beta}(P_1(i), \ldots, P_\progstep(i)) \\ 
\begin{matrix}
1 \\ 0 
\end{matrix} & \Bigr\}\,\text{default flag (false)} \\ 
g(P_1(i), \ldots, P_\progstep(i)) \\ g(P_1(j_i), \ldots, P_\progstep(j_i)) & \text{attended-to value} \\ \vdots 
\end{bNiceMatrix}
&&= \begin{bNiceMatrix}[last-col,code-for-last-col={\;}] \vec{P}(i) \\ 0 \\ \mathbf{0}^{\proglen-\progstep-1} \\ \vdots \\ \vec{\alpha}(P_1(i), \ldots, P_\progstep(i)) \\ \vec{\beta}(P_1(i), \ldots, P_\progstep(i)) \\ 
\begin{matrix}
0 \\ 1 
\end{matrix} & \Bigr\}\,\text{default flag (true)} \\ 
g(P_1(i), \ldots, P_\progstep(i)) \\ 0 & \text{no value} \\ \vdots \end{bNiceMatrix}
\end{array}
\end{equation*}

The second feed-forward network $f_P^{(2)}$ checks the default flag.
If it is $\begin{bsmallmatrix}1 \\ 0 \end{bsmallmatrix}$, it copies the attended-to value $g(P_1(j_i), \ldots P_\progstep(j_i))$ to the $(\progstep+1)$-st coordinate.
If it is $\begin{bsmallmatrix} 0 \\ 1 \end{bsmallmatrix}$, it computes $D(i)$ (using \cref{thm:bool_to_ffnn}) in the $(\progstep+1)$-st coordinate.
Thus, the output after the residual connection is as follows. 
\begin{equation*}
\begin{array}{@{}r@{}l@{\qquad}r@{}l@{}}
\multicolumn{2}{@{}l}{\text{If $i>1$:}} & \multicolumn{2}{@{}l}{\text{If $i=1$:}} \\[2ex]
y^{(2)}_i &{}= f_P^{(2)}\mleft(\ct^{(2)}_i\mright) + \ct^{(2)}_i & y^{(2)}_i &{}= f_P^{(2)}\mleft(\ct^{(2)}_i\mright) + \ct^{(2)}_i \\[2ex]
&= \begin{bNiceMatrix}[last-col,code-for-last-col={\;}]
\vec{P}(i) \\
g(P_1(j_i), \ldots, P_\progstep(j_i)) & \text{answer} \\ 
\mathbf{0}^{\proglen-\progstep-1} \\
\vdots \\ 
\vec{\alpha}(P_1(i), \ldots, P_\progstep(i)) \\ 
\vec{\beta}(P_1(i), \ldots, P_\progstep(i)) \\ 
1 \\ 0 \\ 
g(P_1(i), \ldots, P_\progstep(i)) \\ 
g(P_1(j_i), \ldots, P_\progstep(j_i)) \\ 
\vdots 
\end{bNiceMatrix}
&&= \begin{bNiceMatrix}[last-col,code-for-last-col={\;}]
\vec{P}(i) \\ D(i) & \text{answer} \\ \mathbf{0}^{\proglen-\progstep-1} \\ \vdots \\ \vec{\alpha}(P_1(i), \ldots, P_\progstep(i)) \\ \vec{\beta}(P_1(i), \ldots, P_\progstep(i)) \\ 
0 \\ 1 
\\ g(P_1(i), \ldots, P_\progstep(i)) \\ 0 \\ \vdots \end{bNiceMatrix}
\end{array}
\end{equation*}
In either case, the $(t+1)$-st coordinate is now $1$ if $w \models P_{t+1}(i)$ and $0$ otherwise.
\end{proof}

The last step is to construct the output layer, which simply projects the final-layer activation vectors down to the coordinate that simulates $Y$ and subtracts $\tfrac12$. This completes the proof of 
\cref{thm:forasp_to_muhat}.
\end{toappendix}

\begin{toappendix}
\subsection{Proof of \cref{thm:muhat_to_forasp} (\muhats{} to \FORASP)}
\label{sec:muhat_to_forasp_proof}

The key to the translation from \muhats{} to \FORASP{} is the following lemma:

\begin{lemma} \label{thm:finite}
Let $\tfm$ be a \muhat. There is a finite set $\mathbb{F} \subseteq \mathbb{R}$ such that for all input strings $\istr$, all the attention scores and activations computed by $\tfm$ on input $\istr$ belong to $\mathbb{F}$.
\end{lemma}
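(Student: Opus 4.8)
The plan is to induct on the layers of $\tfm$, propagating at each layer not the individual scalar values but the finite \emph{set of activation vectors} that can occur at any position, independent of the input length $n$. The crux---and the reason the lemma holds at all---is that hard attention \emph{selects} a single value rather than averaging: the output of $\funcname{att}$ at a position $i$ is either $f_V(x_{j_i})$ for the selected position $j_i$, or $\mathbf{0}$ when $U_i = \emptyset$. Hence it always lies in the image of $f_V$ on the input vectors, together with $\mathbf{0}$, which is a finite set whenever the set of input vectors is finite. This is precisely where soft attention would break: a convex combination of a growing number of value vectors can produce unboundedly many distinct reals as $n$ grows.

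Concretely, I would first set $V_0 = \{\funcname{emb}(\asym) \mid \asym \in \Sigma\}$; since $\Sigma$ is finite, $V_0$ is finite, and every vector fed to the first layer belongs to $V_0$. For the inductive step, suppose that for all inputs $\istr$ every activation vector entering layer $\ell$ lies in a fixed finite set $V_{\ell-1} \subseteq \R^d$. I would then track finiteness through the stages of the layer. The attention scores $f_S(x_i, x_j)$ all lie in the finite set $\funcname{Scores}_\ell = \{f_S(x, x') \mid x, x' \in V_{\ell-1}\}$, because $f_S$ is a fixed bilinear map and $V_{\ell-1}$ is finite. The output of $\funcname{att}$ at each position lies in the finite set $\{f_V(x) \mid x \in V_{\ell-1}\} \cup \{\mathbf{0}\}$ by the key observation above; crucially, the mask $M$ and the tie-breaking function $C$ only determine \emph{which} position $j_i$ wins, not that its value escapes this finite set. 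After the residual connection and the fixed feed-forward network with its own residual connection, the layer output $y_i$ lies in a finite set obtained by applying fixed functions to, and taking sums of elements from, finitely many finite sets; I would take this to be $V_\ell$.

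Since $\tfm$ has a fixed finite number of layers $\depth$, I would finally define $\mathbb{F}$ to be the union, over all layers $\ell \in \{1, \ldots, \depth\}$, of $\funcname{Scores}_\ell$ together with every coordinate entry of every vector in $V_0 \cup V_1 \cup \cdots \cup V_\depth$ (and, to cover the intermediate $\ct_i$ and attention outputs as well, their coordinates too). As a finite union of finite sets, $\mathbb{F}$ is finite, and by construction it contains every attention score and every activation that $\tfm$ computes on any input, with $\mathbb{F}$ depending only on $\tfm$ and not on $\istr$.

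The one step that genuinely uses the hypotheses---and the only place any care is needed---is the attention stage: I must confirm that hard selection keeps the attended value inside the finite image of $f_V$ uniformly over all $n$ and over every possible outcome of the masked, tie-broken competition for the maximum score. Everything else is routine propagation of finiteness through the fixed maps $\funcname{emb}$, $f_S$, $f_V$, and $\funcname{ffn}$, relying only on the finiteness of $\Sigma$, $d$, and $\depth$.
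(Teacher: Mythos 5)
Your proposal is correct and follows essentially the same route as the paper's proof: an induction on layers tracking the finite set of possible activation vectors, with the key observation that hard attention outputs either $f_V(x_{j_i})$ for a single selected position or $\mathbf{0}$, so the attended value stays in the finite image of $f_V$ regardless of $n$. The only difference is cosmetic --- the paper additionally computes an explicit cardinality bound of $(|\Sigma|+1)^{2^\depth}-1$ on the number of distinct activation vectors at depth $\depth$, whereas you argue finiteness abstractly, which suffices for the lemma as stated.
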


\begin{proof}
We prove that regardless of the input, layer $\depth$ has at most $(|\Sigma|+1)^{2^\depth}-1$ different possible output activation vectors, by induction on $\depth$. The base case is just the embedding function. Since there are no position embeddings, the embedding at position $i$ is determined entirely by~$\isym_i$, so there are at most $|\Sigma| \le (|\Sigma|+1)^{2^\depth}-1$ possible activation vectors.

Assume that $\tfm$ has $(\depth+1)$ layers and that layer $\depth$ has at most $(|\Sigma|+1)^{2^\depth}-1$ possible activation vectors. 
The self-attention's output at position $i$ depends only on two vectors: (1) layer $\depth$'s output at position $i$ (because of the residual connection) and (2) either layer $\depth$'s output at position $j_i$ (the position that $i$ attends to) or $\mathbf{0}$ (if there are no unmasked positions). Thus the number of possible activation vectors that the self-attention can output is at most
\begin{align*}
\left((|\Sigma|+1)^{2^\depth}-1\right)(|\Sigma|+1)^{2^\depth} &= \left((|\Sigma|+1)^{2^\depth}\right)^2 - (|\Sigma|+1)^{2^\depth} \\ 
&= (|\Sigma|+1)^{2^{\depth+1}} - (|\Sigma|+1)^{2^\depth} \\
&\le (|\Sigma|+1)^{2^{\depth+1}} - 1.
\end{align*}
And the number of possible activation vectors that the position-wise FFN can output is also at most $(|\Sigma|+1)^{2^{\depth+1}}-1$.

As for the attention scores, at layer $(\depth+1)$, every attention score depends on two activation vectors from layer $\depth$, so there are at most $\bigl((|\Sigma|+1)^{2^\depth}-1\bigr)^2 \le (|\Sigma|+1)^{2^\depth}-1$ possible scores. 

Then~$\mathbb{F}$ is the union over all layers of the possible attention scores and components of the possible activation vectors.
\end{proof}

So any attention score or component of an activation vector can be represented using $B = \lceil \log_2 |\mathbb{F}|\rceil$ bits. Define a mapping $\langle\mathord\cdot\rangle \colon \mathbb{F} \to \{0, \ldots, 2^B-1\}$ such that $u < v$ iff $\langle u\rangle < \langle v\rangle$, and write $\langle v\rangle_b$ for the bit of $\langle v \rangle$ with place value $2^b$.

\begin{lemma} \label{thm:finite_function}
For any function $f \colon \mathbb{F} \to \mathbb{F}$, there are Boolean formulas $\phi_b(x_1, \ldots, x_B)$ for $b \in [B]$ such that for any $x \in \mathbb{F}$, $\phi_b(\langle x\rangle_1, \ldots, \langle x \rangle_B)$ holds iff $\langle f(x) \rangle_b = 1$.
\end{lemma}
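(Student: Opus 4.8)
The plan is to recognize that this is nothing more than the standard fact that any Boolean function admits a disjunctive normal form, transported through the encoding $\langle\mathord\cdot\rangle$. First I would observe that since $\langle\mathord\cdot\rangle$ is order-preserving it is injective, so distinct elements of $\mathbb{F}$ receive distinct $B$-bit codes; hence $x \mapsto (\langle x\rangle_1, \ldots, \langle x\rangle_B)$ is an injection of $\mathbb{F}$ into $\{0,1\}^B$. This is the only structural feature of $\langle\mathord\cdot\rangle$ that the argument needs.

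Next, for each value $v \in \mathbb{F}$ I would write down the minterm
\[
m_v(x_1, \ldots, x_B) = \bigwedge_{\substack{b \in [B] \\ \langle v\rangle_b = 1}} x_b \;\wedge\; \bigwedge_{\substack{b \in [B] \\ \langle v\rangle_b = 0}} \neg x_b,
\]
which is satisfied by an assignment $(x_1, \ldots, x_B)$ exactly when that assignment equals the code $\langle v\rangle$. For each output bit $b \in [B]$ I would then define
\[
\phi_b = \bigvee_{\substack{v \in \mathbb{F} \\ \langle f(v)\rangle_b = 1}} m_v,
\]
which is a finite disjunction because $\mathbb{F}$ is finite (by \cref{thm:finite}).

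To verify correctness, I would fix $x \in \mathbb{F}$ and substitute $x_c := \langle x\rangle_c$ for each $c \in [B]$. By injectivity of $\langle\mathord\cdot\rangle$, the minterm $m_v$ evaluates to true under this substitution iff $\langle v\rangle = \langle x\rangle$, i.e. iff $v = x$. Consequently $\phi_b$ is true under the substitution iff $x$ itself is one of the values appearing in the disjunction, which by construction happens iff $\langle f(x)\rangle_b = 1$, exactly as required.

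I do not expect a genuine obstacle here; the one point worth remarking on is that $\mathbb{F}$ need not exhaust all $2^B$ bit patterns. But since the claim quantifies only over inputs of the form $\langle x\rangle$ with $x \in \mathbb{F}$, the value of $\phi_b$ on the unused patterns is irrelevant, so the DNF above is free to behave arbitrarily there. (One could alternatively invoke the full-DNF construction already used in the proof of \cref{thm:bool_to_ffnn}, but giving the minterms explicitly keeps the lemma self-contained.)
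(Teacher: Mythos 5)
Your construction is exactly the paper's: the paper defines $\phi_b$ as the disjunction, over all $x \in \mathbb{F}$ with $\langle f(x)\rangle_b = 1$, of the minterm picking out the bit pattern $\langle x\rangle$, which is precisely your $\bigvee_v m_v$. The proposal is correct and matches the paper's proof, with your remarks on injectivity and unused bit patterns being accurate but inessential elaborations.
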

\begin{proof}
One way to define $\phi_b$ is:
\[\phi_b(x_1, \ldots, x_B) = \bigvee_{\substack{x\in\mathbb{F}\\\langle f(x)\rangle_b=1}} \left(\bigwedge_{\substack{b' \in [B] \\\langle x\rangle_{b'} = 1}} x_{b'} \land \bigwedge_{\substack{b' \in [B]\\\langle x\rangle_{b'} = 0}} \lnot x_{b'} \right).\]
Depending on the mapping $\langle\mathord\cdot\rangle$, more efficient definitions may be possible.
\end{proof}
Hopefully, it is clear how to generalize this lemma to functions 
$\mathbb{F}^d \times \mathbb{F}^d \to \{0,1\}$ or 
$\mathbb{F}^d \to \mathbb{F}^d$.

\end{toappendix}

\begin{theorem} \label{thm:muhat_to_forasp}
For any \muhat{} (with output layer) that recognizes a language $L\subseteq\Sigma^+$, there is a \FORASP{} program that recognizes $L$.
\end{theorem}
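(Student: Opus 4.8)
The plan is to represent every real number that $\tfm$ can compute by a fixed-width block of Boolean vectors, and then simulate the transformer layer by layer, turning position-wise computations into position-wise \FORASP{} operations and turning each hard-attention head into \FORASP{} attention operations. By \cref{thm:finite} there is a finite set $\mathbb{F} \subseteq \mathbb{R}$ containing every attention score and every activation component that $\tfm$ ever produces; I will take $\mathbb{F}$ to also include the intermediate values $f_V(x_j)$, the post-residual sums, and the $\funcname{ffn}$ outputs, of which there are still only finitely many. Fixing $B = \lceil\log_2|\mathbb{F}|\rceil$ and the order-preserving encoding $\langle\mathord\cdot\rangle$ from the discussion after \cref{thm:finite}, I will maintain, for each coordinate $\coord \in [d]$ and bit $b \in [B]$, a Boolean vector $X_{\coord,b}(i)$ holding the $b$-th bit of the $\coord$-th component of the current activation at position $i$. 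The whole construction is an induction on the $\depth$ layers of $\tfm$.

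The position-wise pieces are routine given \cref{thm:finite_function}, generalized to inputs in $\mathbb{F}^d$ and $\mathbb{F}^d\times\mathbb{F}^d$ as noted there. The embedding $\funcname{emb}$ depends only on $\isym_i$, so its output bits are Boolean functions of the $Q_{\asym}(i)$ and can be written as position-wise \FORASP{} operations, giving the base case. Each residual addition $x_i \mapsto \funcname{att}(x)_i + x_i$ and $\ct_i \mapsto \funcname{ffn}(\ct_i)+\ct_i$ is a coordinate-wise finite function $\mathbb{F}\times\mathbb{F}\to\mathbb{F}$, and $\funcname{ffn}$ itself is a finite function $\mathbb{F}^d\to\mathbb{F}^d$; by \cref{thm:finite_function} their output bits are Boolean formulas in the input bits, hence position-wise operations. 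Finally, the output layer is a single finite function $\mathbb{F}^d\to\{0,1\}$ (whether the linear projection of $[\tfm(\istr)]_i$ is nonnegative), which I compute into a Boolean vector $Y$; the \FORASP{} program then accepts iff $Y(n)=1$, matching the acceptance criterion of $\tfm$.

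The one genuinely new step, and the main obstacle, is simulating a single hard-attention head, because \FORASP{} attention selects the leftmost or rightmost position satisfying a \emph{Boolean} score predicate, whereas the head selects a position \emph{maximizing} a real score $f_S(x_i,x_j) \in \mathbb{F}$. I handle this by first computing the maximum score and then selecting the maximizing positions. Writing $\langle s_{ij}\rangle_b$ for the $b$-th bit of $f_S(x_i,x_j)$, which is a Boolean combination of the bits $X_{\coord,b'}(i)$ and $X_{\coord,b'}(j)$ and hence a legal \FORASP{} score predicate, I compute the bits $m_b(i)$ of $\max_{j\in U_i} f_S(x_i,x_j)$ from the most significant bit downward by a sequence of existential attention operations, one per bit:
\[
m_b(i) := \attrdefault{j}{M(i,j)}{\Bigl(\bigwedge_{b'>b}\bigl(\langle s_{ij}\rangle_{b'}\leftrightarrow m_{b'}(i)\bigr)\Bigr)\land \langle s_{ij}\rangle_b}{1}{0},
\]
which returns $1$ exactly when some unmasked $j$ agrees with the running maximum on the higher bits and has bit $b$ set. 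With the maximum in hand, the head is reproduced by an attention operation with mask $M$, the matching tie-breaking operator ($\leftmost$ if $C=\min$, $\rightmost$ if $C=\max$), score predicate $S^*(i,j)=\bigwedge_b\bigl(\langle s_{ij}\rangle_b \leftrightarrow m_b(i)\bigr)$, default equal to the constant bits of $\mathbf{0}$, and value predicate $\langle [f_V(x_j)]_{\coord}\rangle_b$, with one such operation per output coordinate--bit pair, all sharing $S^*$ and $C$ so they attend to the same position and return a consistent vector $f_V(x_{j_i})$. Since $S^*(i,j)$ holds for exactly the best-scoring unmasked positions, \FORASP{} selects $C(B_i)$, and the selected position fails to exist precisely when $U_i=\emptyset$, in which case \FORASP{} falls back to the default $\mathbf{0}$, matching the head; this is also why the garbage value of $m_b(i)$ when $U_i=\emptyset$ is harmless. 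Composing these blocks across all layers yields the \FORASP{} program, and the only content beyond bookkeeping is this bitwise argmax, which is the step I expect to require the most care.
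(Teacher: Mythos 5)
Your proposal is correct and follows essentially the same route as the paper's proof: finite value set via \cref{thm:finite}, bitwise Boolean encoding of activations, position-wise steps via \cref{thm:finite_function}, and a most-significant-bit-first computation of the maximum score followed by an attention operation whose score predicate selects exactly the argmax positions. Your $m_b(i)$ and $S^*(i,j)$ are precisely the paper's $\vecname{Max}_b(i)$ and $\vecname{Argmax}(i,j)$ from the ``smaller version'' of the construction in \cref{sec:muhat_to_forasp_proof}.
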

\begin{proof}
See \cref{sec:muhat_to_forasp_proof}. To convert a \muhat{} to \FORASP, we first show that all of the intermediate values computed by the transformer are drawn from a finite set and therefore can be represented using $O(1)$ bits.\footnote{\Citet{hao-etal-2022-formal} previously proved that hard-attention transformers use $O(\log n)$ bits; the difference is that they assumed arbitrary position embeddings, but we assume either no position embeddings, or position embeddings with finite image (\cref{sec:position}).}
\end{proof}

\begin{toappendix}

Next, we prove \cref{thm:muhat_to_forasp}.
Let $\tfm$ be a \muhat{} with width $d$. Let $B$ be the number of bits needed to store $\tfm$'s activation vector components and attention scores, by \cref{thm:finite}.
A \FORASP{} program $\prog$ \emph{simulates} $\tfm$ if in $\prog$ there are Boolean vectors $Y_{\coord,b}$ for $\coord \in [d]$ and $0 \le b < B$ such that for any input $\istr \in \Sigma^+$ of length $n$, for all $i \in [n]$, $\coord \in [d]$, and $0 \le b < B$, we have $\istr \models Y_{\coord,b}(i)$ iff $\langle[\tfm(\istr)]_{i,\coord}\rangle_b = 1$.

\begin{lemma} \label{thm:fpt_to_forasp}
For any \muhat{} $\tfm$, there is a \FORASP{} program $\prog_T$ that simulates $\tfm$.
\end{lemma}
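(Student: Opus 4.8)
The plan is to build $\prog_T$ by induction on the $\depth$ layers of $\tfm$, maintaining the invariant that after simulating layer $\ell$ there are Boolean vectors $Y^{(\ell)}_{\coord,b}$ holding bit $b$ of coordinate $\coord$ of layer $\ell$'s output at each position. By \cref{thm:finite} every activation component and attention score lives in the finite set $\mathbb{F}$, so each is faithfully encoded by its $B$ bits under the order-preserving map $\langle\mathord\cdot\rangle$; and by the generalization of \cref{thm:finite_function}, any fixed function $\mathbb{F}^d \to \mathbb{F}^d$ or $\mathbb{F}^d\times\mathbb{F}^d \to \{0,1\}$ acts on these bit encodings as a fixed Boolean function, hence is realizable position-wise in \FORASP{}. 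For the base case, since there are no position embeddings, $[\funcname{emb}(\istr)]_i$ depends only on $\isym_i$, so each bit $Y^{(0)}_{\coord,b}(i)$ is the constant-per-symbol disjunction $\bigvee_{\asym \,:\, \langle[\funcname{emb}(\asym)]_\coord\rangle_b = 1} Q_\asym(i)$.

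For the inductive step I would simulate one \muhat{} layer as self-attention $+$ residual followed by FFN $+$ residual. The FFN and both residual additions are fixed finite functions of the coordinates at a single position, so \cref{thm:finite_function} turns each directly into position-wise \FORASP{} operations on the bit-vectors. The crux — and the main obstacle — is the hard self-attention, because \FORASP{} attention only offers a Boolean score predicate with $\leftmost$/$\rightmost$ tie-breaking, whereas the layer must attend to the arg-max of the bilinear score $f_S$. I would bridge this gap in two stages. First, for fixed $i,j$ the score bit $\sigma_b(i,j) := \langle f_S(x_i,x_j)\rangle_b$ is a Boolean function of the vectors $Y^{(\ell)}_{\cdot,\cdot}(i)$ and $Y^{(\ell)}_{\cdot,\cdot}(j)$, hence a legal \FORASP{} score predicate. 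Then I compute the bits of the maximum $\mathrm{maxscore}(i) = \max_{j \in U_i} f_S(x_i,x_j)$ greedily from the most significant bit down, using for each bit one existence query $\attrdefault{j}{M(i,j)}{S_b(i,j)}{1}{0}$, whose score predicate $S_b(i,j)$ asserts that $f_S(x_i,x_j)$ agrees with the already-computed higher bits $\mathrm{maxscore}_{b'}(i)$ for $b'>b$ and carries a $1$ in bit $b$; this returns $1$ exactly when the $b$-th bit of the maximum is $1$.

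Once $\mathrm{maxscore}(i)$ is available, the best-scoring set is captured by the score predicate $\bigwedge_b\!\big(\sigma_b(i,j) \leftrightarrow \mathrm{maxscore}_b(i)\big)$, which (under mask $M$) is satisfied by exactly the $j \in B_i$. For each output coordinate $\coord$ and bit $b$ I then emit a single \FORASP{} attention operation with this predicate and mask $M$, with tie-breaking $\leftmost$ or $\rightmost$ chosen to match $C$, with value predicate the unary ($j$-only) Boolean function computing $\langle f_V(x_j)\rangle_{\coord,b}$, and with default the constant bit $\langle 0\rangle_{\coord,b}$. The default correctly supplies the $\mathbf{0}$ output precisely when $U_i = \emptyset$; otherwise some $j \in B_i$ always exists, so the default is not used and the selected position is $C(B_i)=j_i$. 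Adding the residual and FFN as above completes layer $\ell+1$, and induction yields $\prog_T$ simulating all $\depth$ layers. (The output layer needed for \cref{thm:muhat_to_forasp} is then one further position-wise operation: compute the bits of the scalar projection of the final activations at each position and test nonnegativity, designating the resulting vector as $Y$ and position $n$ as output.)

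I expect the bit-by-bit max-score construction to be the delicate part: one must argue its correctness — that greedily fixing higher bits and issuing existence queries really recovers the true maximum, and that the matching predicate then isolates exactly $B_i$ — and one must confirm that every intermediate quantity ($f_S$ scores, the pre-residual attention output, $\ct_i$, the FFN output, and $y_i$) is finitely-valued and hence covered by $\mathbb{F}$, so that \cref{thm:finite_function} genuinely applies at each step; if any such quantity is not already among those counted in \cref{thm:finite}, the same counting argument shows it is finitely-valued and $\mathbb{F}$ may be enlarged to include it.
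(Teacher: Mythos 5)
Your proposal is correct and follows essentially the same route as the paper's proof: the same bitwise encoding over the finite set $\mathbb{F}$, the same position-wise treatment of the embedding, FFN, and residuals via \cref{thm:finite_function}, and the same greedy most-significant-bit-first computation of the maximum score followed by an arg-max-matching score predicate, which is exactly the paper's ``smaller version'' of the attention simulation (the paper additionally records a depth-$1$ ``shallower version'' that enumerates all $v\in\mathbb{F}$, needed later for the depth hierarchy, but that is not required for this lemma). Your use of $\langle 0\rangle_{\coord,b}$ as the default bit is, if anything, slightly more careful than the paper's literal default of $0$.
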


\begin{proof}
We proceed by induction on the depth of $\tfm$. The base case is the input embedding function $\funcname{emb}$, which is simulated by Boolean vectors for $c \in [d]$ and $0 \le b < B$:
\begin{align*}
\vecname{Emb}_{\coord,b}(i) &:= \bigwedge_{\asym \in \Sigma} \left(Q_{\asym}(i) \rightarrow \langle [\funcname{emb}(\asym)]_\coord \rangle_b\right).
\end{align*}

Assume that the first $\depth$ layers of $\tfm$ are simulated by a program $\prog$. We extend $\prog$ to simulate layer $(\depth+1)$ as follows.

If the self-attention uses rightmost-hard attention with mask $M(i,j)$,
assume (by \cref{thm:finite_function}) that the score function $f_S(i,j)$ has been converted to Boolean expressions $S'_b(i,j)$ for the $b$-th bit of the score for positions $i$ and $j$, and the value function $f_V(j)$ has been converted to Boolean expressions $V'_{\coord,b}(j)$ for the $b$-bit of the $\coord$-th coordinate of the value. 

We give two translations. The first version has depth $1$, which is important in \cref{sec:depth}. The second version is deeper in general, but much smaller.

\emph{Shallower version:} 

Because $\mathbb{F}$ is finite, by \cref{thm:finite_function} we can define, for all $v \in \mathbb{F}$, predicates
\begin{align*}
    S_v(i,j) &\text{\ just in case $f_S(i,j)=v$} \\
    S_{> v}(i,j) &\text{\ just in case $f_S(i,j) > v$.}
\end{align*} 
Then for each $v\in\mathbb{F}$, add operations for $\vecname{Max}_v(i)$, which check that the score $v$ is the maximum, and $\vecname{Rightmost}_{v,\coord,b}(i)$, which retrieve the value at the rightmost position with score $v$:
\begin{align*}
\vecname{Max}_v(i) &:= \attrdefault{j}{M(i,j)}{S_{> v}(i,j)}{0}{1}\\
\vecname{Rightmost}_{v,\coord,b}(i) &:= \attrdefault{j}{M(i,j)}{S_v(i,j)}{V'_{\coord,b}(j)}{0}.
\end{align*}

Then we can add operations for $\vecname{Att}_{\coord,b}(i)$, which hold just in case the $b$-th bit of the $\coord$-th coordinate of the attention output is $1$, by taking a disjunction over the finitely many possible scores:
\begin{align*}
\vecname{Att}_{\coord,b}(i) &:= 
\bigvee_{v\in\mathbb{F}} (\vecname{Max}_v(i)\land \vecname{Rightmost}_{v,\coord,b}(i)).
\end{align*}

\emph{Smaller version:}

We need to define a predicate $\vecname{Argmax}(i,j)$ that tests whether $j$ maximizes $S(i,j)$. 
To do this, we define a sequence of Boolean vectors that test whether $j$ maximizes bits $b, \ldots, B-1$ of $S(i,j)$:
\begin{align*}
\vecname{Argmax}_B(i,j) &= 1 \\
\intertext{For $b = B-1, B-2, \ldots, 0$:}
\vecname{Max}_b(i) &:= \attrdefault{j}{M(i,j)}{\vecname{Argmax}_{b+1}(i,j) \land S'_{b}(i,j)}{1}{0} \\
\vecname{Argmax}_b(i,j) &= \bigwedge_{b'=b}^{B-1} \left(S'_{b'}(i,j) \leftrightarrow \vecname{Max}_{b'}(i)\right) \\
\vecname{Argmax}(i,j) &= \vecname{Argmax}_0(i,j).
\end{align*}
Finally, we add operations that simulate attention:
\begin{align*}
\vecname{Att}_{\coord,b}(i) &:= \attrdefault{j}{M(i,j)}{\vecname{Argmax}(i,j)}{V'_{\coord,b}(i,j)}{0}.
\end{align*} 
To simulate leftmost-hard attention, simply change $\rightmost$ to $\leftmost$.

For the position-wise feed-forward network, use \cref{thm:finite_function}.
\end{proof}

The last step in the program is to use position-wise operations to simulate $\tfm$'s output layer, yielding an output Boolean vector $Y$.
This completes the proof of 
\cref{thm:muhat_to_forasp}.

\end{toappendix}

\subsection{Layer normalization}
\label{sec:layernorm}

Standard transformers \citep{vaswani-etal-2017-attention} include layer normalization \citep{ba+:2016}, but our definition above does not.
Since layer normalization is a position-wise function, the proof of \cref{thm:fpt_to_forasp} is unaffected. But the construction of \cref{thm:basic-transformer-simulates-forasp} does need to be modified to circumvent layer normalization \citep[cf.][Proposition 22]{chiang+:icml2023}. Previously, we used 1 to represent true and 0 to represent false; now, we use a pair of activations to represent a truth value, $(1,0)$ for true and $(0,1)$ for false. This ensures that every vector has mean and variance independent of the input~$\istr$, so we can set the parameters of each layer normalization so that it has no effect.
(In the proof of \cref{thm:forasp_to_muhat}, we use a flag to indicate whether there are any unmasked positions or not. This flag already uses the encoding described above, and does not need to be modified.)

\section{Further Results}

In this final section, we leverage results from temporal logic and the equivalences established above to obtain numerous new results for \muhats{} (and \FORASP).

\subsection{Asymmetric attention}\label{sec:asymmetric_attention}

Our definitions of both \FORASP{} and \muhats{} include both leftmost-hard and rightmost-hard attention, and both future and past masking. 
But we can use the fact that, in \LTL, if the output is read out only at the last position, it suffices to have only $\since$ and not $\until$ \citep{gabbay+:1980} to obtain the following result.

\begin{theorem}
\label{thm:asymmetric_attention}
Both \FORASP{} and transformers with only future-masked rightmost-hard attention recognize exactly the star-free languages.
\end{theorem}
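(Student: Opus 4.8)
The plan is to prove the two inclusions separately, and to observe that the harder (lower-bound) direction reduces almost entirely to a known fact about \LTL.

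For the upper bound, I would note that future-masked rightmost-hard \FORASP{} is simply a syntactic restriction of full \FORASP, and likewise future-masked rightmost-hard \muhats{} are a special case of the \muhats{} defined in \cref{sec:muhat}. Hence anything they recognize is already recognized by the unrestricted models, which by \cref{thm:forasp_to_ltl} (for programs) and by \cref{thm:muhat_to_forasp} composed with \cref{thm:forasp_to_ltl} (for transformers) recognize only star-free languages. So both restricted classes recognize only star-free languages, and no new work is needed here.

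For the lower bound I would show that every star-free language is recognized by a future-masked rightmost-hard \FORASP{} program. First, by Kamp's theorem \citep{kamp:1968} every star-free language $L$ is defined by some \LTL{} formula $\phi$, with membership decided at the last position: $\istr \in L$ iff $\istr, n \models \phi$. The key step is then to invoke the separation result of \citet{gabbay+:1980}: at the last position every formula is equivalent to a \emph{pure-past} formula, i.e.\ one built from the atomic $Q_{\asym}$, the Boolean connectives, and $\since$ alone, since at position $n$ there is no strict future for $\until$ to quantify over. Thus $L$ is defined by a pure-past formula $\psi$. I would then feed $\psi$ through the \LTL{}-to-\FORASP{} construction of \cref{thm:ltl_to_forasp} (\cref{sec:ltl_to_forasp_proof}) and observe that it respects the restriction: atomic formulas and Boolean connectives become position-wise operations, and the only attention operation it ever emits is the one for $\since$, which is precisely a rightmost-hard attention with the future mask $j<i$. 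Because $\psi$ contains no $\until$, no past-masked or leftmost-hard operation is produced, so the resulting program uses only future-masked rightmost-hard attention. This settles the restricted-\FORASP{} half. For the transformer half I would compose with the \FORASP-to-\muhat{} construction of \cref{thm:forasp_to_muhat} (\cref{thm:basic-transformer-simulates-forasp}) and check that it preserves the attention discipline: each attention operation is simulated by one self-attention layer using the \emph{same} mask and the \emph{same} tie-breaking direction, while position-wise operations are simulated by FFNs. Hence a future-masked rightmost-hard program compiles to a future-masked rightmost-hard transformer.

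The main obstacle is conceptual rather than computational: the whole argument hinges on the separation theorem of \citet{gabbay+:1980}, which supplies the pure-past normal form; everything else is bookkeeping to confirm that the already-established translations in \cref{sec:ltl_to_forasp_proof} and \cref{sec:forasp-to-muhat-proof} never introduce a $\until$, a past mask, or leftmost tie-breaking. One small point I would verify carefully is that the strict $\since$ used here is genuinely sufficient, since separation is usually stated with non-strict and ``previous'' operators; but the expressive equivalence of these variants noted in the \LTL{} footnote closes that gap.
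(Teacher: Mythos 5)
Your proposal is correct and takes essentially the same route as the paper: both invoke the result of \citet{gabbay+:1980} that only $\since$ is needed when evaluating at the last position, then observe that the \LTL-to-\FORASP{} translation of \cref{thm:ltl_to_forasp} emits only future-masked $\rightmost$ operations for $\since$-only formulas, and that the \FORASP-to-\muhat{} compilation of \cref{thm:forasp_to_muhat} preserves the mask and tie-breaking direction. Your explicit treatment of the (trivial) upper-bound direction and of the strict-versus-non-strict $\since$ issue is slightly more careful than the paper's two-sentence proof, but the substance is identical.
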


\begin{proof}
Any star-free language can be defined in $\LTL$ using only $\since$ \citep{gabbay+:1980}, and restricting \cref{thm:ltl_to_forasp} to translate from \LTL{} with only $\since$ into \FORASP{} will only use future-masked $\rightmost$. Therefore, \FORASP{} with only future-masked $\rightmost$ can define any star-free language. Similarly, the translation (\cref{thm:forasp_to_muhat}) from \FORASP{} with only future-masked $\rightmost$ to \muhats{} only uses future-masked rightmost-hard attention. Therefore, transformers with only future-masked rightmost-hard attention can define any star-free language. 
\end{proof}

Note that this applies only in a setting where we accept or reject strings by looking at the output at the last position. It does not apply to other settings, like transduction \citep{strobl-etal-2024-transducers}.

\subsection{Non-strict masking}
\label{sec:nonstrict}

Our definitions of both \FORASP{} and \muhats{} use strict masking, in which a position cannot attend to itself. Standard transformers, however, use non-strict masking.
We can modify the definitions to use \emph{non-strict} masking, that is, $i \le j$ or $j \le i$. %

Non-strictness is known to reduce expressivity in $\LTL$ \citep{peled1997stutter}, so it reduces expressivity in $\FORASP$ and \muhats{} as well.
Intuitively, non-strict masked operations are unable to distinguish between consecutive positions that have the same symbol. 
More formally, a language over $\Sigma$ is called \emph{stutter-invariant}\footnote{We thank a reviewer of a previous version of this paper for directing us to the notion of stutter-invariance.}  iff for all $u, v \in \Sigma^*$ and $\asym \in \Sigma$, $u\asym v \in L$ iff $u\asym\asym v \in L$. An example of a language that is stutter-invariant star-free is $(\sym{a}^+\sym{b}^+)^*$ (where $\asym^+$ means ``one or more occurrences of $\asym$''); a language that is star-free but not stutter-invariant is $(\sym{a}\sym{b})^*$.

\begin{theorem}\label{thm:forasp_stutter_invariant}
Both \FORASP{} and \muhats{} with only non-strict masking recognize exactly the stutter-invariant star-free languages.
\end{theorem}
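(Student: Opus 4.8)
The plan is to route through temporal logic exactly as in \cref{sec:forasp_equals_starfree}, but with the \emph{non-strict} versions of both formalisms. Define \emph{non-strict \LTL} to be \LTL{} built from the atoms $Q_\asym$, the Boolean connectives, and the non-strict operators $\since'$ and $\until'$ (in whose semantics the witness position $j$ and the intermediate positions $k$ may equal $i$). By the theorem of \citet{peled1997stutter}, non-strict \LTL{}---equivalently, \LTL{} without the next/previous operators---defines exactly the stutter-invariant star-free languages. It therefore suffices to prove that non-strict \FORASP{} (\FORASP{} in which every mask predicate is $j \le i$ or $i \le j$) recognizes exactly the languages of non-strict \LTL, and that non-strict \muhats{} recognize the same class as non-strict \FORASP. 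The latter is immediate: the constructions of \cref{thm:forasp_to_muhat} and \cref{thm:muhat_to_forasp} copy the mask predicate $M(i,j)$ through unchanged, so a \FORASP{} program using only non-strict masks compiles to a transformer using only non-strict masks, and conversely; restricting both models to non-strict masking preserves the equivalence verbatim.

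For non-strict \LTL{} to non-strict \FORASP, I would adapt the induction in the proof of \cref{thm:ltl_to_forasp}. Atoms and Boolean connectives are unchanged position-wise operations. For $\phi = \phi_1 \since' \phi_2$ I add
\[ P_\phi(i) := \attrdefault{j}{j \le i}{\neg P_{\phi_1}(j) \lor P_{\phi_2}(j)}{P_{\phi_2}(j)}{0}, \]
which attends to the rightmost $j \le i$ at which $\phi_1$ fails or $\phi_2$ holds and returns $P_{\phi_2}$ there; a short case analysis (on whether that position witnesses $\phi_2$, falsifies $\phi_1$, or does not exist) shows this equals $\phi_1 \since' \phi_2$. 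The operator $\until'$ is handled symmetrically with $\leftmost$ and mask $j \ge i$. Crucially, only non-strict masks appear.

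For non-strict \FORASP{} to non-strict \LTL, I would mirror the proof of \cref{thm:forasp_to_ltl}. First apply the unary normal forms (\cref{lem:unary-score}, \cref{thm:unary_value}); these rewrites touch only scores, values, and defaults, never the mask, so a non-strict program stays non-strict. Then translate each non-strict attention operation using non-strict helper macros $\ltlop{exists_\le}\phi = 1 \since' \phi$ and $\ltlop{exists_\ge}\phi = 1\until'\phi$ (and their rightmost/leftmost refinements). For a non-strict future-masked rightmost operation with unary score $\phi_S$, value $\phi_V$, and default $\phi_D$, the translation is
\[ \phi = (\neg\phi_S \since' (\phi_S \land \phi_V)) \lor (\neg(\ltlop{exists_\le}\phi_S)\land \phi_D), \]
the exact non-strict counterpart of the appendix formula; the non-strict past-masked and non-masked cases go through with the analogous non-strict macros.

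The main obstacle is ensuring the reverse translation never leaves non-strict \LTL. Strict operators can encode a ``next'' (for instance $\neg(0 \since 1)$ detects the first position, which is not a stutter-invariant property), so it is essential that every helper macro and every translated operation above uses only $\since'$, $\until'$, and the non-strict masks $j \le i$ and $i \le j$, so that the image lies in non-strict \LTL. Once this is verified, \citet{peled1997stutter} closes both inclusions between non-strict \FORASP{} and the stutter-invariant star-free languages, and the mask-preserving \FORASP{}--\muhat{} equivalence transfers the characterization to transformers.
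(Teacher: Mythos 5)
Your proposal is correct and follows essentially the same route as the paper: invoke \citet{peled1997stutter} for the equivalence of non-strict \LTL{} with the stutter-invariant star-free languages, adapt the translations of \cref{thm:ltl_to_forasp,thm:forasp_to_ltl} to non-strict operators and non-strict masks, and observe that the \FORASP{}--\muhat{} compilations of \cref{thm:forasp_to_muhat,thm:muhat_to_forasp} carry the mask predicate through unchanged. The paper states these adaptations without detail, so your explicit non-strict translation formulas and the caution about not reintroducing strict operators are a faithful elaboration rather than a different argument.
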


\begin{proof}
\Citet{peled1997stutter} prove that \LTL{} with non-strict $\since'$ and $\until'$ recognizes exactly the stutter-invariant star-free languages. 
The proofs of \cref{thm:forasp_to_ltl,thm:ltl_to_forasp} may be adapted to use non-strict temporal operators and non-strict masking. Thus, non-strict \FORASP{} and non-strict \LTL{} are equivalent. 
Similarly, using $j\leq i$ or $j\geq i$ as $M(i,j)$ in the proofs of \cref{thm:forasp_to_muhat,thm:muhat_to_forasp}, we can show that non-strict \muhats{} are equivalent to non-strict \FORASP.
\end{proof}

In \cref{ssect:dyck-1-depth-2}, we showed how to define $L_{1,2}$, Dyck-1 of depth 2.
\Citet[][\S7.1]{bhattamishra-etal-2020-ability} find experimentally that $L_{1,2}$ is not learnable by transformers, and they argue that it is not even expressible by transformers (with soft attention, non-strict masking, and no position embeddings). The reason is that while reading the prefix of $\lsym$'s at the start of the string, the soft-attention layer computes the same value vector at every position and cannot count the number of occurrences of~$\lsym$. However, with the addition of a $\BOS$ symbol, soft attention can measure what fraction of symbols are $\lsym$, overcoming this limitation as observed empirically by \citet{ebrahimi-etal-2020-self}.
The similarities between how strict masking in the hard attention setting and the addition of $\BOS$ in soft attention both enable positions to be distinguished are notable for future investigation.

\subsection{Position embeddings}
\label{sec:position}

\newcommand{\posemb}{\Theta}
\newcommand{\posembn}{\theta_n}
\newcommand{\pospreds}{\mathcal{P}}

Our definition of a transformer does not, so far, include position embeddings; all information about ordering comes from attention masking. 
A position embedding is a family of functions $\posemb = (\posembn)_{n\ge 0}$ where $\posembn(i)$ is a scalar or vector representation of position $i$ in a string of length $n$.
Then the input layer $\funcname{emb}$ becomes the sum of a word embedding and a position embedding.

We say that $\posemb$ has \emph{finite image} if $\bigcup_{n\ge0} \mathop{\textrm{Im}} \theta_n$ is finite.
In general, our results extend to transformers with any position embedding that has finite image. The class of languages recognized may grow, and we give a recipe for characterizing the new class of languages.

We can add numerical predicates to \LTL{} and initial Boolean vectors to \FORASP{} as follows. Let $\Pi = (\pi_n)_{n \ge 0}$ be a family of functions 
$\pi_n \colon [n] \to \{0,1\}$.
Then there is an additional predicate symbol~$\Pi$ such that for any string $\istr$ with length $n$,
\begin{align*}
\istr \models \Pi(i) &\mathrel{\text{iff}} \pi_n(i) = 1 && \text{in $\FORASP$} \\
\istr, i\models \Pi &\mathrel{\text{iff}} \pi_n(i) = 1 && \text{in $\LTL$.}
\end{align*}
For example, if $\text{Mid}_n(i)$ is true iff $n$ is odd and $i = \lceil n/2 \rceil$, then we can define the language $\{\sym{\#} \sym{a}^m \sym{\#} \sym{b}^m \sym{\#} \mid m \ge 0\}$
in $\LTL[\text{Mid}]$ as:
\begin{align*}
  \phi &= Q_{\sym{\#}} \land (Q_{\sym{b}} \since (\text{Mid} \land Q_{\sym{\#}} \land (Q_{\sym{a}} \since (Q_{\sym{\#}} \land \neg (0 \since 1))))).
\end{align*}
A similar definition could be written in $\FORASP[\text{Mid}]$.

\begin{theorem} \label{thm:posemb}
Let $\Theta = (\theta_n)_{n\ge 0}$ be a position embedding with finite image. There exists a collection of predicates $\pospreds_\Theta$ such that the following classes of languages are the same:
\begin{itemize}
\item languages recognized by \muhats{} with position embedding $\posemb$
\item languages defined by $\FORASP[\pospreds_\Theta]$
\item languages defined by $\LTL[\pospreds_\Theta]$.
\end{itemize}
\end{theorem}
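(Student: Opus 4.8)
The plan is to produce an explicit witness $\pospreds_\Theta$ and then observe that all four compilation arguments already developed in the paper go through essentially unchanged once the position information is repackaged as a finite collection of monadic predicates. Since $\posemb$ has finite image, the set $V = \bigcup_{n \ge 0} \mathop{\textrm{Im}} \posembn$ is finite; for each $v \in V$ I take a predicate $\Pi_v$ given by the family $\pi_{v,n}(i) = 1$ iff $\posembn(i) = v$, and set $\pospreds_\Theta = \{\Pi_v \mid v \in V\}$. At every position exactly one $\Pi_v$ is true, and knowing which one is equivalent to knowing $\posembn(i)$. The proof then splits into the two-way equivalences $\FORASP[\pospreds_\Theta] \equiv \LTL[\pospreds_\Theta]$ and the equivalence between $\FORASP[\pospreds_\Theta]$ and \muhats{} with position embedding $\posemb$; transitivity then gives all three classes equal.

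The logical equivalence is the easier half. In the constructions of \cref{thm:ltl_to_forasp,thm:forasp_to_ltl}, the atomic predicates $Q_\asym$ and the initial Boolean vectors $Q_\asym$ play purely formal roles: \LTL{} atoms translate to the corresponding initial \FORASP{} vectors and vice versa, and neither induction inspects the semantics of these atoms beyond the defining equivalence $\istr, i \models Q_\asym \Leftrightarrow \istr \models Q_\asym(i)$. I would therefore adjoin each $\Pi_v$ as both an \LTL{} atom and a \FORASP{} initial vector, carrying the defining equivalence $\istr, i \models \Pi_v \Leftrightarrow \istr \models \Pi_v(i)$ through the very same induction. No other case of either proof changes, so $\FORASP[\pospreds_\Theta]$ and $\LTL[\pospreds_\Theta]$ define the same languages.

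For the transformer equivalence I would adapt \cref{thm:forasp_to_muhat,thm:muhat_to_forasp}, where the only place position information enters is the embedding layer. For the $\FORASP[\pospreds_\Theta]$-to-\muhats{} direction, I keep the construction of \cref{thm:basic-transformer-simulates-forasp} and modify only its base case. The combined input embedding at position $i$ is $\funcname{emb}(\isym_i) + \posembn(i)$, which ranges over the finite set $\{\funcname{emb}(\asym) + v \mid \asym \in \Sigma, v \in V\}$; placing the word embedding in fresh coordinates (enlarging the width, and padding $\posembn$ with zeros there) makes the two summands occupy disjoint coordinate blocks, so $(\isym_i, \posembn(i))$ is recoverable and the first position-wise FFN can decode it, via \cref{thm:bool_to_ffnn}, into the initial Boolean vectors $Q_\asym(i)$ and $\Pi_v(i)$. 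The remaining layers are identical. Conversely, for the \muhats-to-$\FORASP[\pospreds_\Theta]$ direction, I first re-run \cref{thm:finite}: with a finite-image position embedding the embedding layer still has finite image (at most $|\Sigma|\cdot|V|$ vectors rather than $|\Sigma|$), and the inductive counting of activation vectors and scores is otherwise untouched. In the base case of \cref{thm:fpt_to_forasp}, each bit of the embedding vector at position $i$ is a deterministic function of $(\isym_i, \posembn(i))$, hence a Boolean combination of the now-available initial vectors $Q_\asym(i)$ and $\Pi_v(i)$; the inductive step over layers is unchanged.

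I expect the base case of the transformer side to be where the real care is needed. The technical points are that the finiteness argument of \cref{thm:finite} must be re-verified with position embeddings present, and that the word and position embeddings can always be separated (for the $\FORASP[\pospreds_\Theta]$-to-\muhats{} direction) or that embedding bits can always be written as Boolean functions of $Q_\asym$ and $\Pi_v$ (for the reverse direction). Collisions $\funcname{emb}(\asym) + v_1 = \funcname{emb}(\asym') + v_2$ are harmless in the reverse direction, since the transformer genuinely cannot distinguish such positions and the decoded value will correctly agree, and they are avoided by the fresh-coordinate construction in the forward direction. Everything else is a mechanical re-indexing of the existing inductions with $\pospreds_\Theta$ adjoined to the atomic vocabulary.
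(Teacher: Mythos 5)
Your proposal is correct and follows essentially the same route as the paper: reuse \cref{thm:ltl_to_forasp,thm:forasp_to_ltl} (which are agnostic to the atomic vocabulary), re-verify \cref{thm:finite} in the presence of a finite-image position embedding, and adjust only the embedding base cases of the translations in \cref{thm:forasp_to_muhat,thm:muhat_to_forasp}. The one (cosmetic) difference is your choice of $\pospreds_\Theta$ as one indicator predicate per value in the image of $\posemb$, whereas the paper uses one predicate per bit of each coordinate of $\posembn(i)$; the two collections are Boolean-interdefinable, so they yield the same classes.
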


\begin{proof}
See \cref{sec:posemb-proof}.
\end{proof}

\begin{toappendix}

\subsection{Proof of \cref{thm:posemb} (position embeddings can be simulated by predicates)}
\label{sec:posemb-proof}

Because $\posemb$ has finite image, \cref{thm:finite} still holds for any \muhat{} with position embedding $\posemb$.
Let $\pospreds_\posemb$ be the collection of predicates that test whether the $b$-th bit of the $\coord$-th coordinate of $\posembn(i)$ is set.
The proof of equivalence of \muhats{} with \FORASP{} extends easily to equivalence of \muhats{} with position embedding $\posemb$ and $\FORASP[\pospreds_\posemb]$.
When converting a transformer to a $\FORASP[\pospreds_\posemb]$ program, we represent each coordinate of $\posemb$ with $B$ predicates from $\pospreds_\posemb$.
When converting a $\FORASP[\pospreds_\posemb]$ program to a transformer, we represent each predicate in $\pospreds_\posemb$ with its own coordinate, whose value is in $\{0, 1\}$.

Since \cref{thm:forasp_to_ltl,thm:ltl_to_forasp} 
hold for any collection of unary predicate symbols, $\FORASP[\pospreds_\posemb]$ is equivalent to $\LTL[\pospreds_\posemb]$.
\end{toappendix}

We discuss two important special cases below.

\paragraph{Sinusoidal position embeddings} The original transformer \citep{vaswani-etal-2017-attention} used position embeddings with coordinates of the form $\sin (2\pi f i)$ or $\cos (2\pi f i)$. 
If the $f$'s are rational (though in the original definition they were not), then the position embeddings form a finite set, so \cref{thm:finite} still holds. For any even $d$, let us define a \emph{rational sinusoidal positional embedding} with $d$ dimensions to be a position embedding $\posemb = (\posembn)_{n\ge 0}$ where
\[ \theta_n(i) = \begin{bmatrix} 
\sin 2\pi f_1 i &
\cos 2\pi f_1 i &
\cdots &
\sin 2\pi f_{d/2} i &
\cos 2\pi f_{d/2}
\end{bmatrix}^\top \qquad f_1, \ldots, f_{d/2} \in \mathbb{Q}. \]

\begin{corollary} \label{thm:sinusoidal}
\Muhats{} with rational sinusoidal position embeddings recognize exactly the regular languages in~$\ACzero$ (that is, regular languages definable by a family of Boolean circuits with polynomial size and constant depth).
\end{corollary}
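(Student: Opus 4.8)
The plan is to combine \cref{thm:posemb} with two classical facts about regular languages: a relativized form of Kamp's theorem and the characterization of the regular languages in $\ACzero$ due to Barrington, Compton, Straubing, and Th\'erien. As noted just before the statement, a rational sinusoidal position embedding has finite image, so \cref{thm:posemb} applies directly: for a fixed such embedding $\Theta$, the \muhats{} using it recognize exactly the languages definable in $\LTL[\mathcal{P}_\Theta]$, where $\mathcal{P}_\Theta$ tests the bits of the coordinates of $\theta_n$. The argument therefore reduces to identifying the predicate class $\mathcal{P}_\Theta$ and then ranging over all choices of $\Theta$.

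First I would show that each predicate in $\mathcal{P}_\Theta$ is a \emph{modular} predicate. If a frequency is a rational $f = p/q$ in lowest terms, then $i \mapsto \sin 2\pi f i$ and $i \mapsto \cos 2\pi f i$ depend only on $i \bmod q$; hence every bit predicate in $\mathcal{P}_\Theta$ is a Boolean function of $i \bmod m$ for a suitable $m$, i.e. lies in the class $\MOD$ of predicates ``$i \equiv r \pmod m$''. Conversely, every modular predicate is realizable by some rational sinusoidal embedding: taking the single frequency $f = 1/m$, the pair $(\cos 2\pi i/m, \sin 2\pi i/m)$ takes $m$ distinct values as $i$ ranges over the residues mod $m$, so the value of $i \bmod m$—and hence the truth of ``$i \equiv r \pmod m$''—is a Boolean combination of the bits in $\mathcal{P}_\Theta$. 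Since any $\LTL[\MOD]$ formula uses only finitely many moduli, I can collect the corresponding frequencies into one embedding and express all of its modular atoms inside $\mathcal{P}_\Theta$. Taking the union over all rational sinusoidal embeddings therefore yields exactly $\bigcup_\Theta \LTL[\mathcal{P}_\Theta] = \LTL[\MOD]$.

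It then remains to show that $\LTL[\MOD]$ equals the regular languages in $\ACzero$. I would invoke the relativized Kamp theorem, which holds for any family of monadic predicates and gives $\LTL[\MOD] = \FO[\MOD]$, and then the theorem of Barrington, Compton, Straubing, and Th\'erien, which states that $\FO[\MOD]$ is exactly the class of regular languages in $\ACzero$. Chaining these equalities with the previous paragraph gives the corollary: the upper bound (every such transformer's language is regular and in $\ACzero$) follows from $\LTL[\mathcal{P}_\Theta] \subseteq \LTL[\MOD]$, while the lower bound (every regular language in $\ACzero$ is recognized) follows because such a language is definable in $\FO[\MOD] = \LTL[\MOD]$ using finitely many moduli, which are realizable by a single embedding via the previous paragraph.

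The transformer-side steps are immediate once \cref{thm:posemb} is in hand; the real content is external. I expect the main obstacle to be the clean identification of $\mathcal{P}_\Theta$ with $\MOD$ in both directions—in particular verifying that the finitely many sinusoidal bits actually pin down $i \bmod m$, so that no modular predicate is lost and none beyond $\MOD$ is gained—together with citing the relativized Kamp theorem and the Barrington--Compton--Straubing--Th\'erien characterization in the precise forms needed, since the corollary's ``exactly'' depends on both inclusions being tight.
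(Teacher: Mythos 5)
Your proposal is correct and follows essentially the same route as the paper's proof: apply \cref{thm:posemb} to reduce to $\LTL[\pospreds_\Theta]$, show $\pospreds_\Theta$ and $\MOD$ are interdefinable (the paper exhibits an explicit two-layer ReLU computation of $\MOD^r_m$ from the sinusoids, where you argue more abstractly that the pair $(\cos 2\pi i/m, \sin 2\pi i/m)$ determines $i \bmod m$; both work), and then chain Kamp's theorem with the Barrington--Compton--Straubing--Th\'erien characterization. Your explicit handling of the union over embeddings and the finitely-many-moduli point is slightly more careful than the paper's write-up but not a different argument.
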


\begin{proof}
This uses the fact that the regular languages in $\ACzero$ are exactly the languages definable in first-order logic with modular predicates \citep{barrington-etal-1992-rl-nc1}.
See \cref{sec:sinusoidal-proof} for details. 
\end{proof}
An example of a language that belongs to this class but is not star-free is $(\sym{a}\sym{a})^*$. The classic example of a language that is regular but not in $\ACzero$ is $\prob{PARITY} = \{ w \in \{\sym{a},\sym{b}\}^* \mid \text{$w$ has an odd number of $\sym{b}$'s} \}$ \citep{furst+:1984}.

\begin{toappendix}
\subsection{Proof of \cref{thm:sinusoidal} (\muhats{} with sinusoidal position embeddings recognize the regular languages in $\ACzero$)}
\label{sec:sinusoidal-proof}

Let $\MOD$ be the collection of predicates $\MOD^r_m(i)$ for all $0 \le r < m$, which hold just in case $i \equiv r \pmod{m}$.

Let $\posemb$ be a sinusoidal positional embedding.
Since the $f_c$ are rational, $\posemb$ has finite image. By \cref{thm:posemb}, transformers with positional embedding $\posemb$ are equivalent to $\LTL[\pospreds_\posemb]$.

It's easy to see that every predicate in $\pospreds_\posemb$ can be expressed in terms of $\MOD$; for the converse, observe that we can use a 2-layer ReLU network to compute $\MOD^r_m$ \citep[Lemma 20]{chiang+:icml2023}:
\begin{align*}
h(i) &= \relu\left(\sin 2\pi r/m \sin 2\pi i/m + \cos 2\pi r/m \cos 2\pi i/m - \cos 2\pi/m \right) \\
&= \relu(\cos (2\pi (i-r)/m)) \\
\MOD^r_m(i) &= (1-\cos 2\pi/m) h(i).
\end{align*}
Thus transformers with sinusoidal positional embeddings are equivalent to $\LTL[\MOD]$, which is equivalent to $\FO[\MOD]$ \citep{kamp:1968}, which defines exactly the class of regular languages in $\ACzero$ \citep{barrington-etal-1992-rl-nc1}.
\end{toappendix}

\paragraph{Arbitrary position embeddings} Finally, we may consider arbitrary position embeddings, subject to the condition of finite image. 
The corresponding collection of predicates is the set of all possible monadic predicates, which we call $\Mon$.\footnote{Although \citet{mixbarrington-etal-2005} define $\Mon$ to be the collection of all monadic predicates without dependence on $n$, other authors \citep{hao-etal-2022-formal,barcelo-etal-2023} do allow them to depend on $n$.}

\begin{corollary}\label{thm:arbitrary}
\Muhats{} that have position embeddings with finite image recognize exactly the languages definable in $\LTL[\Mon]$.
\end{corollary}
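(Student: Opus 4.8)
The plan is to derive this corollary directly from \cref{thm:posemb} by matching the two collections of predicates. Recall that \cref{thm:posemb} says that for any finite-image position embedding $\Theta$, the \muhats{} with position embedding $\Theta$ recognize exactly the languages definable in $\LTL[\pospreds_\Theta]$, where $\pospreds_\Theta$ is the collection of predicates testing the bits of the coordinates of $\Theta$. The two inclusions are then almost immediate, and I would handle them separately.

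For the inclusion that every language recognized by a finite-image \muhat{} lies in $\LTL[\Mon]$: given such a transformer with position embedding $\Theta$, \cref{thm:posemb} places its language in $\LTL[\pospreds_\Theta]$. Each predicate in $\pospreds_\Theta$ is by definition a family $(\pi_n)_{n\ge 0}$ with $\pi_n \colon [n] \to \{0,1\}$, hence a monadic predicate, so $\pospreds_\Theta \subseteq \Mon$ and therefore $\LTL[\pospreds_\Theta] \subseteq \LTL[\Mon]$.

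For the reverse inclusion, I would start from a formula $\phi$ of $\LTL[\Mon]$ defining a language $L$. The key observation is that, although $\Mon$ is infinite, $\phi$ mentions only finitely many monadic predicates $\Pi_1, \ldots, \Pi_k$. I would then build a position embedding $\Theta$ whose $n$-th component is $\theta_n(i) = (\pi_{1,n}(i), \ldots, \pi_{k,n}(i)) \in \{0,1\}^k$; its image is contained in the finite set $\{0,1\}^k$, so \cref{thm:posemb} applies. Since each coordinate of $\Theta$ is already a single bit equal to the corresponding $\Pi_\ell$, every $\Pi_\ell$ is equivalent to a predicate in $\pospreds_\Theta$, so $\phi$ is a formula of $\LTL[\pospreds_\Theta]$ and hence $L$ is recognized by a \muhat{} with position embedding $\Theta$.

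The only point that requires care — and the conceptual crux of the statement — is this passage from the infinite collection $\Mon$ to a single finite-image position embedding: one must use that recognition is stated per language, so that it suffices to encode, for each $L$, only the finitely many predicates appearing in one formula defining it, rather than all of $\Mon$ at once. Everything else is bookkeeping layered on top of \cref{thm:posemb}.
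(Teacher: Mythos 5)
Your proposal is correct and matches the paper's intent exactly: the paper states this as an immediate corollary of \cref{thm:posemb} without writing out a proof, and the two inclusions you give (bit-predicates of a finite-image embedding are monadic, hence $\pospreds_\Theta \subseteq \Mon$; conversely, a single $\LTL[\Mon]$ formula uses only finitely many monadic predicates, which can be packed into a $\{0,1\}^k$-valued position embedding) are precisely the missing bookkeeping. Your identification of the finite-predicates-per-formula observation as the crux is also the right emphasis.
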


\Citet{barcelo-etal-2023} show that any language definable in $\LTL[\Mon]$ can be recognized by a hard-attention transformer without attention masking and with some position embedding (with infinite image), but left the other direction as an open question. Here, by making use of attention masking and restricting position embeddings to those with finite image, we have obtained an exact characterization.

The addition of attention masking appears to be important. With finite image position embeddings but without attention masking, there must be two positions $i$ and $j$ with the same position embedding (by the pigeonhole principle), so an unmasked attention transformer would not be able to distinguish one string with $\sym{a}$ and $\sym{b}$ at positions $i$ and $j$ and another string with $\sym{a}$ and $\sym{b}$ at positions $j$ and $i$. So no \muhat{} with finite image position embeddings and unmasked attention can recognize the language $\sym{\#} \sym{a}^* \sym{\#} \sym{b}^* \sym{\#}$, but we showed already how to define this language even in~$\LTL$.

\subsection{Depth hierarchy}
\label{sec:depth}

\newcommand{\td}[1]{\text{dp}(#1)}
\newcommand{\ad}[1]{\text{dp}(#1)}

Finally, we establish that (unlike with feed-forward networks), we can always increase the expressive power of \muhats{} by adding more self-attention layers. We consider \muhats{} with only future masking (as is typical in practice) and with only rightmost-hard attention. Other masking and tie-breaking schemes are treated in \cref{sec:depth_details}. We also add multi-head attention (as is typical in practice). 

First, we define depth for all models in this paper. The \emph{layer depth} of a \muhat{} is the number of attention layers. The \emph{temporal depth} of an $\LTL$ formula is as follows:
\begin{align*}
  \td{Q_{\asym}} &= 0  \qquad \td{\lnot\phi} = \td{\phi} \\
  \td{\phi\land\psi} &= \td{\phi\lor\psi} = \max(\td{\phi},\td{\psi}) \\
  \td{\phi\since\psi} &= \td{\phi\until\psi} = \max(\td{\phi},\td{\psi})+1
\end{align*}
The \emph{attention depth} of a \FORASP{} expression is defined as follows:
\begin{equation*}
\begin{aligned}
  \ad{Q_{\asym}(i)} &= 0 \qquad \ad{\lnot P(i)} = \ad{P(i)} \\
  \ad{P_1(i) \land P_2(i)} &= \ad{P_1(i) \lor P_2(i)} = \max(\ad{P_1(i)}, \ad{P_2(i)}).
\end{aligned}
\end{equation*}
We then extend this definition to \FORASP{} operations.
If $P(i) := \phi(i)$ (a position-wise operation), \[\ad{P(i)} = \ad{\phi(i)}.\]
If $P(i) := \attrdefault{j}{M(i,j)}{S(i,j)}{V(i,j)}{D(i)}$ or $P(i) := \attldefault{j}{M(i,j)}{S(i,j)}{V(i,j)}{D(i)}$, \[\ad{P(i)} = \max(\ad{S(i,j)}, \ad{V(i,j)}, \ad{D(i)}) + 1.\]
Finally, the attention depth of a program is the maximum of the attention depths of all of its operations.

Let $\textbf{MUHAT}(\mathord\blacktriangleright F)_\depth$ (respectively,~$\FORASP(\mathord\blacktriangleright F)_\depth$) be the languages recognizable by multi-head transformers of depth~$\depth$ (respectively,~\FORASP{} programs of depth $\depth$) using only future-masked rightmost-hard attention. Let $\LTL(\since)_\depth$ be the languages definable by $\LTL$ formulas of depth $\depth$ without $\until$.

\begin{theorem}\label{thm:forasp-hierarchy}
   For every $\depth \ge 0$, there is a language $L_{\depth+1}$ such that no multi-head \muhat{} of depth $\depth$ recognizes $L_\depth$, but a transformer of depth $(\depth+1)$ does recognize~$L_{\depth+1}$. 
\end{theorem}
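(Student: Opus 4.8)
The plan is to reduce the claim to the known strictness of the \emph{until} (equivalently \emph{since}) nesting hierarchy for linear temporal logic, and to transport that separation across the depth-graded versions of the equivalences already proved. Concretely, I would first show that for every $\depth \ge 0$ the three classes $\LTL(\since)_\depth$, $\FORASP(\mathord\blacktriangleright F)_\depth$, and $\textbf{MUHAT}(\mathord\blacktriangleright F)_\depth$ coincide, where the subscript denotes temporal depth, attention depth, and layer depth respectively. Granting this, any language separating $\LTL(\since)_{\depth+1}$ from $\LTL(\since)_\depth$ is a witness $L_{\depth+1}$ that is recognized by a depth-$(\depth+1)$ \muhat{} but by no depth-$\depth$ \muhat.

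For the depth-graded equivalences I would revisit the four translations and check that each preserves the relevant depth measure. In the direction $\LTL\to\FORASP$ (\cref{thm:ltl_to_forasp}), the construction turns a single $\since$ into exactly one future-masked $\rightmost$ attention operation whose score, value, and default come from strictly smaller subformulas, so $\ad{\cdot}$ of the resulting vector equals $\td{\cdot}$ of the formula, giving $\LTL(\since)_\depth \subseteq \FORASP(\mathord\blacktriangleright F)_\depth$. In the direction $\FORASP\to\LTL$ (\cref{thm:forasp_to_ltl}), each future-masked $\rightmost$ operation becomes a formula of the form $(\lnot\phi_S \since (\phi_S \land \phi_V)) \lor (\lnot(1\since\phi_S)\land\phi_D)$, which nests $\since$ exactly one level above $\phi_S,\phi_V,\phi_D$; inductively this gives temporal depth equal to attention depth, matching the hierarchies level by level. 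Here I must also confirm that the two normal-form reductions this proof relies on, making the value and then the score unary (\cref{thm:unary_value} and \cref{lem:unary-score}), do not inflate attention depth: both only introduce extra attention operations sharing the same mask and score (hence the same depth) and recombine them position-wise, and position-wise operations take a maximum rather than adding one, so $\ad{\cdot}$ is unchanged.

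For the transformer side I would use the depth-$1$ (``shallower'') simulation of a single layer in \cref{thm:fpt_to_forasp}: a depth-$\depth$ \muhat{} decompiles to a $\FORASP$ program in which each transformer layer contributes operations of attention depth $1$, yielding $\textbf{MUHAT}(\mathord\blacktriangleright F)_\depth \subseteq \FORASP(\mathord\blacktriangleright F)_\depth$. Conversely, in the compilation of \cref{thm:basic-transformer-simulates-forasp} each $\FORASP$ attention operation is realized as a setup feed-forward network (computing query, key, and value features), one attention sublayer, and a cleanup feed-forward network. Since a transformer layer is attention-then-FFN, the cleanup FFN of one operation merges with the setup FFN of the next, the setup for the first operation folds into the word embedding, and intervening position-wise operations fold into FFNs; thus each unit of attention depth costs exactly one layer, and multi-head attention lets operations of equal depth share a layer. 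This gives $\FORASP(\mathord\blacktriangleright F)_\depth \subseteq \textbf{MUHAT}(\mathord\blacktriangleright F)_\depth$ and closes the triangle.

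Finally, I would invoke the strictness of the temporal-depth hierarchy of $\LTL$. Etessami and Wilke proved, via an Ehrenfeucht--Fra\"iss\'e game tailored to temporal logic, that the nesting-depth hierarchy of the \emph{until} operator is strict over finite words; reading strings in reverse (which swaps $\until$ with $\since$ and the first with the last position) yields a family of languages $L_{\depth+1}$ definable with $\since$-depth $\depth+1$ but not $\depth$ when read out at the last position. Transporting each $L_{\depth+1}$ through the equivalences above completes the proof. I expect the main obstacle to be not the game-theoretic separation itself, which I would cite, but the depth bookkeeping of the two preceding paragraphs: in particular, verifying that the unary normal forms leave attention depth untouched, and that the FFN-merging argument genuinely compresses each $\FORASP$ attention operation into a single transformer layer rather than letting the extra setup layer silently double the depth.
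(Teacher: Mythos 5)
Your proposal is correct and follows essentially the same route as the paper: establish that the $\LTL\leftrightarrow\FORASP$ translations and the (shallower) decompilation plus a depth-preserving compilation (via multi-head parallel composition and FFN fusion) make $\LTL(\since)_\depth = \FORASP(\mathord\blacktriangleright F)_\depth = \textbf{MUHAT}(\mathord\blacktriangleright F)_\depth$, then import the strictness of the $\since$-nesting hierarchy from Etessami and Wilke with $\prob{STAIR}_{\depth+1}$ as the separating language. The technical caveats you flag (that the unary normal forms leave attention depth unchanged, and that the two-layer simulation of an attention operation collapses to one layer by fusing the setup FFN into the preceding layer) are exactly the points the paper handles in its \cref{thm:forasp_to_muhat_depth} and \cref{thm:fuse_ffnn}.
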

\begin{proof}
    The constructions in the proofs of \cref{thm:forasp_to_ltl,thm:ltl_to_forasp} preserve depth, so $\FORASP(\mathord\blacktriangleright F)_\depth = \LTL(\since)_\depth$.
    Moreover, by \cref{thm:muhat_to_forasp} (shallower version in \cref{sec:muhat_to_forasp_proof}), %
    and by \cref{thm:forasp_to_muhat_depth} (a depth-preserving version of \cref{thm:forasp_to_muhat} found in \cref{sec:depth_details}), %
    $\textbf{MUHAT}(\mathord\blacktriangleright F)_\depth = \FORASP(\mathord\blacktriangleright F)_\depth$.
    Finally, \citet{etessami2000until} prove that $\LTL(\mathord{\since})_{\depth}\subsetneq \LTL(\mathord{\since})_{\depth+1}$.
    Namely, the classes are separated by $L_{\depth+1} = \prob{STAIR}_{\depth+1}$, which is the language over $\Sigma = \{\sym{a}, \sym{b}, \sym{c}\}$ of strings which, after deleting $\sym{c}$'s, contain $\sym{a}^{\depth+1}$ as a substring. This gives the following picture:
\[ \cdots \subsetneq \begin{array}[t]{@{}c@{}}
\LTL(\mathord{\since})_{\depth} \\
\rotatebox{90}{$=$} \\ 
\FORASP(\mathord\blacktriangleright F)_{\depth} \\ 
\rotatebox{90}{$=$} \\ 
\textbf{MUHAT}(\mathord\blacktriangleright F)_{\depth}
\end{array} 
\subsetneq 
\begin{array}[t]{@{}c@{}}
\LTL(\mathord{\since})_{\depth+1} \\
\rotatebox{90}{$=$} \\ 
\FORASP(\mathord\blacktriangleright F)_{\depth+1} \\ 
\rotatebox{90}{$=$} \\ 
\textbf{MUHAT}(\mathord\blacktriangleright F)_{\depth+1}
\end{array} \subsetneq \cdots \]
    Therefore, $\textbf{MUHAT}(\mathord\blacktriangleright F)_{\depth}\subsetneq \textbf{MUHAT}(\mathord\blacktriangleright F)_{\depth+1}$.
\end{proof}

\begin{toappendix}
\subsection{Details for \cref{sec:depth} (depth hierarchy)}
\label{sec:depth_details}
\subsubsection{Multi-head attention}
\label{sec:multihead}
To prove \cref{thm:forasp-hierarchy} and related results, we need to make \cref{thm:forasp_to_muhat} more efficient in terms of the depth of the constructed transformer. To do this, we'll need to make use of multi-head attention.
This allows multiple self-attentions at the same depth to be run in parallel. 
In a multi-head \muhat{} transformer layer, the equation for the self-attention (\cref{eq:selfatt}) is replaced by \[ (\ct_1, \ldots, \ct_n) = \sum_{\hd=1}^H \funcname{att}.\hd(x_1, \ldots, x_n) + (x_1, \ldots, x_n)\] where each $\funcname{att}.\hd$ is a self-attention layer.

It is straightforward to extend \cref{thm:muhat_to_forasp} to multi-head \muhats{}, simulating a multi-head \muhat{} of depth $\depth$ with a \FORASP{} program of depth $\depth$. 
Each head at depth $\depth$ can be simulated by a \FORASP{} attention operation of attention depth $\depth$, and their sum can be simulated by a position-wise operation (\cref{thm:finite_function}).

\subsubsection{Parallel composition} The parallelization is accomplished by the following construction.

\begin{lemma}\label{lem:parallel_composition}
    A transformer $\tfm_1$ of depth $\depth_1$ with $H_1$ heads and a transformer $\tfm_2$ of depth $\depth_2$ with $H_2$ heads can be parallel-composed into a transformer $\tfm_1\oplus \tfm_2$ of depth $\max(\depth_1,\depth_2)$ with $H_1+H_2$ heads such that 

    \[(\tfm_1\oplus \tfm_2)(w)=\begin{bmatrix}
        \tfm_1(w)\\
        \tfm_2(w)
    \end{bmatrix}.\]
\end{lemma}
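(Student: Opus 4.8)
The plan is to run the two transformers side by side in a single residual stream whose coordinates are partitioned into a block of width $d_1$ (simulating $\tfm_1$) and a block of width $d_2$ (simulating $\tfm_2$), so that $\tfm_1\oplus\tfm_2$ has width $d_1+d_2$. I would take its embedding to be the concatenation $\funcname{emb}(\asym)=[\funcname{emb}_1(\asym);\funcname{emb}_2(\asym)]$ and build $\max(\depth_1,\depth_2)$ layers, each carrying $H_1+H_2$ heads: the first $H_1$ heads reproduce the heads of the corresponding layer of $\tfm_1$ acting on block~$1$, and the remaining $H_2$ heads reproduce those of $\tfm_2$ acting on block~$2$. The single feed-forward network of each combined layer will apply $\tfm_1$'s and $\tfm_2$'s feed-forward networks blockwise.

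The crucial point is that every component of a \muhat{} layer can be made to respect this block structure. For a head coming from $\tfm_1$ with bilinear score $f_S(x,y)=x^\top W y$ and linear value $f_V$, I would pad $W$ with zeros so that the score ignores the block-$2$ coordinates entirely, and extend $f_V$ to write its output into block~$1$ and $\mathbf{0}$ into block~$2$; the head keeps its original mask and tie-breaking function, which is legitimate because in the multi-head definition (\cref{sec:multihead}) each $\funcname{att}.\hd$ is a full self-attention layer carrying its own mask and tie-breaker. Since attention outputs are summed into the residual stream and the block-$1$ heads write only to block~$1$ (and symmetrically for block~$2$), the two blocks never interact through attention. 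The two feed-forward networks combine into a single block-diagonal two-layer ReLU network (stack their hidden units and zero out all cross terms), which likewise acts independently on each block.

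To handle unequal depths, say $\depth_1<\depth_2$, in every layer $\ell>\depth_1$ I would give the $H_1$ ``block-$1$'' heads zero value functions and make the block-$1$ part of the feed-forward network the zero map, so that the residual connections freeze block~$1$ at $\tfm_1$'s final activations while $\tfm_2$ keeps running in block~$2$; these dummy heads still fit within the $H_1+H_2$ budget. A straightforward induction on $\ell$ then shows that after layer $\ell$ the block-$1$ coordinates equal $\tfm_1(w)$ computed through layer $\min(\ell,\depth_1)$ and the block-$2$ coordinates equal $\tfm_2(w)$ through layer $\min(\ell,\depth_2)$; at $\ell=\max(\depth_1,\depth_2)$ this yields exactly the claimed concatenation.

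I expect the only real care to be in verifying the non-interference of the two blocks --- that the zero-padded bilinear scores, the block-restricted value maps, and the block-diagonal feed-forward network genuinely leave the other block untouched through both the attention and the residual-plus-FFN steps. There is no delicate estimate involved; once the block decomposition is fixed, the argument is bookkeeping that relies only on the residual stream being additive and on masks and tie-breakers being specified per head.
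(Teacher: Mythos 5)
Your proposal is correct and takes essentially the same approach as the paper's proof: concatenated embeddings, a block-partitioned residual stream with $H_1+H_2$ heads whose zero-padded scores and block-restricted values keep the two blocks independent, and block-diagonal feed-forward networks. Your inline handling of unequal depths (zero value functions and zero FFN on the finished block) is just an explicit rendering of the paper's ``pad the shallower transformer with identity layers,'' so there is no substantive difference.
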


\begin{proof}
    First, add layers that compute the identity function to the shallower transformer so that both have depth $\max(\depth_1,\depth_2)$. 
    
    Next, concatenate their word embedding vectors

        \[(\funcname{emb}_1\oplus\funcname{emb}_2)(\asym) = \begin{bmatrix}
            \funcname{emb}_1(\asym)\\
            \funcname{emb}_2(\asym)
        \end{bmatrix}.\]

    At each level, we compose the self-attentions using multiple heads to simulate them in parallel. 
    For each multi-head self-attention layer $\funcname{att}_1$ and $\funcname{att}_2$ at the same depth in each transformer, we use multiple heads to simulate both $\funcname{att}_1$ and $\funcname{att}_2$ in parallel. Let $\funcname{att}_1.\hd.f_S$ be the score function of the $\hd$-th head of $\funcname{att}_1$, and similarly for $\funcname{att}_1.\hd.M$, $\funcname{att}_1.\hd.C$, and $\funcname{att}_1.\hd.f_V$, and similarly for $\funcname{att}_2$. 
    Let $d = d_1+d_2$ and $H = H_1+H_2$.
    Construct a new self-attention layer $\funcname{att}_1\oplus \funcname{att}_2$ with
    \begin{align*}
    (\funcname{att}_1\oplus \funcname{att}_2).\hd.f_S(x_i, x_j)&=
    \begin{cases}
        \funcname{att}_1.\hd.f_S([x_i]_{1:d_1}, [x_j]_{d_1+1:d})
        & 1\le\hd\leq H_1 \\
        \funcname{att}_2.(\hd-H_1).f_S([x_i]_{d_1+1:d},[x_j]_{d_1+1:d})
        & H_1+1\le\hd\leq H
    \end{cases} \\
    (\funcname{att}_1\oplus \funcname{att}_2).h.M&=\begin{cases}
        \funcname{att}_1.\hd.M & 1 \le\hd\leq H_1\\
        \funcname{att}_2.(\hd-H_1).M & H_1+1 \le\hd\leq H
    \end{cases} \\
    (\funcname{att}_1\oplus \funcname{att}_2).\hd.C&=\begin{cases}
        \funcname{att}_1.\hd.C & 1 \le \hd\leq H_1\\
        \funcname{att}_2.(\hd-H_1).C & H_1+1 \le \hd\leq H
    \end{cases} \\
    (\funcname{att}_1\oplus \funcname{att}_2).\hd.f_V(x)&=
    \begin{cases}
        \begin{bmatrix}
            \funcname{att}_1.\hd.f_V(x_{1:d_1})\\
            \makebox[\widthof{$\funcname{att}_2.(\hd-H_1).f_V(x_{d_1+1:d})$}][c]{$\mathbf{0}^{d_2}$}
        \end{bmatrix}
        & 1\le\hd\leq H_1 \\[3ex]
        \begin{bmatrix}
            \mathbf{0}^{d_1}\\
            \funcname{att}_2.(\hd-H_1).f_V(x_{d_1+1:d})
        \end{bmatrix} 
        & H_1+1\le \hd\leq H.
    \end{cases}
    \end{align*}

    For the feed-forward networks $\funcname{ffn}_1$ and $\funcname{ffn}_2$, create a new network $\funcname{ffn}_1 \oplus \funcname{ffn}_2$ with
    \begin{align*}
        (\funcname{ffn}_1\oplus\funcname{ffn}_2).W^{(1)}&=\begin{bmatrix}
            \funcname{ffn}_1.W^{(1)} & \mathbf{0}\\
            \mathbf{0} & \funcname{ffn}_2.W^{(1)}
        \end{bmatrix}  & 
        (\funcname{ffn}_1\oplus\funcname{ffn}_2).b^{(1)}&=\begin{bmatrix}
            \funcname{ffn}_1.b^{(1)}\\
            \funcname{ffn}_2.b^{(1)}
        \end{bmatrix}\\
        (\funcname{ffn}_1\oplus\funcname{ffn}_2).W^{(2)}&=\begin{bmatrix}
            \funcname{ffn}_1.W^{(2)} & \mathbf{0}\\
            \mathbf{0} & \funcname{ffn}_2.W^{(2)}
        \end{bmatrix} &
        (\funcname{ffn}_1\oplus\funcname{ffn}_2).b^{(2)} &=\begin{bmatrix}
            \funcname{ffn}_1.b^{(2)}\\
            \funcname{ffn}_2.b^{(2)}
        \end{bmatrix}.
    \end{align*}

    It is straightforward to verify the correctness of this construction.
\end{proof}

\subsubsection{\FORASP{} to \muhats, preserving depth}

We give a more efficient version of \cref{thm:forasp_to_muhat}, which uses parallel composition to optimize the depth of the constructed transformer.

\begin{lemma}\label{thm:fuse_ffnn}
Let $\tfm$ be a transformer (without output layer) with width $d$ and depth $\depth$, and whose activations are in $\{0,1\}$. For any function $g \colon \{0,1\}^d \to \{0,1\}^d$, there is a transformer $(g \circ \tfm)$ with depth $\depth$ such that, for all $w$ and $i$, $[(g \circ \tfm)(w)]_i = g([\tfm(w)]_i)$.
\end{lemma}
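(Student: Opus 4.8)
The plan is to leave almost all of $\tfm$ untouched and to absorb $g$ into the feed-forward network of the last layer — or, when $\depth = 0$, into the word embedding — so that no new attention layer, and hence no additional depth, is introduced. The key fact is that, by \cref{thm:finite}, the vector fed into the last feed-forward network ranges over a \emph{finite} set as $w$ and $i$ vary, and any function defined on a finite subset of $\R^d$ can be realized exactly by a two-layer ReLU network.

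Suppose first that $\depth \ge 1$, and write the last layer of $\tfm$ as $\ct_i = \funcname{att}_\depth(x^{(\depth)})_i + x^{(\depth)}_i$ followed by $[\tfm(w)]_i = \funcname{ffn}_\depth(\ct_i) + \ct_i$, where $x^{(\depth)}$ is the input to layer $\depth$. I keep layers $1, \dots, \depth-1$ and the self-attention of layer $\depth$ exactly as in $\tfm$, and replace only $\funcname{ffn}_\depth$ by a new two-layer network $\funcname{ffn}'_\depth$. Because the residual connection adds $\ct_i$ to the feed-forward output, to force the layer output to equal $g([\tfm(w)]_i)$ it suffices that
\[ \funcname{ffn}'_\depth(\ct_i) = g\bigl(\funcname{ffn}_\depth(\ct_i) + \ct_i\bigr) - \ct_i . \]
Since $\funcname{ffn}_\depth$ is fixed, the right-hand side is a well-defined function $F$ of the single argument $\ct_i$; note that $\funcname{ffn}_\depth(\ct_i) + \ct_i \in \{0,1\}^d$ by hypothesis, so $g$ is applied only at points of its domain and $F$ is well defined.

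It remains to realize $F$ as a two-layer ReLU network. By \cref{thm:finite} (whose proof bounds exactly the post-attention, post-residual vectors), the $\ct_i$ all lie in a single finite set $V \subseteq \R^d$, so $F \colon V \to \R^d$ has finite domain. Choose $a \in \R^d$ with $v \mapsto \langle a, v\rangle$ injective on $V$; this is possible because the vectors $a$ failing injectivity lie in the finitely many hyperplanes $\{a : \langle a, u - v\rangle = 0\}$ indexed by distinct $u, v \in V$, which do not cover $\R^d$. Listing $V = \{v_1, \dots, v_m\}$ so that $s_\ell := \langle a, v_\ell\rangle$ is strictly increasing, each coordinate of $F$ extends to a continuous piecewise-linear function $\R \to \R$ interpolating the prescribed values at $s_1, \dots, s_m$; since every continuous piecewise-linear function of one scalar is computable by a two-layer ReLU network and $\langle a, x\rangle$ is a linear function of $x$, this yields a two-layer ReLU network for that coordinate. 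Assembling the coordinates gives $\funcname{ffn}'_\depth$, so that $[(g\circ\tfm)(w)]_i = g([\tfm(w)]_i)$ for all $w$ and $i$ with the depth unchanged. When $\depth = 0$ we have $\tfm = \funcname{emb}$, which is position-wise, so we take the word embedding $\asym \mapsto g(\funcname{emb}(\asym))$, again adding no depth.

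The main obstacle is this last realizability step: the depth budget forbids stacking a fresh feed-forward sublayer on top of $\tfm$, so $g$ must be folded into an \emph{existing} two-layer FFN, and the argument only goes through because \cref{thm:finite} certifies that the inputs $\ct_i$ to that FFN form a finite set — on an infinite domain an arbitrary $g$ could not in general be captured by exactly two ReLU layers.
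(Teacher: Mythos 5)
Your proof is correct and follows the same overall strategy as the paper's: split on whether $\depth=0$ (fold $g$ into the word embedding) or $\depth\ge 1$ (fold $g$ into the top feed-forward network), so that no new attention layer is added. Where you diverge is in the final realizability step. The paper simply observes that the top FFN computes a function $f\colon\{0,1\}^d\to\{0,1\}^d$, so $g\circ f$ is again such a function and is computable by a two-layer ReLU network via \cref{thm:bool_to_ffnn} (full DNF on Boolean inputs). You instead invoke \cref{thm:finite} to get finiteness of the set of possible FFN inputs $\ct_i$, project onto a generic direction to linearize, and interpolate with a piecewise-linear function. Your route is more general---it would work even if the $\ct_i$ were arbitrary vectors from a finite subset of $\R^d$ rather than Boolean---but it is heavier machinery than needed here, since the lemma's hypothesis that activations lie in $\{0,1\}$ already puts you in the setting of \cref{thm:bool_to_ffnn}. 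On the other hand, you are more careful than the paper on one point: you explicitly account for the residual connection around the FFN, solving $\funcname{ffn}'_\depth(\ct_i)=g(\funcname{ffn}_\depth(\ct_i)+\ct_i)-\ct_i$, whereas the paper's one-line argument treats the top FFN as directly computing the layer output and leaves the residual correction implicit (it is harmless, since subtracting the identity is absorbed into the second, linear layer of the FFN, whose outputs need not be Boolean). Both proofs are sound; yours trades a small amount of extra generality for a somewhat longer argument.
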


\begin{proof}
If $\depth = 0$: $\tfm$ consists of just an embedding function $\funcname{emb} \colon \Sigma \to \{0,1\}^d$. Then $g \circ \tfm = g \circ \funcname{emb}$ is also an embedding function and therefore a depth-$0$ transformer.

If $\depth > 0$: Let $f \colon \{0,1\}^d \to \{0,1\}^d$ be the top FFN of $\tfm$. Then $g \circ f$ is also a function $\{0,1\}^d \to \{0,1\}^d$ and can therefore be computed by a single FFN, by \cref{thm:bool_to_ffnn}.
\end{proof}

\begin{theorem}\label{thm:forasp_to_muhat_depth}
    For any B-RASP program $\prog$ of depth $\depth$ that recognizes a language $L\subseteq \Sigma^+$, there is a multi-head \muhat{} with depth $\depth$ that recognizes $L$. 
\end{theorem}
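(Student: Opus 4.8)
The plan is to revisit the construction of \cref{thm:basic-transformer-simulates-forasp} and observe that its only source of depth blow-up is the extra ``preparation'' layer it inserts before each attention operation in order to compute the query, key, and value representations $\vec{\alpha}$, $\vec{\beta}$, and $g$. Since those representations are Boolean functions of already-computed Boolean vectors, they can be produced by an FFN; the key point is that this FFN need not occupy a layer of its own, because it can be fused into an FFN that already exists. I would therefore proceed by induction on the attention depth $\depth$ of $\prog$, maintaining the invariant that there is a multi-head \muhat{} of depth $\depth$ whose coordinates hold (as $\{0,1\}$ values) every Boolean vector of $\prog$ of attention depth at most $\depth$.

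For the base case $\depth = 0$, the program uses only position-wise operations over the initial vectors $Q_{\asym}$, so each output vector is a Boolean function of the word embedding; by the $\depth=0$ case of \cref{thm:fuse_ffnn} this function folds into the embedding, giving a depth-$0$ transformer. For the inductive step, I first apply the inductive hypothesis to obtain a depth-$(\depth-1)$ transformer $\tfm'$ holding all vectors of attention depth at most $\depth-1$; these include, for every attention operation of depth exactly $\depth$, all the vectors needed to evaluate its score, value (made unary by \cref{thm:unary_value}), and default predicates. Using \cref{thm:fuse_ffnn}, I augment the top FFN of $\tfm'$ so that it additionally emits, for each such operation, the query vector $\vec{\alpha}$, the key vector $\vec{\beta}$ (using the DNF decomposition $S(i,j) = \bigvee_{\ell} \alpha_\ell(i) \land \beta_\ell(j)$ of the original proof), and the Boolean value bits. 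None of this increases depth, since each quantity is a Boolean function of coordinates that $\tfm'$ already computes.

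Next I add a single new layer, which accounts for exactly the one unit of depth that the depth-$\depth$ operations require. Its multi-head self-attention runs one head per depth-$\depth$ attention operation: each head realizes $S(i,j)$ through the bilinear score of the original proof, uses that operation's mask, and applies its tie-breaking rule (rightmost or leftmost) to select $j_i$ and read off the prepared value. These heads are packed into one layer via multi-head attention (\cref{sec:multihead}), with parallel composition (\cref{lem:parallel_composition}) available as the underlying mechanism. The top FFN of this same layer then (i) resolves the default value, including the empty-unmasked-set case (for instance $i=1$ under future masking) by the flag mechanism of \cref{thm:basic-transformer-simulates-forasp}, and (ii) computes, by \cref{thm:bool_to_ffnn}, every position-wise operation of attention depth $\depth$. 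Finally, the output layer projects onto the coordinate simulating $Y$, exactly as in \cref{thm:forasp_to_muhat}, yielding a depth-$\depth$ multi-head \muhat{} recognizing $L$.

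The main obstacle is bookkeeping rather than a new idea: I must verify that every piece of ``preparation'' work genuinely fits into a pre-existing FFN, so that no operation costs more than the single attention layer its depth accounts for. Concretely, the query/key/value representations for the depth-$\depth$ heads must be emitted by the depth-$(\depth-1)$ transformer's top FFN (or the embedding when $\depth=1$), while the default-handling and the position-wise combinations of depth $\depth$ must share the new layer's top FFN with the preparation of the depth-$(\depth+1)$ heads that the next inductive step will consume. Checking that \cref{thm:fuse_ffnn} can absorb all of these simultaneously, that distinct operations at the same depth occupy independent heads without interference, and that the flag correctly distinguishes a genuine default-valued match from an empty unmasked set, is where the bulk of the verification lies.
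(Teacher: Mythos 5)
Your proposal is correct and follows essentially the same route as the paper's proof: both rely on parallel composition via multi-head attention (\cref{lem:parallel_composition}) to place all attention operations of the same depth in one layer, and on fusing the ``preparation'' FFN of the two-layer construction from \cref{thm:forasp_to_muhat} into the preceding layer's FFN (\cref{thm:fuse_ffnn}) so that each attention operation costs exactly one layer. The only difference is organizational---you induct on the depth level and maintain one transformer holding all vectors up to that depth, whereas the paper inducts on the operation index and parallel-composes per-operation transformers on demand---which does not change the substance of the argument.
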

\begin{proof}
    Let $\prog$ be any \FORASP{} program. For any operation $P_{\progstep}(i)$ of $\prog$, we say that a transformer $\tfm_{\progstep}$ with width $d$ \emph{simulates} $P_{\progstep}(i)$ if there is a $\coord \in [d]$ such that, for all $w \in \Sigma^+$, \[[\tfm_{\progstep}(w)]_{i,\coord} = \begin{cases} 1 &\text{if $w \models P_{\progstep}(i)$} \\ 0 &\text{otherwise.} \end{cases}\]

    We prove the following statement by induction on $\progstep$: For any operation $P_{\progstep}(i)$ of $\prog$ with depth $\depth$, there is a multi-head \muhat{} with depth $\depth$ that simulates~$P_{\progstep}(i)$.

    The base cases are $\progstep \leq |\Sigma|$, where every operation can be simulated by a transformer with depth $0$ using one-hot word embeddings, just as in the proof of \cref{thm:forasp_to_muhat}.

    If $\progstep > |\Sigma|$, assume that each previous operation $P_{\progstep'}(i)$ with depth $\depth'$ can be simulated by a transformer with depth $\depth'$. 
    We will construct a transformer of depth $\depth$ that simulates $P_{\progstep}(i)$.

    \begin{itemize}
        \item If $P_\progstep(i)$ is an attention operation, then its $S$, $V$, and $D$ predicates have depth at most $(\depth-1)$ and therefore depend only on operations which can be simulated by transformers of depth $(\depth-1)$, by the inductive hypothesis. 
        Parallel-compose all of these into a single transformer (using \cref{lem:parallel_composition}) to obtain a transformer $\tilde{T}_\progstep$ of depth $(\depth-1)$ that simulates all the operations that $P_\progstep(i)$ depends on.

        Then, extend $\tilde{T}_\progstep$ to compute the attention operation as in the proof of \cref{thm:forasp_to_muhat}. Although that construction uses two transformer layers, the first layer's self-attention just computes the identity function, and its FFN can be fused with the last FFN in $\tilde{T}_\progstep$ by \cref{thm:fuse_ffnn}. So there exists a transformer $T_\progstep$ of depth $\depth$ that simulates $P_\progstep(i)$.
        
        \item If $P_\progstep(i)$ is a position-wise operation of depth $\depth$, it may depend on earlier operations that also have depth $\depth$. By the inductive hypothesis, these can be simulated by transformers of depth $\depth$. Parallel-compose them to get a transformer $\tilde{T}_\progstep$ of depth $\depth$, by \cref{lem:parallel_composition}. 
        Then, the computation of $P_\progstep(i)$ itself can be fused with the last feed-forward network in $\tilde{T}_\progstep$ by \cref{thm:fuse_ffnn}.
        This again results in a transformer $T_\progstep$ of depth $\depth$ that simulates $P_{\progstep}(i)$.
    \end{itemize}

    Thus, if $P_{\proglen}$ is the output vector of $\prog$ and has depth $\depth$, then there is a transformer that simulates $P_{\proglen}$ and has depth $\depth$. Add an output layer that transforms the output at position $n$ to $+\tfrac12$ if $w \models P_{\proglen}(n)$ and $-\tfrac12$ otherwise. Then, $T$ recognizes the same language that $\prog$ recognizes. 
\end{proof}

\subsubsection{Other attention variants} Earlier, we used transformers and \FORASP{} with only future-masked rightmost-hard-attention, and \LTL{} with only $\since$. \Cref{thm:forasp-hierarchy} showed that\[ \cdots \subsetneq \begin{array}[t]{@{}c@{}}
\LTL(\mathord{\since})_{\depth} \\
\rotatebox{90}{$=$} \\ 
\FORASP(\mathord\blacktriangleright F)_{\depth} \\ 
\rotatebox{90}{$=$} \\ 
\textbf{MUHAT}(\mathord\blacktriangleright F)_{\depth}
\end{array} 
\subsetneq 
\begin{array}[t]{@{}c@{}}
\LTL(\mathord{\since})_{\depth+1} \\
\rotatebox{90}{$=$} \\ 
\FORASP(\mathord\blacktriangleright F)_{\depth+1} \\ 
\rotatebox{90}{$=$} \\ 
\textbf{MUHAT}(\mathord\blacktriangleright F)_{\depth+1}
\end{array} \subsetneq \cdots \]

The separating language was $\prob{STAIR}_{\depth+1}$, which is the language over $\Sigma = \{\sym{a}, \sym{b}, \sym{c}\}$ of strings which, after deleting $\sym{c}$'s, contain $\sym{a}^{\depth+1}$ as a substring. We can recognize $\prob{STAIR}_k$ with a formula $\varphi_k = 1 \since \gamma_k$, defined as follows:
\begin{align*}
    \gamma_1 &= Q_{\sym{a}} \\ \gamma_k &= Q_{\sym{a}} \land (Q_{\sym{c}} \since \gamma_{k-1})) && k>1.
\end{align*}
Note the slight deviation from \citet{etessami2000until}, because their presentation of \LTL{} used the $\textbf{next}$ and $\textbf{eventually}$ operators as well as non-strict $\until'$.

Next, we allow both future-masked rightmost and past-masked leftmost attention. We will notate these with $\textbf{MUHAT}(\mathord\blacktriangleright F,\mathord\blacktriangleleft P)_\depth$ and  $\FORASP(\mathord\blacktriangleright F, \mathord\blacktriangleleft P)_\depth$, respectively. In $\LTL$, we allow access to both temporal operators; let $\LTL_\depth$ be the languages definable by formulas of depth $\depth$.

\begin{proposition}
    Restricted to only rightmost future-masked and leftmost past-masked attention, multi-head \muhats{} with depth $(\depth+1)$ are strictly more expressive than multi-head \muhats{} with depth $\depth$.
\end{proposition}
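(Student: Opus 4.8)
The plan is to mirror the proof of \cref{thm:forasp-hierarchy}, but with \emph{both} temporal operators present, so that the three-way equivalence becomes
$\textbf{MUHAT}(\mathord\blacktriangleright F,\mathord\blacktriangleleft P)_\depth = \FORASP(\mathord\blacktriangleright F,\mathord\blacktriangleleft P)_\depth = \LTL_\depth$,
and then to reduce the statement to strictness of the temporal-depth hierarchy of full $\LTL$. First I would check that the constructions already developed are depth-preserving when both masks are kept at once. In the translation of \cref{thm:ltl_to_forasp}, a $\since$ subformula compiles to a future-masked $\rightmost$ operation and a $\until$ subformula to a past-masked $\leftmost$ operation, each raising attention depth by exactly one; conversely, in \cref{thm:forasp_to_ltl} a future-masked $\rightmost$ operation compiles to a $\since$-formula and (by the symmetric case) a past-masked $\leftmost$ operation to an $\until$-formula, again matching depth, and the unary normal forms of \cref{sec:normal_forms} do not raise attention depth. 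Hence $\FORASP(\mathord\blacktriangleright F,\mathord\blacktriangleleft P)_\depth = \LTL_\depth$. On the transformer side, the shallower simulation in the proof of \cref{thm:muhat_to_forasp} turns each layer into one attention operation of the matching mask and tie-break, and the depth-preserving compilation \cref{thm:forasp_to_muhat_depth} sends future-masked $\rightmost$ (resp.\ past-masked $\leftmost$) operations to future-masked rightmost-hard (resp.\ past-masked leftmost-hard) attention, giving $\textbf{MUHAT}(\mathord\blacktriangleright F,\mathord\blacktriangleleft P)_\depth = \FORASP(\mathord\blacktriangleright F,\mathord\blacktriangleleft P)_\depth$. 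This is exactly the reuse already invoked in \cref{thm:forasp-hierarchy}, now carrying both masks through.

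With the equivalence in hand, the proposition reduces to showing $\LTL_\depth \subsetneq \LTL_{\depth+1}$. The witness is again $\prob{STAIR}_{\depth+1}$: the formula $\varphi_{\depth+1}$ defined above uses only $\since$ and has temporal depth $\depth+1$, so $\prob{STAIR}_{\depth+1}\in\LTL_{\depth+1}$. It remains to prove that allowing $\until$ alongside $\since$ does not let one define $\prob{STAIR}_{\depth+1}$ below depth $\depth+1$, i.e.\ $\prob{STAIR}_{\depth+1}\notin\LTL_\depth$. The point that makes this plausible is that $\prob{STAIR}_{\depth+1}$ is invariant under reversal of the input string, so past-directed reasoning cannot be strictly more useful than future-directed reasoning.

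For the lower bound I would use an Ehrenfeucht--Fra\"iss\'e game for temporal logic. \Citet{etessami2000until} give such a game characterizing $\until$-nesting depth and use it to separate the one-sided hierarchy; I would extend it with symmetric ``past'' rounds so that a Duplicator win in the depth-$\depth$ two-sided game implies indistinguishability by every full-$\LTL$ formula of temporal depth $\depth$. I would then take the pair of strings witnessing the Etessami--Wilke separation, chosen (or symmetrized) to be reversal-compatible, and argue that Duplicator's strategy for the future rounds transfers to the past rounds by reversal symmetry, while exactly one of the two strings lies in $\prob{STAIR}_{\depth+1}$. This gives $\prob{STAIR}_{\depth+1}\notin\LTL_\depth$, hence $\LTL_\depth\subsetneq\LTL_{\depth+1}$, which transports through the equivalences to $\textbf{MUHAT}(\mathord\blacktriangleright F,\mathord\blacktriangleleft P)_\depth \subsetneq \textbf{MUHAT}(\mathord\blacktriangleright F,\mathord\blacktriangleleft P)_{\depth+1}$.

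The main obstacle is exactly this last step. The depth-preservation bookkeeping is routine, but proving that $\until$ buys nothing for $\prob{STAIR}_{\depth+1}$ requires a genuine game argument for the two-operator logic rather than a black-box appeal to the $\until$-only hierarchy, since strictness of a subclass hierarchy does not by itself imply strictness of the superclass hierarchy. The delicate part is setting up the two-sided game and verifying that the reversal invariance of $\prob{STAIR}$ truly neutralizes the added past moves; alternatively, if a strictness result for the full temporal-depth hierarchy of $\LTL$ over finite words is available in the literature, one could cite it directly and bypass the game entirely.
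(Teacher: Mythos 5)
Your first step---the three-way equivalence $\textbf{MUHAT}(\mathord\blacktriangleright F,\mathord\blacktriangleleft P)_\depth = \FORASP(\mathord\blacktriangleright F,\mathord\blacktriangleleft P)_\depth = \LTL_\depth$ via depth-preserving translations---is exactly the paper's route and is fine. The gap is in the separation. You take $\prob{STAIR}_{\depth+1}$ as the witness and set out to prove $\prob{STAIR}_{\depth+1}\notin\LTL_\depth$, guided by the intuition that reversal invariance of the language makes $\until$ useless. That claim is false for every $\depth\ge 1$. Once both $\since$ and $\until$ are available, $\prob{STAIR}_k$ becomes definable at temporal depth about $\lceil k/2\rceil$: one temporal operator existentially locates a position inside the purported block of $\sym{a}$'s, and then a conjunction at the \emph{same} nesting level verifies roughly half of the $\sym{a}$'s walking left with nested $\since$ and the other half walking right with nested $\until$. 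Reversal invariance does not neutralize the past moves; it is precisely what lets the two directions split the work. This is reflected in the result of \citet{etessami2000until} that the paper cites: for full \LTL{} the language separating $\LTL_{\depth-1}$ from $\LTL_\depth$ is $\prob{STAIR}_{2\depth}$, not $\prob{STAIR}_{\depth}$; in particular $\prob{STAIR}_{2\depth}\in\LTL_\depth$, so $\prob{STAIR}_{\depth+1}\in\LTL_\depth$ already. The Ehrenfeucht--Fra\"iss\'e argument you sketch therefore cannot succeed for the pair of classes you chose, and the ``delicate part'' you flag is not merely delicate but impossible.

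The repair is the fallback you mention in your last sentence, and it is what the paper does: \citet{etessami2000until} prove strictness of the temporal-depth hierarchy for full \LTL{} (both operators) over finite words directly, so $\LTL_\depth\subsetneq\LTL_{\depth+1}$ can be cited as a black box, with the separating language being $\prob{STAIR}_{2(\depth+1)}$ rather than $\prob{STAIR}_{\depth+1}$. Transporting that through the equivalences yields $\textbf{MUHAT}(\mathord\blacktriangleright F,\mathord\blacktriangleleft P)_{\depth}\subsetneq \textbf{MUHAT}(\mathord\blacktriangleright F,\mathord\blacktriangleleft P)_{\depth+1}$ with no new game argument needed.
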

\begin{proof}
    The constructions in \cref{thm:forasp_to_ltl,thm:ltl_to_forasp} preserve depth for rightmost future-masked and leftmost past-masked attention, so $\FORASP(\mathord\blacktriangleright F,\mathord\blacktriangleleft P)_\depth = \LTL_\depth$. 
    Moreover, the constructions in \cref{thm:muhat_to_forasp,thm:forasp_to_muhat} are identical regardless of attention type, so $\textbf{MUHAT}(\mathord\blacktriangleright F,\mathord\blacktriangleleft P)_\depth = \FORASP(\mathord\blacktriangleright F,\mathord\blacktriangleleft P)_\depth$.
    Finally, \citet{etessami2000until} show that with access to both temporal operators it is still the case that $\LTL_{\depth-1}\subsetneq\LTL_{\depth}$ (except the separating language is $\prob{STAIR}_{2\depth}$ instead of $\prob{STAIR}_{\depth}$).
    Thus, we have
\[ \cdots \subsetneq \begin{array}[t]{@{}c@{}}
\LTL_{\depth} \\
\rotatebox{90}{$=$} \\ 
\FORASP(\mathord\blacktriangleright F, \mathord\blacktriangleleft P)_{\depth} \\ 
\rotatebox{90}{$=$} \\ 
\textbf{MUHAT}(\mathord\blacktriangleright F, \mathord\blacktriangleleft P)_{\depth}
\end{array} 
\subsetneq 
\begin{array}[t]{@{}c@{}}
\LTL_{\depth+1} \\
\rotatebox{90}{$=$} \\ 
\FORASP(\mathord\blacktriangleright F, \mathord\blacktriangleleft P)_{\depth+1} \\ 
\rotatebox{90}{$=$} \\ 
\textbf{MUHAT}(\mathord\blacktriangleright F, \mathord\blacktriangleleft P)_{\depth+1}
\end{array} \subsetneq \cdots \]    
and in particular,
    $\textbf{MUHAT}(\mathord\blacktriangleright F,\mathord\blacktriangleleft P)_{\depth}\subsetneq \textbf{MUHAT}(\mathord\blacktriangleright F,\mathord\blacktriangleleft P)_{\depth+1}$.
\end{proof}

Finally, we allow all six types of attention. These will be notated with $\textbf{MUHAT}_\depth$ and $\FORASP_\depth$.

\begin{proposition}
    With access to future-, past-, and no masking and both leftmost-hard and rightmost-hard attention, multi-head \muhats{} of depth $(2\depth+1)$ are strictly more expressive than multi-head \muhats{} of depth~$\depth$.
\end{proposition}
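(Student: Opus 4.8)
The plan is to sandwich the full-attention class $\textbf{MUHAT}_\depth$ between two levels of the $\LTL$ temporal-depth hierarchy and then invoke the strict hierarchy of \citet{etessami2000until}. Concretely, I will establish the chain
\[ \textbf{MUHAT}_\depth = \FORASP_\depth \subseteq \LTL_{2\depth} \subsetneq \LTL_{2\depth+1} \subseteq \FORASP_{2\depth+1} = \textbf{MUHAT}_{2\depth+1}, \]
and then run a single separating language through it. The equalities $\textbf{MUHAT}_d = \FORASP_d$ (for every $d$) hold for the full-attention model exactly as in the two preceding propositions: the \muhat-to-\FORASP{} translation (shallower version in the proof of \cref{thm:muhat_to_forasp}) and the depth-preserving \FORASP-to-\muhat{} translation (\cref{thm:forasp_to_muhat_depth}) are both insensitive to the masking and tie-breaking scheme, so they carry over verbatim when all six attention types are allowed.

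For the reverse inclusion $\LTL_{2\depth+1} \subseteq \FORASP_{2\depth+1}$, I would observe that \cref{thm:ltl_to_forasp} compiles each $\since$ to a future-masked $\rightmost$ operation and each $\until$ to a past-masked $\leftmost$ operation, each of which raises attention depth by exactly $1$, matching the $+1$ in temporal depth; an induction on formula structure then gives $\LTL_d \subseteq \FORASP_d$ for all $d$, in particular at $d = 2\depth+1$.

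The key asymmetric step is $\FORASP_\depth \subseteq \LTL_{2\depth}$, where I would revisit the \FORASP-to-\LTL{} construction in the proof of \cref{thm:forasp_to_ltl} and track temporal depth by induction. First note that the unary-score and unary-value normal-form preprocessing preserves attention depth (it only inserts attention operations at the original depth plus position-wise recombinations). Then, per operation: a future-masked $\rightmost$ (resp.\ past-masked $\leftmost$) operation passes through a single $\since$ (resp.\ $\until$), adding only $1$. Every other type -- past-masked $\rightmost$, future-masked $\leftmost$, and the two unmasked variants -- must first isolate the boundary best-scoring position with $\ltlop{rightmost}\phi_S = \phi_S \land \lnot(\ltlop{exists_>}\phi_S)$ (resp.\ $\ltlop{leftmost}$), which already contains one temporal operator, and then wrap it in $\ltlop{exists_>}$ / $\ltlop{exists_<}$ / $\ltlop{exists}$; this nests two temporal operators over the shared argument $\phi_S$ and hence adds $2$. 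An induction on attention depth then yields that any program of attention depth $\depth$ compiles to an $\LTL$ formula of temporal depth at most $2\depth$.

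Finally I would assemble the chain and choose the witness. By \citet{etessami2000until}, $\prob{STAIR}_{2m} \in \LTL_m \setminus \LTL_{m-1}$; taking $m = 2\depth+1$ gives $\prob{STAIR}_{4\depth+2} \in \LTL_{2\depth+1} \setminus \LTL_{2\depth}$. Running this language through the chain, $\prob{STAIR}_{4\depth+2} \in \LTL_{2\depth+1} \subseteq \textbf{MUHAT}_{2\depth+1}$, whereas $\prob{STAIR}_{4\depth+2} \notin \LTL_{2\depth} \supseteq \textbf{MUHAT}_\depth$, so $\textbf{MUHAT}_\depth \subsetneq \textbf{MUHAT}_{2\depth+1}$. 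I expect the main obstacle to be the bookkeeping in the asymmetric step: verifying that the $\ltlop{rightmost}$/$\ltlop{leftmost}$ subterm contributes exactly one extra layer of temporal depth, so that the bound is genuinely $2\depth$ and not better. This asymmetry -- that converting wrong-direction or unmasked attention back into $\LTL$ costs two temporal operators rather than one -- is precisely what forces the depth gap to be $2\depth+1$ rather than $\depth+1$.
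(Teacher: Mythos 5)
Your proposal is correct and follows essentially the same route as the paper's proof: the chain $\textbf{MUHAT}_\depth = \FORASP_\depth \subseteq \LTL_{2\depth} \subsetneq \LTL_{2\depth+1} \subseteq \FORASP_{2\depth+1} = \textbf{MUHAT}_{2\depth+1}$, with the factor of $2$ coming from the fact that wrong-direction and unmasked attention operations compile to two nested temporal operators (via $\ltlop{rightmost}\phi_S$ inside $\ltlop{exists}$) while same-direction masked attention costs only one. Your additional bookkeeping --- identifying the witness $\prob{STAIR}_{4\depth+2}$ explicitly and noting that the unmasked variants also cost two --- is consistent with, and slightly more explicit than, what the paper states.
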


\begin{proof}
    As above, $\textbf{MUHAT}_\depth = \FORASP_\depth$.
    However, in the proof of \cref{thm:forasp_to_ltl}, the simulations of leftmost future-masked and rightmost past-masked attention require two levels of nesting of $\since$ and $\until$, so it only shows that $\FORASP{}_\depth\subseteq \LTL{}_{2\depth}$. 
    As in the previous proof, $\LTL_{2\depth}\subsetneq\LTL_{2\depth+1}$ \citep{etessami2000until}. 
    Finally, by \cref{thm:ltl_to_forasp}, we again have $\LTL_{2\depth+1}\subseteq \FORASP_{2\depth+1}$.
    Using all these observations, we conclude that:
\[ \cdots \subsetneq \begin{array}[t]{@{}c@{}}
\LTL_{2\depth} \\
\raisebox{\depth}{\rotatebox{90}{$\subseteq$}} \\ 
\FORASP_{\depth} \\ 
\rotatebox{90}{$=$} \\ 
\textbf{MUHAT}_{\depth}
\end{array} 
\subsetneq 
\begin{array}[t]{@{}c@{}}
\LTL_{2\depth+1} \\
\raisebox{\depth}{\rotatebox{270}{$\subseteq$}} \\ 
\FORASP_{2\depth+1} \\ 
\rotatebox{90}{$=$} \\ 
\textbf{MUHAT}_{2\depth+1}
\end{array} \subsetneq \cdots \]
    Thus $\textbf{MUHAT}_\depth\subsetneq\textbf{MUHAT}_{2\depth+1}$.
\end{proof}

\end{toappendix}

\section{Limitations}
\label{sec:limitations}

This work focuses exclusively on \muhats{}. We discussed the rationale for hard attention in \cref{sec:transformer_variants}. These results do not apply to softmax-attention transformers, although they demonstrate what kinds of results one might hope to obtain for softmax-attention transformers. Nor do they apply to transformers with unmasked attention.

Finally, our restriction of position embeddings to have finite image, and in particular our restriction of sinusoidal position embeddings to have angles that are rational multiples of $\pi$, does not exactly match the standard definition.

\section*{Acknowledgements}

We would like to thank Peter Cholak, Anthony Widjaja Lin, Anand Pillay, and the anonymous reviewers, including the reviewers of a previous version of this paper, for their helpful comments.

\bibliographystyle{acl_natbib.bst}
\bibliography{references}

\begin{thebibliography}{35}
\providecommand{\natexlab}[1]{#1}

\bibitem[{Ba et~al.(2016)Ba, Kiros, and Hinton}]{ba+:2016}
Jimmy~Lei Ba, Jamie~Ryan Kiros, and Geoffrey~E. Hinton. 2016.
\newblock \href {https://arxiv.org/abs/1607.06450} {Layer normalization}.
\newblock In \emph{NIPS Deep Learning Symposium}.

\bibitem[{Barcel{\'o} et~al.(2024)Barcel{\'o}, Kozachinskiy, Lin, and
  Podolskii}]{barcelo-etal-2023}
Pablo Barcel{\'o}, Alexander Kozachinskiy, Anthony~Widjaja Lin, and Vladimir
  Podolskii. 2024.
\newblock \href {https://openreview.net/forum?id=gbrHZq07mq} {Logical languages
  accepted by transformer encoders with hard attention}.
\newblock In \emph{Proceedings of the 12th International Conference on Learning
  Representations (ICLR)}.

\bibitem[{Barrington et~al.(1992)Barrington, Compton, Straubing, and
  Thérien}]{barrington-etal-1992-rl-nc1}
David A.~Mix Barrington, Kevin Compton, Howard Straubing, and Denis Thérien.
  1992.
\newblock \href {https://doi.org/10.1016/0022-0000(92)90014-A} {Regular
  languages in $\mathit{NC^1}$}.
\newblock \emph{Journal of Computer and System Sciences}, 44(3):478--499.

\bibitem[{Barrington et~al.(2005)Barrington, Immerman, Lautemann, Schweikardt,
  and Thérien}]{mixbarrington-etal-2005}
David A.~Mix Barrington, Neil Immerman, Clemens Lautemann, Nicole Schweikardt,
  and Denis Thérien. 2005.
\newblock \href {https://doi.org/10.1016/j.jcss.2004.07.004} {First-order
  expressibility of languages with neutral letters or: The {C}rane {B}each
  conjecture}.
\newblock \emph{Journal of Computer and System Sciences}, 70(2):101--127.

\bibitem[{Bhattamishra et~al.(2020)Bhattamishra, Ahuja, and
  Goyal}]{bhattamishra-etal-2020-ability}
Satwik Bhattamishra, Kabir Ahuja, and Navin Goyal. 2020.
\newblock \href {https://doi.org/10.18653/v1/2020.emnlp-main.576} {On the
  ability and limitations of {T}ransformers to recognize formal languages}.
\newblock In \emph{Proceedings of the Conference on Empirical Methods in
  Natural Language Processing (EMNLP)}, pages 7096--7116.

\bibitem[{Brown et~al.(2020)Brown, Mann, Ryder, Subbiah, Kaplan, Dhariwal,
  Neelakantan, Shyam, Sastry, Askell, Agarwal, Herbert-Voss, Krueger, Henighan,
  Child, Ramesh, Ziegler, Wu, Winter, Hesse, Chen, Sigler, Litwin, Gray, Chess,
  Clark, Berner, McCandlish, Radford, Sutskever, and Amodei}]{gpt3}
Tom Brown, Benjamin Mann, Nick Ryder, Melanie Subbiah, Jared~D Kaplan, Prafulla
  Dhariwal, Arvind Neelakantan, Pranav Shyam, Girish Sastry, Amanda Askell,
  Sandhini Agarwal, Ariel Herbert-Voss, Gretchen Krueger, Tom Henighan, Rewon
  Child, Aditya Ramesh, Daniel Ziegler, Jeffrey Wu, Clemens Winter, Chris
  Hesse, Mark Chen, Eric Sigler, Mateusz Litwin, Scott Gray, Benjamin Chess,
  Jack Clark, Christopher Berner, Sam McCandlish, Alec Radford, Ilya Sutskever,
  and Dario Amodei. 2020.
\newblock \href
  {https://proceedings.neurips.cc/paper_files/paper/2020/file/1457c0d6bfcb4967418bfb8ac142f64a-Paper.pdf}
  {Language models are few-shot learners}.
\newblock In \emph{Advances in Neural Information Processing Systems 33
  (NeurIPS)}, pages 1877--1901.

\bibitem[{Chiang and Cholak(2022)}]{chiang-cholak-2022-overcoming}
David Chiang and Peter Cholak. 2022.
\newblock \href {https://doi.org/10.18653/v1/2022.acl-long.527} {Overcoming a
  theoretical limitation of self-attention}.
\newblock In \emph{Proceedings of the 60th Annual Meeting of the Association
  for Computational Linguistics (ACL)}, pages 7654--7664.

\bibitem[{Chiang et~al.(2023)Chiang, Cholak, and Pillay}]{chiang+:icml2023}
David Chiang, Peter Cholak, and Anand Pillay. 2023.
\newblock \href {https://proceedings.mlr.press/v202/chiang23a.html} {Tighter
  bounds on the expressivity of transformer encoders}.
\newblock In \emph{Proceedings of the 40th International Conference on Machine
  Learning (ICML)}, pages 5544--5562.

\bibitem[{Devlin et~al.(2019)Devlin, Chang, Lee, and
  Toutanova}]{devlin2019bert}
Jacob Devlin, Ming-Wei Chang, Kenton Lee, and Kristina Toutanova. 2019.
\newblock \href {https://doi.org/10.18653/v1/N19-1423} {{BERT}: Pre-training of
  deep bidirectional {T}ransformers for language understanding}.
\newblock In \emph{Proceedings of the Conference of the North {A}merican
  Chapter of the Association for Computational Linguistics: Human Language
  Technologies (NAACL HLT)}, pages 4171--4186.

\bibitem[{Ebrahimi et~al.(2020)Ebrahimi, Gelda, and
  Zhang}]{ebrahimi-etal-2020-self}
Javid Ebrahimi, Dhruv Gelda, and Wei Zhang. 2020.
\newblock \href {https://doi.org/10.18653/v1/2020.findings-emnlp.384} {How can
  self-attention networks recognize {D}yck-n languages?}
\newblock In \emph{Findings of the Association for Computational Linguistics:
  EMNLP 2020}, pages 4301--4306.

\bibitem[{Etessami and Wilke(2000)}]{etessami2000until}
Kousha Etessami and Thomas Wilke. 2000.
\newblock \href {https://doi.org/10.1006/inco.1999.2846} {An until hierarchy
  and other applications of an {E}hrenfeucht--{F}ra{\"\i}ss{\'e} game for
  temporal logic}.
\newblock \emph{Information and Computation}, 160(1-2):88--108.

\bibitem[{Friedman et~al.(2023)Friedman, Wettig, and
  Chen}]{friedman2023learning}
Dan Friedman, Alexander Wettig, and Danqi Chen. 2023.
\newblock \href
  {https://papers.nips.cc/paper_files/paper/2023/hash/995f693b73050f90977ed2828202645c-Abstract-Conference.html}
  {Learning {T}ransformer programs}.
\newblock In \emph{Advances in Neural Information Processing Systems 36
  (NeurIPS)}.

\bibitem[{Furst et~al.(1984)Furst, Saxe, and Sipser}]{furst+:1984}
Merrick Furst, James~B. Saxe, and Michael Sipser. 1984.
\newblock \href {https://doi.org/10.1007/BF01744431} {Parity, circuits, and the
  polynomial-time hierarchy}.
\newblock \emph{Mathematical Systems Theory}, 17:13--27.

\bibitem[{Gabbay et~al.(1980)Gabbay, Pnueli, Shelah, and Stavi}]{gabbay+:1980}
Dov Gabbay, Amir Pnueli, Saharon Shelah, and Jonathan Stavi. 1980.
\newblock \href {https://doi.org/10.1145/567446.567462} {On the temporal
  analysis of fairness}.
\newblock In \emph{Proceedings of the 7th ACM SIGPLAN-SIGACT Symposium on
  Principles of Programming Languages (POPL)}, pages 163--173.

\bibitem[{Gupta et~al.(2021)Gupta, Dar, Goodman, Ciprut, and
  Berant}]{gupta2021memory}
Ankit Gupta, Guy Dar, Shaya Goodman, David Ciprut, and Jonathan Berant. 2021.
\newblock \href {https://doi.org/10.18653/v1/2021.sustainlp-1.5}
  {Memory-efficient transformers via top-k attention}.
\newblock In \emph{Proceedings of the Second Workshop on Simple and Efficient
  Natural Language Processing}, pages 39--52.

\bibitem[{Hahn(2020)}]{hahn-2020-theoretical}
Michael Hahn. 2020.
\newblock \href {https://doi.org/10.1162/tacl_a_00306} {Theoretical limitations
  of self-attention in neural sequence models}.
\newblock \emph{Transactions of the Association for Computational Linguistics},
  8:156--171.

\bibitem[{Hao et~al.(2022)Hao, Angluin, and Frank}]{hao-etal-2022-formal}
Yiding Hao, Dana Angluin, and Robert Frank. 2022.
\newblock \href {https://doi.org/10.1162/tacl_a_00490} {Formal language
  recognition by hard attention {T}ransformers: Perspectives from circuit
  complexity}.
\newblock \emph{Transactions of the Association for Computational Linguistics},
  10:800--810.

\bibitem[{Kamp(1968)}]{kamp:1968}
Johan Anthony~Willem Kamp. 1968.
\newblock \href {https://www.proquest.com/docview/302320357} {\emph{Tense Logic
  and the Theory of Linear Order}}.
\newblock Ph.D. thesis, University of California, Los Angeles.

\bibitem[{Kinley(2020)}]{kinley2019discrete}
Jambay Kinley. 2020.
\newblock \href {https://dash.harvard.edu/handle/1/37364732} {\emph{Two-Stream
  Transformer Architecture With Discrete Attention for Better Interpretrability
  and Separation of Model Concerns}}.
\newblock Bachelor's thesis, Harvard College.

\bibitem[{Maler(2010)}]{Maler2010}
Oded Maler. 2010.
\newblock \href {https://doi.org/10.1007/978-3-642-13754-9_12} {On the
  {K}rohn-{R}hodes cascaded decomposition theorem}.
\newblock In \emph{Time for Verification: Essays in Memory of {A}mir {P}nueli},
  pages 260--278. Springer.

\bibitem[{McNaughton and Papert(1971)}]{mcnaughton1971counter}
Robert McNaughton and Seymour Papert. 1971.
\newblock \href {https://archive.org/embed/CounterFre_00_McNa}
  {\emph{Counter-Free Automata}}.
\newblock Number~65 in M.I.T. Press Research Monographs. The M.I.T. Press.

\bibitem[{Merrill et~al.(2021)Merrill, Ramanujan, Goldberg, Schwartz, and
  Smith}]{merrill-etal-2021-effects}
William Merrill, Vivek Ramanujan, Yoav Goldberg, Roy Schwartz, and Noah~A.
  Smith. 2021.
\newblock \href {https://doi.org/10.18653/v1/2021.emnlp-main.133} {Effects of
  parameter norm growth during transformer training: Inductive bias from
  gradient descent}.
\newblock In \emph{Proceedings of the Conference on Empirical Methods in
  Natural Language Processing (EMNLP)}, pages 1766--1781.

\bibitem[{Merrill and Sabharwal(2024)}]{merrill+sabharwal-2023-chain}
William Merrill and Ashish Sabharwal. 2024.
\newblock \href {https://openreview.net/forum?id=NjNGlPh8Wh} {The expressive
  power of transformers with chain of thought}.
\newblock In \emph{Proceedings of the 12th International Conference on Learning
  Representations (ICLR)}.

\bibitem[{Olsson et~al.(2022)Olsson, Elhage, Nanda, Joseph, DasSarma, Henighan,
  Mann, Askell, Bai, Chen, Conerly, Drain, Ganguli, Hatfield-Dodds, Hernandez,
  Johnston, Jones, Kernion, Lovitt, Ndousse, Amodei, Brown, Clark, Kaplan,
  McCandlish, and Olah}]{olsson-2022}
Catherine Olsson, Nelson Elhage, Neel Nanda, Nicholas Joseph, Nova DasSarma,
  Tom Henighan, Ben Mann, Amanda Askell, Yuntao Bai, Anna Chen, Tom Conerly,
  Dawn Drain, Deep Ganguli, Zac Hatfield-Dodds, Danny Hernandez, Scott
  Johnston, Andy Jones, Jackson Kernion, Liane Lovitt, Kamal Ndousse, Dario
  Amodei, Tom Brown, Jack Clark, Jared Kaplan, Sam McCandlish, and Chris Olah.
  2022.
\newblock \href {https://arxiv.org/abs/2209.11895} {In-context learning and
  induction heads}.
\newblock {arXiv}:2209.11895.

\bibitem[{Peled and Wilke(1997)}]{peled1997stutter}
Doron Peled and Thomas Wilke. 1997.
\newblock \href {https://doi.org/10.1016/S0020-0190(97)00133-6}
  {Stutter-invariant temporal properties are expressible without the next-time
  operator}.
\newblock \emph{Information Processing Letters}, 63(5):243--246.

\bibitem[{Peng(2023)}]{peng2023personal}
Binghui Peng. 2023.
\newblock Personal communication.

\bibitem[{P{\'{e}}rez et~al.(2021)P{\'{e}}rez, Barcel{\'{o}}, and
  Marinkovic}]{perez-etal-2021-turing}
Jorge P{\'{e}}rez, Pablo Barcel{\'{o}}, and Javier Marinkovic. 2021.
\newblock \href {http://jmlr.org/papers/v22/20-302.html} {Attention is
  {T}uring-complete}.
\newblock \emph{Journal of Machine Learning Research}, 22:75:1--75:35.

\bibitem[{Schützenberger(1965)}]{schutzenberger:1965}
M.~P. Schützenberger. 1965.
\newblock \href {https://doi.org/10.1016/S0019-9958(65)90108-7} {On finite
  monoids having only trivial subgroups}.
\newblock \emph{Information and Control}, 8(2):190--194.

\bibitem[{Strobl et~al.(2024)Strobl, Angluin, Chiang, Rawski, and
  Sabharwal}]{strobl-etal-2024-transducers}
Lena Strobl, Dana Angluin, David Chiang, Jonathan Rawski, and Ashish Sabharwal.
  2024.
\newblock \href {https://arxiv.org/abs/2404.02040} {Transformers as
  transducers}.
\newblock {arXiv}:2404.02040.

\bibitem[{Vaswani et~al.(2017)Vaswani, Shazeer, Parmar, Uszkoreit, Jones,
  Gomez, Kaiser, and Polosukhin}]{vaswani-etal-2017-attention}
Ashish Vaswani, Noam Shazeer, Niki Parmar, Jakob Uszkoreit, Llion Jones,
  Aidan~N. Gomez, Lukasz Kaiser, and Illia Polosukhin. 2017.
\newblock \href
  {https://proceedings.neurips.cc/paper/2017/hash/3f5ee243547dee91fbd053c1c4a845aa-Abstract.html}
  {Attention is all you need}.
\newblock In \emph{Advances in Neural Information Processing Systems 30
  ({NIPS})}.

\bibitem[{Weiss et~al.(2021)Weiss, Goldberg, and Yahav}]{weiss-etal-2021-rasp}
Gail Weiss, Yoav Goldberg, and Eran Yahav. 2021.
\newblock \href {https://proceedings.mlr.press/v139/weiss21a.html} {Thinking
  like {T}ransformers}.
\newblock In \emph{Proceedings of the 38th International Conference on Machine
  Learning (ICML)}, pages 11080--11090.

\bibitem[{Xu et~al.(2021)Xu, Liu, van Genabith, and
  Xiong}]{xu-etal-2021-learning-hard-retrieval}
Hongfei Xu, Qiuhui Liu, Josef van Genabith, and Deyi Xiong. 2021.
\newblock \href {https://doi.org/10.18653/v1/2021.findings-emnlp.67} {Learning
  hard retrieval decoder attention for {T}ransformers}.
\newblock In \emph{Findings of the Association for Computational Linguistics:
  EMNLP}, pages 779--785.

\bibitem[{Yang and Chiang(2024)}]{yang2024counting}
Andy Yang and David Chiang. 2024.
\newblock \href {https://openreview.net/forum?id=FmhPg4UJ9K} {Counting like
  transformers: Compiling temporal counting logic into softmax transformers}.
\newblock In \emph{Proceedings of the Conference on Language Modeling (CoLM)}.

\bibitem[{Yao et~al.(2021)Yao, Peng, Papadimitriou, and
  Narasimhan}]{yao-etal-2021-self}
Shunyu Yao, Binghui Peng, Christos Papadimitriou, and Karthik Narasimhan. 2021.
\newblock \href {https://doi.org/10.18653/v1/2021.acl-long.292} {Self-attention
  networks can process bounded hierarchical languages}.
\newblock In \emph{Proceedings of the 59th Annual Meeting of the Association
  for Computational Linguistics and the 11th International Joint Conference on
  Natural Language Processing (ACL-IJCNLP)}, pages 3770--3785.

\bibitem[{Zhou et~al.(2024)Zhou, Bradley, Littwin, Razin, Saremi, Susskind,
  Bengio, and Nakkiran}]{zhou2023algorithms}
Hattie Zhou, Arwen Bradley, Etai Littwin, Noam Razin, Omid Saremi, Josh
  Susskind, Samy Bengio, and Preetum Nakkiran. 2024.
\newblock \href {https://openreview.net/forum?id=AssIuHnmHX} {What algorithms
  can {T}ransformers learn? {A} study in length generalization}.
\newblock In \emph{Proceedings of the 12th International Conference on Learning
  Representations (ICLR)}.

\end{thebibliography}

\end{document}